%% Template Paper ISIT 2013
%%
%% October 2012, Stefan M. Moser
%% based on various earlier templates
%%
%% Please note that your paper must be no more than five pages in
%% the IEEEtran conference style as presented here (including figures,
%% references, etc.!)

\documentclass[peerreview,a4paper,12pt]{IEEEtran}

%% Conference papers do not typically use \thanks and this command
%% is locked out in conference mode. If really needed, such as for
%% the acknowledgment of grants, uncomment the following:
%\IEEEoverridecommandlockouts

\usepackage{amsthm}
\usepackage{amsmath}
\usepackage{dsfont}
\usepackage{amsfonts,amssymb,amsmath}
\usepackage{graphicx}
\usepackage{epsfig}
\usepackage[applemac]{inputenc}
\usepackage{latexsym}

\newtheorem{mydef}{Definition}
\newtheorem{mycor}{Corollary}
\newtheorem{myprop}{Proposition}
\newtheorem{mythe}{Theorem}
\newtheorem{mylem}{Lemma}

\DeclareMathOperator*{\DMC}{DMC}
\DeclareMathOperator*{\Proj}{Proj}

\DeclareMathOperator*{\conv}{co}

\DeclareMathOperator*{\CE}{CE}

\DeclareMathOperator*{\irank}{irank}

\DeclareMathOperator*{\opt}{opt}
\DeclareMathOperator*{\ach}{ach}

\begin{document}

\sloppy

%% Paper Title
%% You can use linebreaks \\ within to get better formatting as
%% desired. 
\title{On the Input-Degradedness and Input-Equivalence Between Channels}

%% Author names and affiliations:
%%
%% Avoiding spaces at the end of the author lines is not a problem with
%% conference papers because we don't use \thanks or \IEEEmembership.
%%
%% For several authors with only one affiliation:
%%
% \author{
%   \IEEEauthorblockN{Hui-Ting Chang and Stefan M.~Moser}
%   \IEEEauthorblockA{Department of Electrical and Computer Engineering\\
%     National Chiao Tung University (NCTU)\\
%     Hsinchu, Taiwan\\
%     Email: \{email-of-hui-ting,email-of-stefan\}@ieee.org} 
% }
%%
%% For up to three affiliations:
%%

%\author{
%Rajai Nasser\\
%EPFL, Lausanne, Switzerland\\
%rajai.nasser@epfl.ch
%
%\and
%
%Joseph M. Renes\\
%ETH Zurich, Switzerland\\
%renes@itp.phys.ethz.ch
%}

\author{
Rajai Nasser\\
EPFL, Lausanne, Switzerland\\
rajai.nasser@epfl.ch
%\thanks{This paper was presented in part at the IEEE International Symposium on Information Theory, Hong Kong, June 2015.}
}

%%
%% For over three affiliations, or if they all won't fit within the width
%% of the page, use this alternative format:
%%
% \author{
%   \IEEEauthorblockN{
%     Michael Shell\IEEEauthorrefmark{1},
%     Homer Simpson\IEEEauthorrefmark{2},
%     James Kirk\IEEEauthorrefmark{3}, 
%     Montgomery Scott\IEEEauthorrefmark{3} and
%     Eldon Tyrell\IEEEauthorrefmark{4}}
%   \IEEEauthorblockA{
%     \IEEEauthorrefmark{1}School of Electrical and Computer Engineering\\
%     Georgia Institute of Technology, Atlanta, Georgia 30332--0250\\ 
%     Email: see http://www.michaelshell.org/contact.html}
%   \IEEEauthorblockA{
%     \IEEEauthorrefmark{2}Twentieth Century Fox, Springfield, USA\\
%     Email: homer@thesimpsons.com}
%   \IEEEauthorblockA{
%     \IEEEauthorrefmark{3}Starfleet Academy, San Francisco, California 96678-2391\\
%     Telephone: (800) 555--1212, Fax: (888) 555--1212}
%   \IEEEauthorblockA{
%     \IEEEauthorrefmark{4}Tyrell Inc., 123 Replicant Street, Los Angeles, California 90210--4321}
% }

%% Use for special paper notices
%\IEEEspecialpapernotice{(Invited Paper)}

%% To balance the two columns, you should reduce the text-height of
%% the last page using the following command:
%%%%%%%%%%%%%%%%%%%%%%%%%%%%%%%%%%%%%%%%%%%%%%%%%%%%%%%%%%%%%%%%%%%%%
%\addtolength{\textheight}{-9.35cm}
%%%%%%%%%%%%%%%%%%%%%%%%%%%%%%%%%%%%%%%%%%%%%%%%%%%%%%%%%%%%%%%%%%%%%
%% with an appropriate value. This command must be place on the second
%% last page, i.e., for a one-page abstract here, for a two-page
%% abstract right after the \maketitle command.

%% Create the title:
\maketitle

%% Abstract: 
%% For the final version of the accepted paper, please make sure you
%% remove the comment "THIS PAPER IS ELIGIBLE FOR THE STUDENT PAPER
%% AWARD."
%%
\begin{abstract}
A channel $W$ is said to be input-degraded from another channel $W'$ if $W$ can be simulated from $W'$ by randomization at the input. We provide a necessary and sufficient condition for a channel to be input-degraded from another one. We show that any decoder that is good for $W'$ is also good for $W$. We provide two characterizations for input-degradedness, one of which is similar to the Blackwell-Sherman-Stein theorem. We say that two channels are input-equivalent if they are input-degraded from each other. We study the topologies that can be constructed on the space of input-equivalent channels, and we investigate their properties. Moreover, we study the continuity of several channel parameters and operations under these topologies.
\end{abstract}

\section{Introduction}

The ordering of communication channels was first introduced by Shannon in \cite{ShannonDegrad}. A channel $W'$ is said to contain another channel $W$ if $W$ can be simulated from $W'$ by randomization at the input and the output using a shared randomness between the transmitter and the receiver. Shannon showed that the existence of an $(n,M,\epsilon)$ code for $W$ implies the existence of an $(n,M,\epsilon)$ code for $W'$.

Another ordering that has been well studied is the degradedness between channels. A channel $W$ is said to be degraded from another channel $W'$ if $W$ can be simulated from $W'$ by randomization at the output, or more precisely, if $W$ can be obtained from $W'$ by composing it with another channel. It is easy to see that degradedness is a special case of Shannon's ordering. One can trace the roots of the notion of degradedness to the seminal work of Blackwell in the 1950's about comparing statistical experiments \cite{blackwell1951}. Note that in the Shannon's ordering, the input and output alphabets need not be the same, whereas in the degradedness definition, we have to assume that $W$ and $W'$ share the same input alphabet $\mathcal{X}$ but they can have different output alphabets.

It is well known that if $W$ is degraded from $W'$, then for any fixed code $\mathcal{C}\subset\mathcal{X}^n$, the probability of error of the ML decoder for $\mathcal{C}$ when it is used for $W'$ is at least as good as the probability of error of the ML decoder for $\mathcal{C}$ when it is used for $W$.

In this paper, we introduce another special case of the Shannon ordering that we call \emph{input-degradedness}. A channel $W$ is said to be input-degraded from another channel $W'$ if $W$ can be simulated from $W'$ by randomization at the input. Note that $W$ and $W'$ must have the same output alphabet, but they can have different input alphabets. We say that two channels are input-equivalent if they are input-degraded from each other.

One motivation to study the input-degradedness ordering is the following: let $W$ be a fixed channel with input alphabet $\mathcal{X}$ and output alphabet $\mathcal{Y}$. Assume that after some effort, an engineer came up with a good encoder/decoder pair for $W$ in the sense that the probability of error is small. Assume also that the designed decoder is particularly desirable for some reason (e.g., it has a low computational complexity) so that we would like to use it for other channels if possible. What are the channels $W'$ for which the designed decoder also performs well in the sense that there exists a code having a low probability of error under the same decoder? We will show that a sufficient condition for the decoder to perform well for $W'$ is the input-degradedness of $W$ with respect to $W'$.

In \cite{RajDMCTop} and \cite{RajContTop}, we constructed topologies for the space of equivalent channels and studied the continuity of various channel parameters and operations under these topologies. In this paper, we show that many of the results in \cite{RajDMCTop} and \cite{RajContTop} can be replicated (with some variation) for the space of input-equivalent channels.

In Section II, we introduce the preliminaries for this paper. In Section III, we introduce and study the input-degradedness ordering. Various operational implications and characterizations of input-degradedness are provided in Section IV. The quotient topology of the space of input-equivalent channels with fixed input and output alphabets is studied in Section V. The space of input-equivalent channels with fixed output alphabet and arbitrary but finite input alphabet is defined in Section VI. A topology on this space is said to be natural if it induces the quotient topology on the subspaces of input-equivalent channels with fixed input alphabet. In Section VI, we investigate the properties of natural topologies. The finest natural topology, which we call the strong topology, is studied in Section VII. The similarity metric on the space of input-equivalent channels is introduced in Section VIII. We study the continuity of various channel parameters and operations under the strong and similarity topologies in Section IX. Finally, we show that the Borel $\sigma$-algebra is the same for all Hausdorff natural topologies.

\section{Preliminaries}

We assume that the reader is familiar with the basic concepts of general topology. The main concepts and theorems that we need can be found in the preliminaries section of \cite{RajDMCTop}.

\subsection{Measure theoretic notations}

The set of probability measures on a measurable space $(M,\Sigma)$ is denoted as $\mathcal{P}(M,\Sigma)$. For every $P_1,P_2\in\mathcal{P}(M,\Sigma)$, the \emph{total variation distance} between $P_1$ and $P_2$ is defined as:
$$\|P_1-P_2\|_{TV}=\sup_{A\in\Sigma}|P_1(A)-P_2(A)|.$$

Let $P$ be a probability measure on $(M,\Sigma)$, and let $f:M\rightarrow M'$ be a measurable mapping from $(M,\Sigma)$ to another measurable space $(M',\Sigma')$. The \emph{push-forward probability measure of $P$ by $f$} is the probability measure $f_{\#}P$ on $(M',\Sigma')$ defined as $(f_{\#}P)(A')=P(f^{-1}(A'))$ for every $A'\in\Sigma'$. If $\mathcal{A}$ is a subset of $\mathcal{P}(M,\Sigma)$, we define its push-forward by $f$ as $f_{\#}(\mathcal{A})=\{f_{\#}P:\;P\in \mathcal{A}\}$.

We denote the product of two measurable spaces $(M_1,\Sigma_1)$ and $(M_2,\Sigma_2)$ as $(M_1\times M_2,\Sigma_1\otimes\Sigma_2)$. If $P_1\in\mathcal{P}(M_1,\Sigma_1)$ and $P_2\in\mathcal{P}(M_2,\Sigma_2)$, we denote the product of $P_1$ and $P_2$ as $P_1\times P_2$. Let $\mathcal{A}_1$ and $\mathcal{A}_2$ be two subsets of $\mathcal{P}(M_1,\Sigma_1)$ and $\mathcal{P}(M_2,\Sigma_2)$ respectively. We define the tensor product of $\mathcal{A}_1$ and $\mathcal{A}_2$ as follows:
$$\mathcal{A}_1\otimes\mathcal{A}_2=\{P_1\times P_2:\; P_1\in\mathcal{A}_1,\;P_2\in\mathcal{A}_2\}\subset \mathcal{P}(M_1\times M_2,\Sigma_1\otimes\Sigma_2).$$

If $\mathcal{X}$ is a finite set, we denote the set of probability distributions on $\mathcal{X}$ as $\Delta_{\mathcal{X}}$. We always endow $\Delta_{\mathcal{X}}$ with the total variation distance and its induced topology.

\subsection{The space of channels from $\mathcal{X}$ to $\mathcal{Y}$}

Let $\DMC_{\mathcal{X},\mathcal{Y}}$ be the set of all channels having $\mathcal{X}$ as input alphabet and $\mathcal{Y}$ as output alphabet. For every $W,W'\in\DMC_{\mathcal{X},\mathcal{Y}}$, define the distance between $W$ and $W'$ as:
$$d_{\mathcal{X},\mathcal{Y}}(W,W')=\frac{1}{2} \max_{x\in\mathcal{X}}\sum_{y\in\mathcal{Y}}|W'(y|x)-W(y|x)|.$$

Throughout this paper, we always associate the space $\DMC_{\mathcal{X},\mathcal{Y}}$ with the metric distance $d_{\mathcal{X},\mathcal{Y}}$ and the metric topology $\mathcal{T}_{\mathcal{X},\mathcal{Y}}$ induced by it. It is easy to see that $\mathcal{T}_{\mathcal{X},\mathcal{Y}}$ is the same as the topology inherited from the Euclidean topology of $\mathbb{R}^{\mathcal{X}\times\mathcal{Y}}$ by relativization. It is also easy to see that the metric space $\DMC_{\mathcal{X},\mathcal{Y}}$ is compact and path-connected (see \cite{RajDMCTop}).

For every $W\in\DMC_{\mathcal{X},\mathcal{Y}}$ and every $V\in\DMC_{\mathcal{Y},\mathcal{Z}}$, define the composition $V\circ W\in\DMC_{\mathcal{X},\mathcal{Z}}$ as
$$(V\circ W)(z|x)=\sum_{y\in\mathcal{Y}}V(z|y)W(y|x),\;\;\forall x\in\mathcal{X},\;\forall z\in\mathcal{Z}.$$

For every mapping $f:\mathcal{X}\rightarrow\mathcal{Y}$, define the deterministic channel $D_f\in\DMC_{\mathcal{X},\mathcal{Y}}$ as $$D_f(y|x)=\begin{cases}1\quad&\text{if}\;y=f(x),\\0\quad&\text{otherwise}.\end{cases}$$
It is easy to see that if $f:\mathcal{X}\rightarrow \mathcal{Y}$ and $g:\mathcal{Y}\rightarrow \mathcal{Z}$, then $D_g\circ D_f=D_{g\circ f}$.

\subsection{Convex-extreme points}
Let $\mathcal{X}$ be a finite set. For every $A\subset \Delta_{\mathcal{X}}$, let $\conv(A)$ be the convex hull of $A$. We say that $p\in A$ is \emph{convex-extreme} if it is an extreme point of $\conv(A)$, i.e., for every $p_1,\ldots,p_n\in\conv(A)$ and every $\lambda_1,\ldots,\lambda_n>0$ satisfying $\displaystyle\sum_{i=1}^n \lambda_i=1$ and $\displaystyle\sum_{i=1}^n \lambda_ip_i=p$, we have $p_1=\ldots=p_n=p$. It is easy to see that if $A$ is finite, then the convex-extreme points of $A$ coincide with the extreme points of $\conv(A)$. We denote the set of convex-extreme points of $A$ as $\CE(A)$.

\subsection{The Hausdorff metric}
Let $(M,d)$ be a metric space. Let $\mathcal{K}(M)$ be the set of compact subsets of $M$. The Hausdorff metric on $\mathcal{K}(M)$ is defined as:
\begin{align*}
d_H(K_1,K_2)&=\max\left\{\sup_{x_1\in K_1} d(x_1,K_2), \sup_{x_2\in K_2} d(x_2,K_1) \right\} \\
&=\max\left\{\sup_{x_1\in K_1} \inf_{x_2\in K_2} d(x_1,x_2), \sup_{x_2\in K_2} \inf_{x_1\in K_1} d(x_2,x_1) \right\}.
\end{align*}

\subsection{Quotient topology}

Let $(T,\mathcal{U})$ be a topological space and let $R$ be an equivalence relation on $T$. The \emph{quotient topology} on $T/R$ is the finest topology that makes the projection mapping $\Proj_R$ onto the equivalence classes continuous. It is given by
$$\mathcal{U}/R=\left\{\hat{U}\subset T/R:\;\textstyle\Proj_R^{-1}(\hat{U})\in \mathcal{U}\right\}.$$

\begin{mylem}
\label{lemQuotientFunction}
Let $f:T\rightarrow S$ be a continuous mapping from $(T,\mathcal{U})$ to $(S,\mathcal{V})$. If $f(x)=f(x')$ for every $x,x'\in T$ satisfying $x R x'$, then we can define a \emph{transcendent mapping} $f:T/R\rightarrow S$ such that $f(\hat{x})=f(x')$ for any $x'\in\hat{x}$. $f$ is well defined on $T/R$ . Moreover, $f$ is a continuous mapping from $(T/R,\mathcal{U}/R)$ to $(S,\mathcal{V})$.
\end{mylem}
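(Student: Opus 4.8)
The plan is to establish the two claims separately: well-definedness of the transcendent map, which is a direct consequence of the hypothesis that $f$ is constant on $R$-classes, and then its continuity, which falls out immediately from the defining property of the quotient topology. Throughout I will write $\bar{f}$ for the transcendent map (the statement denotes it again by $f$) in order to keep the two maps notationally distinct.

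First I would verify well-definedness. The assumption $f(x)=f(x')$ for all $x R x'$ says precisely that $f$ is constant on each equivalence class. Thus for a class $\hat{x}\in T/R$ every representative $x'\in\hat{x}$ yields the same value $f(x')$, and setting $\bar{f}(\hat{x})$ equal to this common value defines $\bar{f}$ unambiguously, independently of the chosen representative. By construction one then has $\bar{f}\circ\Proj_R=f$, since for each $x\in T$, $\bar{f}(\Proj_R(x))=\bar{f}(\hat{x})=f(x)$. This factorization identity is the only ingredient needed for the continuity argument.

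For continuity I would work directly from the definition $\mathcal{U}/R=\{\hat{U}\subset T/R:\ \Proj_R^{-1}(\hat{U})\in\mathcal{U}\}$. Fix an open set $V\in\mathcal{V}$. Using the factorization,
\[
\Proj_R^{-1}\bigl(\bar{f}^{-1}(V)\bigr)=\bigl(\bar{f}\circ\Proj_R\bigr)^{-1}(V)=f^{-1}(V),
\]
and the right-hand side is open in $T$ because $f$ is continuous. Hence $\Proj_R^{-1}\bigl(\bar{f}^{-1}(V)\bigr)\in\mathcal{U}$, which by the displayed definition of $\mathcal{U}/R$ means exactly that $\bar{f}^{-1}(V)\in\mathcal{U}/R$. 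Since the preimage under $\bar{f}$ of every open set is open, $\bar{f}$ is continuous.

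I do not anticipate a genuine obstacle here: this is the familiar universal property of quotient maps, and the quotient topology is built precisely so that the preimage computation above goes through. The only point requiring care is the bookkeeping with the overloaded symbol $f$; carrying the factorization identity $\bar{f}\circ\Proj_R=f$ explicitly throughout removes any ambiguity in the continuity step.
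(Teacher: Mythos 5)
Your proof is correct: well-definedness follows from constancy of $f$ on $R$-classes, and continuity follows from the factorization $\bar{f}\circ\Proj_R=f$ together with the definition of $\mathcal{U}/R$, which is exactly the standard universal-property argument. The paper itself states this lemma without proof (deferring to its preliminaries reference, as it is a classical fact about quotient topologies), and your argument is precisely the canonical one, so there is nothing to add or correct.
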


Let $(T,\mathcal{U})$ and $(S,\mathcal{V})$ be two topological spaces and let $R$ be an equivalence relation on $T$. Consider the equivalence relation $R'$ on $T\times S$ defined as $(x_1,y_1) R' (x_2,y_2)$ if and only if $x_1 R x_2$ and $y_1=y_2$. A natural question to ask is whether the canonical bijection between $\big((T/R)\times S,(\mathcal{U}/R)\otimes \mathcal{V} \big)$ and $\big((T\times S)/R',(\mathcal{U}\otimes \mathcal{V})/R' \big)$ is a homeomorphism. It turns out that this is not the case in general. The following theorem, which is widely used in algebraic topology, provides a sufficient condition:

\begin{mythe}
\label{theQuotientProd}
\cite{Engelking}
If $(S,\mathcal{V})$ is locally compact and Hausdorff, then the canonical bijection between $\big((T/R)\times S,(\mathcal{U}/R)\otimes \mathcal{V} \big)$ and $\big((T\times S)/R',(\mathcal{U}\otimes \mathcal{V})/R' \big)$ is a homeomorphism.
\end{mythe}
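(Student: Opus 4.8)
The plan is to prove the result by showing that the product map $\pi = \Proj_R \times \mathrm{id}_S : T\times S \to (T/R)\times S$ is itself a quotient map; the claimed homeomorphism follows at once. First I would note that one direction of the canonical bijection is automatically continuous: since $\Proj_R$ is continuous, so is $\pi$, and $\pi$ is constant on each $R'$-equivalence class, so by Lemma \ref{lemQuotientFunction} it descends to a continuous bijection from $\big((T\times S)/R',(\mathcal{U}\otimes\mathcal{V})/R'\big)$ onto $\big((T/R)\times S,(\mathcal{U}/R)\otimes\mathcal{V}\big)$ (the classes of $R'$ are exactly the sets of the form $\{x : xRx_0\}\times\{s_0\}$, which is why this is a bijection). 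The substance of the theorem is that the inverse is continuous too, and this is equivalent to asserting that $\pi$ is a quotient map, i.e. that $\pi(V)$ is open whenever $V\subseteq T\times S$ is open and \emph{saturated}, meaning $V=\pi^{-1}(\pi(V))$, equivalently that $(x,s)\in V$ and $xRx'$ force $(x',s)\in V$.

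Next I would fix such a saturated open $V$ together with a point $(\Proj_R(x_0),s_0)\in\pi(V)$, so that $(x_0,s_0)\in V$, and aim to exhibit a basic open neighbourhood of that point inside $\pi(V)$. This is where local compactness enters. The slice $\{s\in S:(x_0,s)\in V\}$ is open and contains $s_0$, so using that $S$ is locally compact and Hausdorff I would choose an open set $N\ni s_0$ whose closure $\overline{N}$ is compact and satisfies $\{x_0\}\times\overline{N}\subseteq V$.

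The decisive step is to manufacture a saturated open set $U\subseteq T$ with $x_0\in U$ and $U\times N\subseteq V$. I would define $U=\{x\in T:\{x\}\times\overline{N}\subseteq V\}$. Saturation of $U$ is immediate from the saturation of $V$. Openness of $U$ is precisely where the compactness of $\overline{N}$ is indispensable: for any $x_1\in U$ we have $\{x_1\}\times\overline{N}\subseteq V$, and the tube lemma (applicable because $\overline{N}$ is compact) produces an open $U_1\ni x_1$ with $U_1\times\overline{N}\subseteq V$, whence $U_1\subseteq U$. With $U$ constructed, saturation gives $\Proj_R^{-1}(\Proj_R(U))=U$, so $\Proj_R(U)$ is open in $T/R$ by the definition of the quotient topology, and $\Proj_R(U)\times N$ is then an open neighbourhood of $(\Proj_R(x_0),s_0)$ contained in $\pi(V)$. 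Since such a point was arbitrary, $\pi(V)$ is open, establishing that $\pi$ is a quotient map.

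The main obstacle is exactly the construction and verification of $U$: one must simultaneously respect the equivalence relation (which forces saturation and rules out a single naive application of the tube lemma) and keep the box $U\times\overline{N}$ inside $V$. Defining $U$ as the set of all $x$ whose entire slice over $\overline{N}$ lies in $V$ reconciles both demands at once, with the tube lemma — and hence the local compactness and Hausdorffness of $S$ — being exactly what guarantees that this $U$ is open.
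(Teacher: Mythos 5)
Your proof is correct. The paper itself does not prove this theorem at all — it is quoted from the literature (the citation to Engelking), where it appears as the classical result of J.\,H.\,C.~Whitehead on quotient maps and products. Your argument — reducing the claim to showing that $\Proj_R\times\mathrm{id}_S$ is a quotient map, constructing the saturated open set $U=\{x\in T:\{x\}\times\overline{N}\subseteq V\}$, and using local compactness of $S$ together with the tube lemma to verify that $U$ is open — is precisely the standard proof given in such references, so there is nothing to compare beyond noting that you have correctly supplied the proof the paper delegates to its citation.
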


\begin{mycor}
\label{corQuotientProd}
\cite{RajContTop} Let $(T,\mathcal{U})$ and $(S,\mathcal{V})$ be two topological spaces, and let $R_T$ and $R_S$ be two equivalence relations on $T$ and $S$ respectively. Define the equivalence relation $R$ on $T\times S$ as $(x_1,y_1) R (x_2,y_2)$ if and only if $x_1 R_T x_2$ and $y_1R_S y_2$. If $(S,\mathcal{V})$ and $(T/R_T,\mathcal{U}/R_T)$ are locally compact and Hausdorff, then the canonical bijection between $\big((T/R_T)\times (S/R_S),(\mathcal{U}/R_T)\otimes (\mathcal{V}/R_S) \big)$ and $\big((T\times S)/R,(\mathcal{U}\otimes \mathcal{V})/R \big)$ is a homeomorphism.
\end{mycor}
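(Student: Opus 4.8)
The plan is to derive the corollary from Theorem~\ref{theQuotientProd} by applying it twice, in succession, so that the two quotients are taken one at a time rather than simultaneously. First I would set up an intermediate space. Since $R$ on $T\times S$ identifies pairs componentwise, I want to factor the passage from $(T\times S)/R$ through a partial quotient. Concretely, consider first quotienting only the $T$-coordinate by $R_T$, producing $(T/R_T)\times S$ with the product topology $(\mathcal{U}/R_T)\otimes\mathcal{V}$; this is exactly the setting of Theorem~\ref{theQuotientProd} with the roles played by $(T,\mathcal{U})$, the equivalence $R_T$, and the locally compact Hausdorff space $(S,\mathcal{V})$. The theorem then gives a homeomorphism between $\big((T/R_T)\times S,(\mathcal{U}/R_T)\otimes\mathcal{V}\big)$ and $\big((T\times S)/R',(\mathcal{U}\otimes\mathcal{V})/R'\big)$, where $R'$ identifies $(x_1,y_1)$ with $(x_2,y_2)$ iff $x_1 R_T x_2$ and $y_1=y_2$.

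Next I would quotient the remaining $S$-coordinate. On the space $(T/R_T)\times S$ I introduce the equivalence relation that identifies the $S$-coordinates under $R_S$ while keeping the $(T/R_T)$-coordinate fixed, and I apply Theorem~\ref{theQuotientProd} a second time. This time the locally compact Hausdorff hypothesis is needed on the factor being held fixed during the product-quotient interchange, namely $(T/R_T,\mathcal{U}/R_T)$, which is precisely why the corollary assumes that $(T/R_T,\mathcal{U}/R_T)$ is locally compact and Hausdorff in addition to $(S,\mathcal{V})$. This second application yields a homeomorphism between $\big((T/R_T)\times(S/R_S),(\mathcal{U}/R_T)\otimes(\mathcal{V}/R_S)\big)$ and the quotient of $(T/R_T)\times S$ by that $S$-coordinate relation.

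The remaining task is bookkeeping: I would check that composing the two homeomorphisms yields the canonical bijection onto $\big((T\times S)/R,(\mathcal{U}\otimes\mathcal{V})/R\big)$. For this I need to verify that taking the quotient of $(T\times S)/R'$ by the induced $S$-relation produces the same topological space as the simultaneous quotient $(T\times S)/R$. This follows because iterated quotients compose: the equivalence relation obtained by first collapsing $R'$ and then collapsing the $S$-coordinate is exactly $R$, and a standard fact about quotient topologies (that a quotient of a quotient is the quotient by the composite relation, with matching open sets via Proj preimages) identifies the two topologies. I would confirm that the canonical bijections at each stage are mutually compatible so that their composite is indeed the canonical map asserted in the statement.

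The main obstacle I anticipate is not the two invocations of Theorem~\ref{theQuotientProd}, which are mechanical, but rather justifying cleanly that the second application is legitimate, i.e., that the partial quotient $(T\times S)/R'$ really is homeomorphic to a genuine product $(T/R_T)\times S$ so that Theorem~\ref{theQuotientProd} applies to it with $S/R_S$ replacing the earlier role of the fixed factor. The subtlety is that Theorem~\ref{theQuotientProd} requires the \emph{fixed} factor to be locally compact Hausdorff, and in the second step that fixed factor is $T/R_T$; ensuring that the hypotheses line up correctly—and that the composite-relation identification $(T\times S)/R'$ modulo the $S$-relation equals $(T\times S)/R$ holds at the level of topologies and not merely as sets—is where the care is required.
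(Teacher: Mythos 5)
Your proposal is correct and is essentially the intended derivation: this paper does not prove Corollary~\ref{corQuotientProd} itself but cites it from \cite{RajContTop}, where it is obtained exactly as you describe, by two successive applications of Theorem~\ref{theQuotientProd} — first quotienting $T$ by $R_T$ with the locally compact Hausdorff factor $(S,\mathcal{V})$ held fixed, then quotienting $S$ by $R_S$ with the locally compact Hausdorff factor $(T/R_T,\mathcal{U}/R_T)$ held fixed — followed by the standard identification of the iterated quotient $\big((T\times S)/R'\big)$ modulo the induced $S$-relation with $(T\times S)/R$ at the level of quotient topologies. Your accounting of where each local compactness/Hausdorff hypothesis enters (one per application of the theorem) is exactly right, so there is no gap.
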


\section{Input-degradedness and input-equivalence}

\label{secInputDegradDef}

Let $\mathcal{X},\mathcal{X}'$ and $\mathcal{Y}$ be three finite sets. Let $W\in \DMC_{\mathcal{X},\mathcal{Y}}$ and $W'\in \DMC_{\mathcal{X}',\mathcal{Y}}$. We say that $W$ is \emph{input-degraded} from $W'$ if there exists a channel $V'\in \DMC_{\mathcal{X},\mathcal{X}'}$ such that $W=W'\circ V'$. The channels $W$ and $W'$ are said to be \emph{input-equivalent} if each one is input-degraded from the other.

Let $W\in\DMC_{\mathcal{X},\mathcal{Y}}$ be a fixed channel with input alphabet $\mathcal{X}$ and output alphabet $\mathcal{Y}$. For every $x\in\mathcal{X}$, define $W_x\in\Delta_{\mathcal{Y}}$ as:
$$W_x(y)=W(y|x),\;\;\forall y\in\mathcal{Y}.$$

\begin{myprop}
\label{propCharacInpDegrad}
Let $\mathcal{X}',\mathcal{X}$ and $\mathcal{Y}$ be three finite sets. $W\in\DMC_{\mathcal{X},\mathcal{Y}}$ is input-degraded from $W'\in\DMC_{\mathcal{X}',\mathcal{Y}}$ if and only if $\conv(\{W_x:\;x\in\mathcal{X}\})\subset \conv(\{W_{x'}':\;x'\in\mathcal{X}'\})$.
\end{myprop}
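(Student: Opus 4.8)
The plan is to prove both directions by translating the composition $W = W' \circ V'$ into a statement about the rows $W_x$ and $W'_{x'}$ viewed as points in $\Delta_{\mathcal{Y}}$. The key observation is that composing $W'$ with a channel $V' \in \DMC_{\mathcal{X},\mathcal{X}'}$ on the input side produces, for each $x \in \mathcal{X}$, a convex combination of the rows of $W'$: writing out $(W' \circ V')(y|x) = \sum_{x' \in \mathcal{X}'} W'(y|x') V'(x'|x)$ shows that $W_x = \sum_{x'} V'(x'|x)\, W'_{x'}$, where the coefficients $V'(x'|x)$ are nonnegative and sum to $1$ (since $V'$ is a channel). Thus $W_x \in \conv(\{W'_{x'}: x' \in \mathcal{X}'\})$ for every $x$.

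For the forward direction, I would assume $W$ is input-degraded from $W'$, so such a $V'$ exists. By the computation above, each row $W_x$ lies in $\conv(\{W'_{x'}\})$. Since the convex hull of the $W_x$ is the smallest convex set containing all of them, and $\conv(\{W'_{x'}\})$ is a convex set containing all the $W_x$, we immediately get $\conv(\{W_x: x \in \mathcal{X}\}) \subset \conv(\{W'_{x'}: x' \in \mathcal{X}'\})$.

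For the converse, I would assume the inclusion $\conv(\{W_x\}) \subset \conv(\{W'_{x'}\})$ and construct the channel $V'$ row by row. Since each $W_x$ is a point in $\conv(\{W'_{x'}\})$, by the definition of convex hull there exist nonnegative coefficients $\{\lambda_{x,x'}\}_{x' \in \mathcal{X}'}$ summing to $1$ such that $W_x = \sum_{x'} \lambda_{x,x'}\, W'_{x'}$. I would then define $V'(x'|x) = \lambda_{x,x'}$; this is a valid stochastic matrix precisely because the coefficients are nonnegative and sum to $1$ for each fixed $x$. Reversing the computation from the forward direction verifies $W' \circ V' = W$, establishing that $W$ is input-degraded from $W'$.

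The argument is essentially a direct translation, so there is no serious analytic obstacle; the only point requiring a little care is the converse, where one must ensure the convex-combination coefficients can be chosen so that they define a genuine channel (i.e. a well-defined conditional distribution for each input $x$), but this follows immediately from the defining property of the convex hull. A minor subtlety worth stating explicitly is that it suffices to place each vertex $W_x$ inside $\conv(\{W'_{x'}\})$ rather than the whole hull, since membership of the generating points already forces containment of the hull they generate.
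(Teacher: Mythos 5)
Your proof is correct and follows essentially the same route as the paper's own argument: expanding $W = W' \circ V'$ row by row to see that each $W_x$ is a convex combination of the $W'_{x'}$ with weights $V'(x'|x)$, and conversely reading convex-combination coefficients as the entries of a channel $V'$. The observations you flag as subtleties (hull containment from vertex membership, and validity of the stochastic matrix) are exactly the implicit steps in the paper's proof, so there is nothing to add.
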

\begin{proof}
Assume that $W$ is input-degraded from $W'$. There exists $V'\in\DMC_{\mathcal{X},\mathcal{X}'}$ such that $W=W'\circ V'$. For every $x\in\mathcal{X}$ and $y\in\mathcal{Y}$, we have:
$$W_x(y)=W(y|x)=\sum_{x'\in\mathcal{X}'}W'(y|x')V'(x'|x)=\sum_{x'\in\mathcal{X}'}V'(x'|x)W'_{x'}(y).$$
Therefore, $W_x=\displaystyle\sum_{x'\in\mathcal{X}'}V'(x'|x)W'_{x'}$ which means that $W_x\in \conv(\{W_{x'}':\;x'\in\mathcal{X}'\})$ for every $x\in\mathcal{X}$, hence $\conv(\{W_x:\;x\in\mathcal{X}\})\subset \conv(\{W_{x'}':\;x'\in\mathcal{X}'\})$.

Conversely, assume that $\conv(\{W_x:\;x\in\mathcal{X}\})\subset \conv(\{W_{x'}':\;x'\in\mathcal{X}'\})$ and let $x\in\mathcal{X}$. Since $W_x\in \conv(\{W_{x'}':\;x'\in\mathcal{X}'\})$, there exists a set of numbers $\alpha_{x,x'}\geq 0$ satisfying $\displaystyle\sum_{x'\in\mathcal{X}'}\alpha_{x,x'}=1$ such that $W_x=\displaystyle\sum_{x'\in\mathcal{X}'}\alpha_{x,x'}W_{x'}$. Define $V'\in\DMC_{\mathcal{X},\mathcal{X}'}$ as $V(x'|x)=\alpha_{x,x'}$ for every $x\in\mathcal{X}$ and every $x'\in\mathcal{X}'$. We have $W=W'\circ V'$ and so $W$ is input-degraded from $W'$.
\end{proof}

\vspace*{3mm}

For every channel $W\in\DMC_{\mathcal{X},\mathcal{Y}}$, we define the \emph{input-equivalence characteristic of $W$}, or simply the \emph{characteristic of $W$}, as $\CE(W):=\CE(\{W_x:\;x\in\mathcal{X}\})$. The \emph{input-rank} of $W\in\DMC_{\mathcal{X},\mathcal{Y}}$ is the size of its characteristic: $\irank(W)=|\CE(W)|$.

\begin{myprop}
\label{propCharacInputEquiv}
Let $\mathcal{X}',\mathcal{X}$ and $\mathcal{Y}$ be three finite sets. $W\in\DMC_{\mathcal{X},\mathcal{Y}}$ is input-equivalent to $W'\in\DMC_{\mathcal{X}',\mathcal{Y}}$ if and only if $\CE(W)=\CE(W')$.
\end{myprop}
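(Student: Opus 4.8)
The plan is to reduce the statement to a purely convex-geometric fact about the polytopes generated by the rows of $W$ and $W'$, using Proposition \ref{propCharacInpDegrad} to translate input-degradedness into an inclusion of convex hulls. Write $P=\conv(\{W_x:\;x\in\mathcal{X}\})$ and $P'=\conv(\{W'_{x'}:\;x'\in\mathcal{X}'\})$. By definition, $W$ and $W'$ are input-equivalent exactly when $W$ is input-degraded from $W'$ and $W'$ is input-degraded from $W$. Applying Proposition \ref{propCharacInpDegrad} in each direction, this holds if and only if $P\subset P'$ and $P'\subset P$ simultaneously, i.e., if and only if $P=P'$. Thus the entire proposition reduces to showing that $P=P'$ if and only if $\CE(W)=\CE(W')$.

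For this remaining equivalence I would invoke two standard facts from finite-dimensional convex geometry. First, the remark already recorded in the excerpt: since $\{W_x:\;x\in\mathcal{X}\}$ is finite, $\CE(W)$ is precisely the set of extreme points of $P$, and likewise $\CE(W')$ is the set of extreme points of $P'$. Second, Minkowski's theorem (the finite-dimensional Krein--Milman theorem): every compact convex set is the convex hull of its extreme points, so $P=\conv(\CE(W))$ and $P'=\conv(\CE(W'))$.

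The backward direction is then immediate: if $\CE(W)=\CE(W')$, taking convex hulls gives $P=\conv(\CE(W))=\conv(\CE(W'))=P'$, so both mutual input-degradedness relations hold by Proposition \ref{propCharacInpDegrad}, and $W$ and $W'$ are input-equivalent. For the forward direction, suppose $P=P'$. Since the notion of an extreme point depends only on the convex set itself, equal convex sets have identical extreme-point sets; hence the extreme points of $P$ and of $P'$ coincide, and by the first fact these are exactly $\CE(W)$ and $\CE(W')$, giving $\CE(W)=\CE(W')$.

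I expect no serious obstacle here, since the argument is a direct combination of Proposition \ref{propCharacInpDegrad} with textbook convex geometry. The only point requiring minor care is the clean separation of the two standard facts — that the convex-extreme points of a finite set are the extreme points of its hull, and that a polytope is recovered as the convex hull of those extreme points — so that the forward and backward implications each rely on exactly one of them.
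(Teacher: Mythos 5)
Your proof is correct and follows essentially the same route as the paper's: both reduce input-equivalence to equality of the convex hulls $\conv(\{W_x\})=\conv(\{W'_{x'}\})$ via Proposition \ref{propCharacInpDegrad}, and then identify this with $\CE(W)=\CE(W')$ using the fact that the convex-extreme points of a finite set are the extreme points of its hull together with Minkowski's theorem. The only difference is that you spell out explicitly the two convex-geometry facts that the paper's one-line chain of equalities uses implicitly, which is a welcome clarification but not a different argument.
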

\begin{proof}
It follows from Proposition \ref{propCharacInpDegrad} that $W$ is input-equivalent to $W'$ if and only if $\conv(\{W_x:\;x\in\mathcal{X}\})= \conv(\{W_{x'}':\;x'\in\mathcal{X}'\})$, which happens if and only if $\CE(W)=\CE(\conv(\{W_x:\;x\in\mathcal{X}\}))= \CE(\conv(\{W_{x'}':\;x'\in\mathcal{X}'\}))=\CE(W')$.
\end{proof}

\section{Operational implications of input-degradedness}

\subsection{Operational implication in terms of decoders}

Let $\mathcal{Y}$ be a finite set. An $(n,M)$-decoder on $\mathcal{Y}$ is a mapping $\mathcal{D}:\mathcal{Y}^n\rightarrow\mathcal{M}$, where $|\mathcal{M}|=M$. The set $\mathcal{M}$ is the \emph{message set} of  $\mathcal{D}$, $n$ is the \emph{blocklength} of $\mathcal{D}$, $M$ is the \emph{size} of $\mathcal{D}$ and $\frac{1}{n}\log|\mathcal{M}|$ is the \emph{rate} of $\mathcal{D}$ (measured in nats).

Let $W\in\DMC_{\mathcal{X},\mathcal{Y}}$ be a channel with input alphabet $\mathcal{X}$ and output alphabet $\mathcal{Y}$, and let $\mathcal{D}:\mathcal{Y}^n\rightarrow\mathcal{M}$ be a decoder on $\mathcal{Y}$. A \emph{maximum-likelihood (ML) encoder} for $\mathcal{D}$ when it is used for $W$ is any encoder $\mathcal{E}:\mathcal{M}\rightarrow\mathcal{X}^n$ satisfying
$$\sum_{\substack{y_1^n\in\mathcal{Y}^n,\\\mathcal{D}(y_1^n)=m}} \prod_{i=1}^n W(y_i|\mathcal{E}_i(m))\geq \sum_{\substack{y_1^n\in\mathcal{Y}^n,\\\mathcal{D}(y_1^n)=m}} \prod_{i=1}^n W(y_i|x_i),\;\;\forall m\in\mathcal{M},\;\forall x_1^n\in\mathcal{X}^n,$$
where $(\mathcal{E}_1(m),\ldots,\mathcal{E}_n(m))=\mathcal{E}(m)\in\mathcal{X}^n$.

It is easy to see that a maximum-likelihood encoder has the best probability of error among all encoders (assuming that the decoder $\mathcal{D}$ is used). The \emph{probability of error of $\mathcal{D}$ under ML-encoding for $W$} is given by:
$$P_{e,\mathcal{D}}(W)=1-\frac{1}{|\mathcal{M}|}\sum_{m\in\mathcal{M}} \max_{x_1^n\in\mathcal{X}^n}\Bigg\{ \sum_{\substack{y_1^n\in\mathcal{Y}^n,\\\mathcal{D}(y_1^n)=m}} \prod_{i=1}^n W(y_i|x_i)\Bigg\}.$$

\begin{myprop}
\label{propInputDegradOperational}
Let $\mathcal{X}',\mathcal{X}$ and $\mathcal{Y}$ be three finite sets. If $W\in\DMC_{\mathcal{X},\mathcal{Y}}$ is input-degraded from $W'\in\DMC_{\mathcal{X}',\mathcal{Y}}$, then $P_{e,\mathcal{D}}(W')\leq P_{e,\mathcal{D}}(W)$ for every decoder $\mathcal{D}$ on $\mathcal{Y}$. Moreover, if $W$ and $W'$ are input-equivalent, then $P_{e,\mathcal{D}}(W)= P_{e,\mathcal{D}}(W')$ for every decoder $\mathcal{D}$ on $\mathcal{Y}$.
\end{myprop}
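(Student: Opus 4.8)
The plan is to reduce the statement to the equivalent inequality $S(W)\le S(W')$, where I write $Q_m(x_1^n;W)=\sum_{y_1^n:\,\mathcal{D}(y_1^n)=m}\prod_{i=1}^n W(y_i|x_i)$ for the probability that $\mathcal{D}$ outputs $m$ when $x_1^n$ is transmitted over the memoryless extension $W^{\otimes n}$, and $S(W)=\frac{1}{|\mathcal{M}|}\sum_{m\in\mathcal{M}}\max_{x_1^n\in\mathcal{X}^n}Q_m(x_1^n;W)$. Since $P_{e,\mathcal{D}}(W)=1-S(W)$ by definition, the inequality $P_{e,\mathcal{D}}(W')\le P_{e,\mathcal{D}}(W)$ is exactly $S(W)\le S(W')$.

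First I would lift the single-letter relation $W=W'\circ V'$ to blocklength $n$, checking that $W^{\otimes n}=(W')^{\otimes n}\circ (V')^{\otimes n}$. This follows by expanding $\prod_{i=1}^n W(y_i|x_i)=\prod_{i=1}^n\sum_{x_i'\in\mathcal{X}'}W'(y_i|x_i')V'(x_i'|x_i)$ and distributing the product over the sums, which reproduces the block composition $\sum_{x_1'^n}\prod_i W'(y_i|x_i')\prod_i V'(x_i'|x_i)$. Next, for a fixed message $m$ and codeword $x_1^n$, I would insert this factorization into $Q_m(x_1^n;W)$ and interchange the two finite sums (over output words $y_1^n$ and over input words $x_1'^n$), obtaining $Q_m(x_1^n;W)=\sum_{x_1'^n\in\mathcal{X}'^n}(V')^{\otimes n}(x_1'^n|x_1^n)\,Q_m(x_1'^n;W')$.

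Since the coefficients $(V')^{\otimes n}(x_1'^n|x_1^n)=\prod_i V'(x_i'|x_i)$ are nonnegative and sum to one over $x_1'^n$, the value $Q_m(x_1^n;W)$ is a convex combination of the values $Q_m(x_1'^n;W')$, hence is bounded above by $\max_{x_1'^n}Q_m(x_1'^n;W')$. Taking the maximum over $x_1^n\in\mathcal{X}^n$ and then averaging over $m$ yields $S(W)\le S(W')$, which is the desired conclusion. The \emph{moreover} statement is then immediate: input-equivalence means $W$ is input-degraded from $W'$ and $W'$ is input-degraded from $W$, so the inequality holds in both directions and collapses to $P_{e,\mathcal{D}}(W)=P_{e,\mathcal{D}}(W')$.

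I do not expect a serious obstacle here. The only steps requiring care are the block-wise factorization $W^{\otimes n}=(W')^{\otimes n}\circ(V')^{\otimes n}$ and the interchange of finite sums, both of which are purely formal manipulations. The conceptual heart of the argument is simply the observation that randomizing the input through $V'$ replaces each achievable decoding probability for $W$ by a convex mixture of the corresponding probabilities for $W'$, and a convex mixture can never exceed the best single value, so the optimal (ML) performance can only degrade.
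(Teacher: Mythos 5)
Your proof is correct and is essentially the same argument as the paper's: both expand $W(y_i|x_i)=\sum_{x_i'}W'(y_i|x_i')V'(x_i'|x_i)$, distribute the product over the sums, interchange the finite sums, and bound the resulting convex combination of block decoding probabilities by its maximum. Your packaging of the manipulation as the block-level identity $W^{\otimes n}=(W')^{\otimes n}\circ(V')^{\otimes n}$ with the notation $Q_m$ and $S$ is only a cosmetic reorganization of the same steps.
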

\begin{proof}
Assume that $W\in\DMC_{\mathcal{X},\mathcal{Y}}$ is input-degraded from $W'\in\DMC_{\mathcal{X}',\mathcal{Y}}$. Let $V'\in\DMC_{\mathcal{X},\mathcal{X}'}$ be such that $W=W'\circ V'$.

Fix an $(n,M)$ decoder $\mathcal{D}$ on $\mathcal{Y}$ and let $\mathcal{M}$ be its message set. We have:
\begin{align*}
1-P_{e,\mathcal{D}}(W)&=\frac{1}{|\mathcal{M}|}\sum_{m\in\mathcal{M}} \max_{x_1^n\in\mathcal{X}^n}\Bigg\{ \sum_{\substack{y_1^n\in\mathcal{Y}^n,\\\mathcal{D}(y_1^n)=m}} \prod_{i=1}^n W(y_i|x_i)\Bigg\}\\
&=\frac{1}{|\mathcal{M}|}\sum_{m\in\mathcal{M}} \max_{x_1^n\in\mathcal{X}^n}\Bigg\{ \sum_{\substack{y_1^n\in\mathcal{Y}^n,\\\mathcal{D}(y_1^n)=m}} \prod_{i=1}^n \Bigg(\sum_{x_i'\in\mathcal{X}'} W'(y_i|x_i')V'(x_i'|x_i)\Bigg)\Bigg\}\\
&=\frac{1}{|\mathcal{M}|}\sum_{m\in\mathcal{M}} \max_{x_1^n\in\mathcal{X}^n}\Bigg\{ \sum_{\substack{y_1^n\in\mathcal{Y}^n,\\\mathcal{D}(y_1^n)=m}} \sum_{x_1'^n\in\mathcal{X}'^n} \prod_{i=1}^n \Bigg(W'(y_i|x_i')V'(x_i'|x_i)\Bigg)\Bigg\}\\
&=\frac{1}{|\mathcal{M}|}\sum_{m\in\mathcal{M}} \max_{x_1^n\in\mathcal{X}^n}\Bigg\{ \sum_{x_1'^n\in\mathcal{X}'^n} \Bigg(\prod_{i=1}^n V'(x_i'|x_i)\Bigg) \sum_{\substack{y_1^n\in\mathcal{Y}^n,\\\mathcal{D}(y_1^n)=m}}  \prod_{i=1}^n W'(y_i|x_i')\Bigg\}\\
&\leq \frac{1}{|\mathcal{M}|}\sum_{m\in\mathcal{M}} \max_{x_1'^n\in\mathcal{X}'^n}\Bigg\{ \sum_{\substack{y_1^n\in\mathcal{Y}^n,\\\mathcal{D}(y_1^n)=m}}  \prod_{i=1}^n W'(y_i|x_i')\Bigg\} = 1 - P_{e,\mathcal{D}}(W').
\end{align*}
Therefore $P_{e,\mathcal{D}}(W')\leq P_{e,\mathcal{D}}(W)$.

If $W$ and $W'$ are input-degraded from each other, then $P_{e,\mathcal{D}}(W')\leq P_{e,\mathcal{D}}(W)$ and $P_{e,\mathcal{D}}(W)\leq P_{e,\mathcal{D}}(W')$, hence $P_{e,\mathcal{D}}(W')= P_{e,\mathcal{D}}(W)$.
\end{proof}

\subsection{A characterization of input-degradedness}

Let $W\in\DMC_{\mathcal{X},\mathcal{Y}}$ and let $\mathcal{U}$ be a finite set. For every $p\in\Delta_{\mathcal{U}}$ and every $D\in\DMC_{\mathcal{Y},\mathcal{U}}$, define
$$P_c(p,W,D)=\sup_{E\in\DMC_{\mathcal{U},\mathcal{X}}}\sum_{\substack{u\in \mathcal{U},\\x\in\mathcal{X},\\y\in\mathcal{Y}}}p(u) E(x|u)W(y|x)D(u|y).$$

$P_c(p,W,D)$ can be interpreted as follows: let $U$ be a random variable in $\mathcal{U}$ distributed as $p$. Assume that $U$ was encoded using the random encoder $E\in\DMC_{\mathcal{U},\mathcal{X}}$ to get $X\in\mathcal{X}$. Send $X$ through the channel $W$ and let $Y\in\mathcal{Y}$ be the output. Apply the random decoder $D\in\DMC_{\mathcal{Y},\mathcal{U}}$ on $\mathcal{Y}$ to get an estimate $\hat{U}$ of $U$. We have:
$$\mathbb{P}[\{\hat{U}=U\}]=\sum_{\substack{u\in \mathcal{U},\\x\in\mathcal{X},\\y\in\mathcal{Y}}}p(u) E(x|u)W(y|x)D(u|y).$$
Therefore, $P_c(p,W,D)$ is the optimal probability of successfully estimating $U$ by the fixed decoder $D$ among all random encoders $E\in\DMC_{\mathcal{U},\mathcal{X}}$. Note that the optimal encoder can always be chosen to be deterministic.

\begin{mythe}
\label{theCharacInputDegrad}
A channel $W\in\DMC_{\mathcal{X},\mathcal{Y}}$ is input-degraded from another channel $W'\in\DMC_{\mathcal{X}',\mathcal{Y}}$ if and only if $P_c(p,W,D)\leq P_c(p,W',D)$ for every $p\in\Delta_{\mathcal{U}}$, every $D\in\DMC_{\mathcal{Y},\mathcal{U}}$ and every finite set $\mathcal{U}$.
\end{mythe}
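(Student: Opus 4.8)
The plan is to prove the two implications separately, with the reverse direction being the substantial one. For the forward (``only if'') direction I would assume $W=W'\circ V'$ for some $V'\in\DMC_{\mathcal{X},\mathcal{X}'}$ and substitute $W(y|x)=\sum_{x'}W'(y|x')V'(x'|x)$ into the definition of $P_c(p,W,D)$. After regrouping the sums, every encoder $E\in\DMC_{\mathcal{U},\mathcal{X}}$ gives rise to the composed encoder $E'=V'\circ E\in\DMC_{\mathcal{U},\mathcal{X}'}$, and the objective value attained by $E$ for $W$ equals the objective value attained by $E'$ for $W'$. Since $E'$ ranges over a subset of $\DMC_{\mathcal{U},\mathcal{X}'}$, each such value is at most $P_c(p,W',D)$; taking the supremum over $E$ then yields $P_c(p,W,D)\le P_c(p,W',D)$ for every $p$, $D$ and $\mathcal{U}$. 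This is a short computation.

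For the reverse (``if'') direction the key preliminary step is a closed form for $P_c$. Because the objective is linear in $E$ and decouples across the values $u\in\mathcal{U}$, the optimal $E$ can be taken deterministic, and I expect to show that
\[
P_c(p,W,D)=\sum_{u\in\mathcal{U}}p(u)\,\max_{x\in\mathcal{X}}\sum_{y\in\mathcal{Y}}W_x(y)\,D(u|y).
\]
This is the formula that converts the operational inequality into a geometric one. I would then argue by contraposition: suppose $W$ is not input-degraded from $W'$. By Proposition \ref{propCharacInpDegrad} this means $\conv(\{W_x:x\in\mathcal{X}\})\not\subset K$, where $K=\conv(\{W'_{x'}:x'\in\mathcal{X}'\})$, so there is some $x_0\in\mathcal{X}$ with $W_{x_0}\notin K$. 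Since $K$ is a compact convex subset of $\mathbb{R}^{\mathcal{Y}}$, the strict separating-hyperplane theorem provides a vector $a\in\mathbb{R}^{\mathcal{Y}}$ with $\sum_y a(y)W_{x_0}(y)>\sum_y a(y)W'_{x'}(y)$ for every $x'\in\mathcal{X}'$.

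The crux is to realize $a$ as a column of a genuine decoder. I would exploit that every $W_x$ and every $W'_{x'}$ is a probability distribution, so replacing $a$ by $a+t\mathbf{1}$ (where $\mathbf{1}$ is the all-ones vector) shifts every pairing $\sum_y a(y)\mu(y)$ by the same constant $t$ and hence preserves the strict inequality; after adding a suitable constant and rescaling by a positive factor I may assume $0\le a(y)\le 1$ for all $y$. Taking $\mathcal{U}=\{0,1\}$, $D(1|y)=a(y)$, $D(0|y)=1-a(y)$, and $p$ the point mass at $1$, the closed form gives $P_c(p,W,D)\ge\sum_y a(y)W_{x_0}(y)>\max_{x'}\sum_y a(y)W'_{x'}(y)=P_c(p,W',D)$, contradicting the hypothesis.

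The \emph{main obstacle} is precisely this last normalization step: the functional produced by the separating-hyperplane theorem is an arbitrary element of $\mathbb{R}^{\mathcal{Y}}$, whereas a decoder column must lie in $[0,1]^{\mathcal{Y}}$. The fact that channel rows are probability vectors is exactly what lets me clamp $a$ into the admissible range without destroying the separation, and thereby turn an abstract separating hyperplane into a concrete pair $(p,D)$ violating the inequality. Everything else, namely the forward computation and the derivation of the closed form for $P_c$, is routine once this observation is in place.
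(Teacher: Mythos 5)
Your proposal is correct, and while your forward direction coincides with the paper's (the same regrouping via the composed encoder $V'\circ E$), your converse takes a genuinely different route. The paper argues directly: it fixes the augmented alphabet $\mathcal{U}=\mathcal{X}\cup\{x_0\}$, a sequence of priors $p_n$ and an auxiliary channel $W_0$, derives a max-min inequality from the hypothesis, invokes the minimax theorem to swap the optimizations, and then extracts from the resulting near-optimal encoders $E_n'$ (which satisfy $|W(y|x)-(W'\circ E_n')(y|x)|\leq |\mathcal{X}|/n$) a convergent subsequence by compactness of $\DMC_{\mathcal{U},\mathcal{X}'}$, whose limit is the degrading channel $V'$. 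You instead argue by contraposition: you reduce input-degradedness to the convex-hull inclusion of Proposition \ref{propCharacInpDegrad}, separate a point $W_{x_0}\notin\conv(\{W'_{x'}:x'\in\mathcal{X}'\})$ by a strict hyperplane, and then clamp the separating functional into $[0,1]^{\mathcal{Y}}$ so that it becomes a decoder column; your closed form for $P_c$ is valid (it is exactly the paper's remark that the optimal encoder can be taken deterministic), and the clamping step is legitimate because a constant functional cannot strictly separate probability vectors, so after shifting, the vector is nonzero and the rescaling factor is positive. What your approach buys: it avoids both the minimax theorem and the compactness/limit construction, and it yields a formally stronger conclusion -- the inequality need only be checked for $|\mathcal{U}|=2$ and point-mass priors to force input-degradedness, which sharpens the statement of Theorem \ref{theCharacInputDegrad}. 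What the paper's approach buys: it produces the degrading channel $V'$ directly from the family of inequalities without passing through the geometric characterization, so it stands independently of Proposition \ref{propCharacInpDegrad}, whereas your argument makes that proposition an essential ingredient.
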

\begin{proof}
Assume that $W$ is input-degraded from $W'$. There exists $V'\in\DMC_{\mathcal{X},\mathcal{X}'}$ such that $W=W'\circ V'$. For every finite set $\mathcal{U}$, every $p\in\Delta_{\mathcal{U}}$ and every $D\in\DMC_{\mathcal{Y},\mathcal{U}}$, we have:
\begin{align*}
P_c(p,W,D)&=\sup_{E\in\DMC_{\mathcal{U},\mathcal{X}}}\sum_{\substack{u\in \mathcal{U},\\x\in\mathcal{X},\\y\in\mathcal{Y}}}p(u) E(x|u)W(y|x)D(u|y)\\
&=\sup_{E\in\DMC_{\mathcal{U},\mathcal{X}}}\sum_{\substack{u\in \mathcal{U},\\x\in\mathcal{X},\\y\in\mathcal{Y}}}p(u)E(x|u)\Bigg(\sum_{x'\in\mathcal{X}'}W'(y|x') V'(x'|x)\Bigg)D(u|y)\\
&=\sup_{E\in\DMC_{\mathcal{U},\mathcal{X}}}\sum_{\substack{u\in \mathcal{U},\\x'\in\mathcal{X}',\\y\in\mathcal{Y}}}p(u) \Bigg(\sum_{x\in\mathcal{X}} V'(x'|x)E(x|u)\Bigg)W'(y|x')D(u|y)\\
&=\sup_{E\in\DMC_{\mathcal{U},\mathcal{X}}}\sum_{\substack{u\in \mathcal{U},\\x'\in\mathcal{X}',\\y\in\mathcal{Y}}}p(u) (V'\circ E)(x'|u)W'(y|x')D(u|y)\\
&\leq \sup_{E'\in\DMC_{\mathcal{U},\mathcal{X}'}}\sum_{\substack{u\in \mathcal{U},\\x'\in\mathcal{X}',\\y\in\mathcal{Y}}}p(u) E'(x'|u)W'(y|x')D(u|y) = P_c(p,W',D).
\end{align*}

Conversely, assume that $P_c(p,W,D)\leq P_c(p,W',D)$ for every $p\in\Delta_{\mathcal{U}}$, every $D\in\DMC_{\mathcal{Y},\mathcal{U}}$ and every finite set $\mathcal{U}$.

Let $x_0$ be any symbol that does belong to $\mathcal{X}$ and let $\mathcal{U}=\mathcal{X}\cup\{x_0\}$. For every $n\geq 1$, define $p_n\in \Delta_{\mathcal{U}}$ as follows:
$$p_n(u)=\begin{cases}\displaystyle\frac{1}{|\mathcal{X}|}\left(1-\frac{1}{n+1}\right)\quad&\text{if}\;u\in\mathcal{X},\\\displaystyle\frac{1}{n+1}\quad&\text{if} \;u=x_0. \end{cases}$$
$p_n$ was chosen in such a way that $\frac{p_n(x_0)}{p_n(x)}=\frac{|\mathcal{X}|}{n}$ for every $x\in\mathcal{X}$. This is going to be useful later. Define the channel $W_0\in\DMC_{\mathcal{U},\mathcal{Y}}$ as follows:
$$W_0(y|u)=\begin{cases}W(y|u)\quad&\text{if}\;u\in\mathcal{X},\\\displaystyle\frac{1}{|\mathcal{X}|}\sum_{x\in\mathcal{X}}W(y|x)\quad&\text{if}\;u=x_0. \end{cases}$$

Fix the encoder $E\in\DMC_{\mathcal{U},\mathcal{X}}$ as follows:
$$E(x|u)=\begin{cases}1\quad&\text{if}\;u=x,\\\displaystyle\frac{1}{|\mathcal{X}|}\quad&\text{if}\;u=x_0, \\0\quad&\text{otherwise}.\end{cases}$$
For every $D\in\DMC_{\mathcal{Y},\mathcal{U}}$, we have:
\begin{align*}
\sum_{\substack{u\in \mathcal{U},\\y\in\mathcal{Y}}}p_n(u) &W_0(y|u)D(u|y)\\
&=\Bigg(\sum_{\substack{x\in \mathcal{X},\\y\in\mathcal{Y}}}p_n(x) W_0(y|x)D(x|y)\Bigg) + \sum_{y\in\mathcal{Y}} p_n(x_0)W_0(y|x_0)D(x_0|y)\\
&=\Bigg(\sum_{\substack{x\in \mathcal{X},\\y\in\mathcal{Y}}}p_n(x) W(y|x)D(x|y)\Bigg) + \sum_{y\in\mathcal{Y}} p_n(x_0)\frac{1}{|\mathcal{X}|} \sum_{x\in\mathcal{X}} W(y|x)D(x_0|y)\\
&=\Bigg(\sum_{\substack{u\in\mathcal{X},\\ x\in \mathcal{X},\\y\in\mathcal{Y}}}p_n(u) E(x|u) W(y|x)D(u|y)\Bigg) + \sum_{\substack{x\in\mathcal{X},\\y\in\mathcal{Y}}} p_n(x_0)E(x|x_0)  W(y|x)D(x_0|y)\\
&=\sum_{\substack{u\in\mathcal{U},\\ x\in \mathcal{X},\\y\in\mathcal{Y}}}p_n(u) E(x|u) W(y|x)D(u|y)\leq P_c(p_n,W,D)\leq P_c(p_n,W',D)\\
&=\sup_{E'\in\DMC_{\mathcal{U},\mathcal{X}'}}\sum_{\substack{u\in \mathcal{U},\\x'\in\mathcal{X}',\\y\in\mathcal{Y}}}p_n(u) E'(x'|u)W'(y|x')D(u|y).
\end{align*}
Therefore,
\begin{align*}
\min_{E'\in\DMC_{\mathcal{U},\mathcal{X}'}}\sum_{\substack{u\in \mathcal{U},\\y\in\mathcal{Y}}} p_n(u)\left(W_0(y|x)-\sum_{x'\in\mathcal{X}'} E'(x'|u)W'(y|x')\right)D(u|y)\leq 0,
\end{align*}
hence
\begin{equation*}
\max_{D\in\DMC_{\mathcal{Y},\mathcal{U}}} \min_{E'\in\DMC_{\mathcal{U},\mathcal{X}'}}\sum_{\substack{u\in \mathcal{U},\\y\in\mathcal{Y}}} p_n(u)\left(W_0(y|u)-\sum_{x'\in\mathcal{X}'} E'(x'|u)W'(y|x')\right)D(u|y)\leq 0,
\end{equation*}
or equivalently
\begin{equation}
\label{eqMaxMinInputDegrad}
\max_{D\in\DMC_{\mathcal{Y},\mathcal{U}}} \min_{E'\in\DMC_{\mathcal{U},\mathcal{X}'}}\sum_{\substack{u\in \mathcal{U},\\y\in\mathcal{Y}}} p_n(u)\Big(W_0(y|u)-(W'\circ E')(y|u)\Big)D(u|y)\leq 0.
\end{equation}

Note that the sets $\DMC_{\mathcal{Y},\mathcal{U}}$ and $\DMC_{\mathcal{U},\mathcal{X}'}$ are compact and convex. On the other hand, since the function $\displaystyle \sum_{\substack{u\in \mathcal{U},\\y\in\mathcal{Y}}} p_n(u)\Big(W_0(y|u)-(W'\circ E')(y|u)\Big)D(u|y)$ is affine in both $D\in\DMC_{\mathcal{Y},\mathcal{U}}$ and $E'\in\DMC_{\mathcal{U},\mathcal{X}'}$, it is continuous, concave in $D$ and convex in $E'$. Therefore, we can apply the minimax theorem \cite{MiniMax} to exchange the max and the min in Equation \eqref{eqMaxMinInputDegrad}. We obtain:
\begin{equation*}
\min_{E'\in\DMC_{\mathcal{U},\mathcal{X}'}} \max_{D\in\DMC_{\mathcal{Y},\mathcal{U}}}\sum_{\substack{u\in \mathcal{U},\\y\in\mathcal{Y}}} p_n(u)\Big(W_0(y|u)-(W'\circ E')(y|u)\Big)D(u|y)\leq 0.
\end{equation*}
Therefore, there exists $E'_n\in\DMC_{\mathcal{U},\mathcal{X}'}$ such that
\begin{align*}
0&\geq \max_{D\in\DMC_{\mathcal{Y},\mathcal{U}}}\sum_{\substack{u\in \mathcal{U},\\y\in\mathcal{Y}}} p_n(u)\Big(W_0(y|u)-(W'\circ E'_n)(y|u)\Big)D(u|y)\\
&\stackrel{(a)}{=}\sum_{y\in\mathcal{Y}} \max_{u\in\mathcal{U}} p_n(u)\Big(W_0(y|u)-(W'\circ E'_n)(y|u)\Big)\\
&\geq \sum_{y\in\mathcal{Y}} \frac{1}{|\mathcal{U}|}\sum_{u\in\mathcal{U}} p_n(u) \Big(W_0(y|u)-(W'\circ E'_n)(y|u)\Big)\\
&=\frac{1}{|\mathcal{U}|}\sum_{u\in\mathcal{U}}p_n(u)\sum_{y\in\mathcal{Y}} \Big(W_0(y|u)-(W'\circ E'_n)(y|u)\Big) = 0,
\end{align*}
where (a) follows from the fact that $\displaystyle\sum_{\substack{u\in \mathcal{U},\\y\in\mathcal{Y}}} p_n(u)\Big(W_0(y|u)-(W'\circ E'_n)(y|u)\Big)D(u|y)$ is maximized when $D$ is chosen to be deterministic in such a way that for every $y\in\mathcal{Y}$, $D(u_y|y)=1$ for any $u_y\in\mathcal{U}$ satisfying $\displaystyle p_n(u_y)(W_0(y|u_y)-(W'\circ E'_n)(y|u_y))=\max_{u\in\mathcal{U}} \Big\{p_n(u)\big(W_0(y|u)-(W'\circ E'_n)(y|u)\big)\Big\}$. We conclude that
$$\sum_{y\in\mathcal{Y}} \max_{u\in\mathcal{U}} p_n(u)\Big(W_0(y|u)-(W'\circ E'_n)(y|u)\Big)=0.$$
Assume there exists $y\in\mathcal{Y}$ and $\tilde{u}\in\mathcal{U}$ such that $$p_n(u)\big( W_0(y|\tilde{u})-(W'\circ E'_n)(y|\tilde{u})\big)< \max_{u\in\mathcal{U}} p_n(u)\Big(W_0(y|u)-(W'\circ E'_n)(y|u)\Big).$$ In this case, we have
\begin{align*}
0&=\sum_{u\in\mathcal{U}}p_n(u)\sum_{y\in\mathcal{Y}} \Big(W_0(y|u)-(W'\circ E'_n)(y|u)\Big)\\
&=\sum_{y\in\mathcal{Y}}\sum_{u\in\mathcal{U}}p_n(u)\Big(W_0(y|u)-(W'\circ E'_n)(y|u)\Big)\\
&< \sum_{y\in\mathcal{Y}} |\mathcal{U}|\cdot \max_{u\in\mathcal{U}} p_n(u)\Big(W_0(y|u)-(W'\circ E'_n)(y|u)\Big)=0,
\end{align*}
which is a contradiction. Therefore, for every $y\in\mathcal{Y}$ and every $x\in\mathcal{X}$, we have
\begin{align*}
p_n(x)\big(W(y|x)-(W'\circ E'_n)(y|x)\big)&=p_n(x)\big(W_0(y|x)-(W'\circ E'_n)(y|x)\big)\\
&=\max_{u\in\mathcal{U}} p_n(u)\Big(W_0(y|u)-(W'\circ E'_n)(y|u)\Big)\\
&=p_n(x_0)\big( W_0(y|x_0)-(W'\circ E'_n)(y|x_0)\big),
\end{align*}
which implies that
\begin{align*}
\big|W(y|x)-(W'\circ E'_n)(y|x)\big|= \frac{p_n(x_0)}{p_n(x)}\big| W_0(y|x_0)-(W'\circ E'_n)(y|x_0)\big|\leq \frac{p_n(x_0)}{p_n(x)}=\frac{|\mathcal{X}|}{n}.
\end{align*}
Since the space $\DMC_{\mathcal{U},\mathcal{X}'}$ is compact, there exists a converging subsequence $(E'_{n_k})_{k\geq 0}$ of $(E'_n)_{n\geq 1}$. Let $E'$ be the limit of $(E'_{n_k})_{k\geq 0}$. For every $x\in\mathcal{X}$ and every $y\in\mathcal{Y}$, we have:
\begin{align*}
\big|W(y|x)-(W'\circ E')(y|x)\big|=\lim_{k\to\infty} \big|W(y|x)-(W'\circ E'_{n_k})(y|x)\big|\leq \lim_{k\to\infty} \frac{|\mathcal{X}|}{n_k}=0,
\end{align*}
which means that $W(y|x)=(W'\circ E')(y|x)$. Define $V'\in\DMC_{\mathcal{X},\mathcal{X}'}$ as $V'(x'|x)=E'(x'|x)$ for every $x\in\mathcal{X}$ and every $x'\in\mathcal{X}'$. For every $x\in\mathcal{X}$ and every $y\in\mathcal{Y}$, we have:
\begin{align*}
(W'\circ V')(y|x)=\sum_{x'\in\mathcal{X}'} W'(y|x')V'(x'|x)=\sum_{x'\in\mathcal{X}'} W'(y|x')E'(x'|x)=(W'\circ E')(y|x)=W(y|x).
\end{align*}
Therefore, $W=W'\circ V'$. We conclude that $W$ is input-degraded from $W'$.
\end{proof}

\subsection{A characterization in terms of randomized games}

A \emph{randomized game} is a 5-tuple $\mathcal{G}= (\mathcal{Z},\mathcal{X},\mathcal{Y}, l,W)$ such that $\mathcal{X},\mathcal{Y}$ and $\mathcal{Z}$ are finite sets, $l$ is a mapping from $\mathcal{Z}\times\mathcal{Y}$ to $\mathbb{R}$, and $W\in\DMC_{\mathcal{X},\mathcal{Y}}$. The mapping $l$ is called the \emph{payoff function} of the game $\mathcal{G}$, and the channel $W$ is called the \emph{randomizer} of $\mathcal{G}$. During the game, a player sees a symbol $z\in\mathcal{Z}$ and decides on a symbol $x\in\mathcal{X}$. A random symbol $y\in\mathcal{Y}$ is then randomly generated according to the conditional probability distribution $W(y|x)$ and the player gets the payoff $l(z,y)$.

A \emph{strategy} for the game $\mathcal{G}$ is a channel $S\in\DMC_{\mathcal{Z},\mathcal{X}}$. For every $z\in\mathcal{Z}$, the \emph{payoff gained by the strategy $S$ for $z$ in the game $\mathcal{G}$} is given by:
$$\$(z,S,\mathcal{G})=\sum_{\substack{x\in\mathcal{X},\\y\in\mathcal{Y}}}S(x|z)W(y|x)l(z,y).$$
The \emph{payoff vector gained by the strategy $S$ in the game $\mathcal{G}$} is given by:
$$\vec{\$}(S,\mathcal{G})=\big(\$(z,S,\mathcal{G})\big)_{z\in\mathcal{Z}}\in\mathbb{R}^{\mathcal{Z}}.$$
It is easy to see that for every $\alpha\in[0,1]$ and every $S_1,S_2\in\DMC_{\mathcal{Z},\mathcal{X}}$, we have $$\vec{\$}(\alpha S_1+(1-\alpha)S_2,\mathcal{G})=\alpha\vec{\$}(S_1,\mathcal{G})+(1-\alpha)\vec{\$}(S_2,\mathcal{G}).$$
The \emph{achievable payoff region for the game $\mathcal{G}$} is given by:
$$\$_{\ach}(\mathcal{G})=\Big\{\vec{\$}(S,\mathcal{G}):\; S\in{\DMC}_{\mathcal{Z},\mathcal{X}}\Big\}\subset \mathbb{R}^{\mathcal{Z}}.$$
Clearly, $\$_{\ach}(\mathcal{G})$ is a convex subset of $\mathbb{R}^{\mathcal{Z}}$. Moreover, since $\DMC_{\mathcal{Z},\mathcal{X}}$ is compact and since the mapping $S\rightarrow \vec{\$}(S,\mathcal{G})$ is a continuous mapping from $\DMC_{\mathcal{Z},\mathcal{X}}$ to $\mathbb{R}^{\mathcal{Z}}$, the region $\$_{\ach}(\mathcal{G})$ is a compact subset of $\mathbb{R}^{\mathcal{Z}}$.

The \emph{average payoff for the strategy $S\in\DMC_{\mathcal{Z},\mathcal{X}}$ for the game $\mathcal{G}$} is given by:
$$\hat{\$}(S,\mathcal{G})=\frac{1}{|\mathcal{Z}|}\sum_{z\in\mathcal{Z}} \$(z,S,\mathcal{G})= \sum_{\substack{z\in\mathcal{Z},\\x\in\mathcal{X},\\y\in\mathcal{Y}}}\frac{1}{|\mathcal{Z}|}S(x|z)W(y|x)l(z,y).$$
The \emph{optimal average payoff for the game $\mathcal{G}$} is given by
$$\$_{\opt}(\mathcal{G})= \sup_{S\in\DMC_{\mathcal{Z},\mathcal{X}}} \hat{\$}(S,\mathcal{G}).$$
Note that we can always find an optimal strategy that is deterministic.

The following theorem provides a characterization of input-degradedness that is similar to the famous Blackwell-Sherman-Stein theorem \cite{blackwell1951}, \cite{Sherman}, \cite{Stein}.
\begin{mythe}
\label{theGameOperatInputDegrad}
Let $\mathcal{X},\mathcal{X}'$ and $\mathcal{Y}$ be three finite sets. Let $W\in\DMC_{\mathcal{X},\mathcal{Y}}$ and $W'\in\DMC_{\mathcal{X}',\mathcal{Y}}$. The following conditions are equivalent:
\begin{itemize}
\item[(a)] $W$ is input-degraded from $W'$.
\item[(b)] For every finite set $\mathcal{Z}$ and every payoff function $l:\mathcal{Z}\times\mathcal{Y}\rightarrow\mathbb{R}$, we have
$$\$_{\ach}(\mathcal{Z},\mathcal{X},\mathcal{Y},l,W)\subset \$_{\ach}(\mathcal{Z},\mathcal{X}',\mathcal{Y},l,W').$$
\item[(c)] For every finite set $\mathcal{Z}$ and every payoff function $l:\mathcal{Z}\times\mathcal{Y}\rightarrow\mathbb{R}$, we have
$$\$_{\opt}(\mathcal{Z},\mathcal{X},\mathcal{Y},l,W)\leq \$_{\opt}(\mathcal{Z},\mathcal{X}',\mathcal{Y},l,W').$$
\end{itemize}
\end{mythe}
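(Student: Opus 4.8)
The plan is to prove the cyclic chain of implications (a) $\Rightarrow$ (b) $\Rightarrow$ (c) $\Rightarrow$ (a), so that all three become equivalent. The genuine content sits in the last implication; the first two are essentially bookkeeping.

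For (a) $\Rightarrow$ (b), I would start from $W=W'\circ V'$ for some $V'\in\DMC_{\mathcal{X},\mathcal{X}'}$ and show that every strategy for the randomizer $W$ can be replaced by a composed strategy for $W'$ achieving the identical payoff vector. Fix a finite set $\mathcal{Z}$ and a payoff $l$, and write $\mathcal{G}=(\mathcal{Z},\mathcal{X},\mathcal{Y},l,W)$ and $\mathcal{G}'=(\mathcal{Z},\mathcal{X}',\mathcal{Y},l,W')$. Given $S\in\DMC_{\mathcal{Z},\mathcal{X}}$, set $S'=V'\circ S\in\DMC_{\mathcal{Z},\mathcal{X}'}$. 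Expanding $(W'\circ V')(y|x)=\sum_{x'}W'(y|x')V'(x'|x)$ inside the definition of $\$(z,S,\mathcal{G})$ and regrouping the sum over $x$ into $(V'\circ S)(x'|z)$ gives $\$(z,S,\mathcal{G})=\$(z,S',\mathcal{G}')$ for every $z\in\mathcal{Z}$, hence $\vec{\$}(S,\mathcal{G})=\vec{\$}(S',\mathcal{G}')$ and therefore $\$_{\ach}(\mathcal{G})\subset\$_{\ach}(\mathcal{G}')$.

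For (b) $\Rightarrow$ (c), the key observation is that the average payoff $\hat{\$}(S,\mathcal{G})$ is the value of the fixed linear functional $\vec{v}\mapsto\frac{1}{|\mathcal{Z}|}\sum_{z}v_z$ applied to the payoff vector $\vec{\$}(S,\mathcal{G})$. Thus $\$_{\opt}(\mathcal{G})$ equals the maximum of this linear functional over the compact set $\$_{\ach}(\mathcal{G})$, and maximizing a fixed linear functional over a larger set can only increase its value; the inclusion in (b) therefore forces the inequality in (c).

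The hard part is (c) $\Rightarrow$ (a), which I would establish by contraposition using Proposition \ref{propCharacInpDegrad} and a separation argument. If $W$ is not input-degraded from $W'$, then $\conv(\{W_x:x\in\mathcal{X}\})\not\subset\conv(\{W'_{x'}:x'\in\mathcal{X}'\})$; since the latter set is convex, this can happen only if some generator $W_{x_0}$ fails to lie in $\conv(\{W'_{x'}:x'\in\mathcal{X}'\})$ (verifying this reduction, via convexity, is the one spot deserving care). As $W_{x_0}$ is a single point and $\conv(\{W'_{x'}\})$ is a compact convex subset of $\mathbb{R}^{\mathcal{Y}}$ not containing it, the separating hyperplane theorem yields a vector $c\in\mathbb{R}^{\mathcal{Y}}$ with $\langle c,W_{x_0}\rangle>\max_{x'\in\mathcal{X}'}\langle c,W'_{x'}\rangle$, where I use that the supremum of a linear functional over the convex hull of a finite set is attained at a generator. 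I would then turn this separator into a game violating (c): take $\mathcal{Z}=\{z_0\}$ a singleton and set $l(z_0,y)=c(y)$. With a one-element $\mathcal{Z}$ the average payoff coincides with $\sum_{x}S(x|z_0)\langle c,W_x\rangle$, so $\$_{\opt}(\mathcal{Z},\mathcal{X},\mathcal{Y},l,W)=\max_{x}\langle c,W_x\rangle\geq\langle c,W_{x_0}\rangle$ while $\$_{\opt}(\mathcal{Z},\mathcal{X}',\mathcal{Y},l,W')=\max_{x'}\langle c,W'_{x'}\rangle$. The separation then gives $\$_{\opt}(\mathcal{Z},\mathcal{X},\mathcal{Y},l,W)>\$_{\opt}(\mathcal{Z},\mathcal{X}',\mathcal{Y},l,W')$, contradicting (c). The main obstacle is thus isolating the excluded generator and invoking strict separation; once the separating functional is in hand, the singleton game makes the violation transparent.
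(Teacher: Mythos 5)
Your proof is correct, and while (a)~$\Rightarrow$~(b) and (b)~$\Rightarrow$~(c) match the paper's argument essentially verbatim (composing strategies with $V'$, then optimizing a fixed linear functional over nested achievable regions), your (c)~$\Rightarrow$~(a) takes a genuinely different route. The paper closes the cycle by reduction to Theorem~\ref{theCharacInputDegrad}: it constructs, for each prior $p$ and decoder $D$, the payoff $l(u,y)=|\mathcal{U}|p(u)D(u|y)$, verifies $P_c(p,W,D)=\$_{\opt}(\mathcal{G})$ and $P_c(p,W',D)=\$_{\opt}(\mathcal{G}')$, and then invokes that theorem --- whose own proof is the heavy part, requiring the minimax theorem, a sequence of priors $p_n$, an auxiliary channel $W_0$, and a compactness/subsequence argument. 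You instead argue by contraposition directly from Proposition~\ref{propCharacInpDegrad}: if $\conv(\{W_x\})\not\subset\conv(\{W'_{x'}\})$ then (by convexity of the target set) some generator $W_{x_0}$ lies outside the compact convex polytope $\conv(\{W'_{x'}\})$, strict separation produces $c\in\mathbb{R}^{\mathcal{Y}}$ with $\langle c,W_{x_0}\rangle>\max_{x'}\langle c,W'_{x'}\rangle$, and the singleton game $\mathcal{Z}=\{z_0\}$, $l(z_0,y)=c(y)$ turns this into a violation of (c), since $\$_{\opt}$ for a singleton game is exactly the maximum of $\langle c,\cdot\rangle$ over the generators. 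All the steps you flag check out: the reduction to a single excluded generator is valid, the supremum over a polytope is attained at a vertex/generator, and the inequality chain $\$_{\opt}(\mathcal{G})\geq\langle c,W_{x_0}\rangle>\$_{\opt}(\mathcal{G}')$ is sound. What the trade-off buys: your argument is more elementary and self-contained (no minimax theorem, no limiting sequences --- this is the classical Blackwell-style separation proof), whereas the paper's reduction, though it rests on a harder theorem, additionally exhibits the identity $P_c(p,W,D)=\$_{\opt}(\mathcal{G})$, tying the game-theoretic characterization to the decoder-based one of Theorem~\ref{theCharacInputDegrad}; that link is lost in your route. Note also that your separation argument essentially reproves the hard direction of Theorem~\ref{theCharacInputDegrad} by a shorter path, since (c)~$\Rightarrow$~(a) combined with the easy implication (a)~$\Rightarrow$ ``$P_c$ dominated'' $\Rightarrow$ (c) recovers that theorem's converse.
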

\begin{proof}
Assume that (a) is true. There exists $V'\in\DMC_{\mathcal{X}',\mathcal{X}}$ such that $W=W'\circ V'$. Fix a finite set $\mathcal{Z}$ and a payoff function $l:\mathcal{Z}\times\mathcal{Y}\rightarrow\mathbb{R}$. Define $\mathcal{G}=(\mathcal{Z},\mathcal{X},\mathcal{Y},l,W)$ and $\mathcal{G}'=(\mathcal{Z},\mathcal{X}',\mathcal{Y},l,W')$.

Fix $\vec{v}=(v_z)_{z\in\mathcal{Z}}\in \$_{\ach}(\mathcal{G})$. There exists $S\in\DMC_{\mathcal{Z},\mathcal{X}}$ such that $(v_z)_{z\in\mathcal{Z}}=\vec{v}=\big(\$(z,S,\mathcal{G})\big)_{z\in\mathcal{Z}}$. Let $S'=V'\circ S$. For every $z\in\mathcal{Z}$, we have:
\begin{align*}
\$(z,S',\mathcal{G}')&=\sum_{\substack{x'\in\mathcal{X}',\\y\in\mathcal{Y}}}S'(x'|z)W'(y|x')l(z,y)=\sum_{\substack{x'\in\mathcal{X}',\\y\in\mathcal{Y}}}\Bigg(\sum_{x\in\mathcal{X}} V'(x'|x)S(x|z)\Bigg) W'(y|x')l(z,y)\\
&=\sum_{\substack{x\in\mathcal{X},\\y\in\mathcal{Y}}} S(x|z) \Big(\sum_{x'\in\mathcal{X}'}W'(y|x')V'(x'|x)\Big)l(z,y)=\sum_{\substack{x\in\mathcal{X},\\y\in\mathcal{Y}}} S(x|z)W(y|x)l(z,y)=\$(z,S,\mathcal{G}).
\end{align*}
Therefore, $\vec{v}= \vec{\$}(S',\mathcal{G}')\in \$_{\ach}(\mathcal{G}')$. Since this is true for every $\vec{v}\in \$_{\ach}(\mathcal{G})$, we have $\$_{\ach}(\mathcal{G})\subset \$_{\ach}(\mathcal{G}')$. We conclude that (a) implies (b).

Now assume that (b) is true. Fix a finite set $\mathcal{Z}$ and a payoff function $l:\mathcal{Z}\times\mathcal{Y}\rightarrow\mathbb{R}$. Define $\mathcal{G}=(\mathcal{Z},\mathcal{X},\mathcal{Y},l,W)$ and $\mathcal{G}'=(\mathcal{Z},\mathcal{X}',\mathcal{Y},l,W')$. We have $\$_{\ach}(\mathcal{G})\subset \$_{\ach}(\mathcal{G}')$. Therefore,
\begin{align*}
\$_{\opt}(\mathcal{G})=\sup_{(v_z)_{z\in\mathcal{Z}} \in \$_{\ach}(\mathcal{G})} \frac{1}{|\mathcal{Z}|}\sum_{z\in\mathcal{Z}} v_z \stackrel{(\ast)}{\leq} \sup_{(v_z')_{z\in\mathcal{Z}} \in \$_{\ach}(\mathcal{G}')} \frac{1}{|\mathcal{Z}|} \sum_{z\in\mathcal{Z}} v_z' = \$_{\opt}(\mathcal{G}'),
\end{align*}
where $(\ast)$ follows from the fact that $\$_{\ach}(\mathcal{G})\subset \$_{\ach}(\mathcal{G}')$. This shows that (b) implies (c).

Now assume that (c) is true. Fix a finite set $\mathcal{U}$, $p\in\Delta_{\mathcal{U}}$ and $D\in\DMC_{\mathcal{Y},\mathcal{U}}$. Define the payoff function $l:\mathcal{U}\times\mathcal{Y}\rightarrow\mathbb{R}$ as $l(u,y)=|\mathcal{U}|p(u)D(u|y)$. Define the randomized games $\mathcal{G}=(\mathcal{U},\mathcal{X},\mathcal{Y},W,l)$ and $\mathcal{G}'=(\mathcal{U},\mathcal{X}',\mathcal{Y},W',l)$. We have:
\begin{align*}
P_c(p,W,D)&=\sup_{E\in\DMC_{\mathcal{U},\mathcal{X}}}\sum_{\substack{u\in \mathcal{U},\\x\in\mathcal{X},\\y\in\mathcal{Y}}}p(u) E(x|u)W(y|x)D(u|y)=\sup_{E\in\DMC_{\mathcal{U},\mathcal{X}}}\sum_{\substack{u\in \mathcal{U},\\x\in\mathcal{X},\\y\in\mathcal{Y}}}\frac{1}{|\mathcal{U}|} E(x|u)W(y|x)l(u,y)\\
&=\sup_{E\in\DMC_{\mathcal{U},\mathcal{X}}}\hat{\$}(E,\mathcal{G})=\$_{\opt}(\mathcal{G}).
\end{align*}

Similarly, we can show that $P_c(p,W',D)=\$_{\opt}(\mathcal{G}')$. Since we assumed that (c) is true, we have $\$_{\opt}(\mathcal{G})\leq \$_{\opt}(\mathcal{G}')$. Therefore, for every finite set $\mathcal{U}$, every $p\in\Delta_{\mathcal{U}}$ and every $D\in\DMC_{\mathcal{Y},\mathcal{U}}$, we have $P_c(p,W,D)\leq P_c(p,W',D)$. Theorem \ref{theCharacInputDegrad} now implies that $W$ is input-degraded from $W'$, hence (c) implies (a). We conclude that (a), (b) and (c) are equivalent.
\end{proof}

\section{Space of input-equivalent channels from $\mathcal{X}$ to $\mathcal{Y}$}

\subsection{The $\DMC_{\mathcal{X},\mathcal{Y}}^{(i)}$ space}

\label{subsecDMCXYi}
Let $\mathcal{X}$ and $\mathcal{Y}$ be two finite sets. Define the equivalence relation $R_{\mathcal{X},\mathcal{Y}}^{(i)}$ on $\DMC_{\mathcal{X},\mathcal{Y}}$ as follows:
$$WR_{\mathcal{X},\mathcal{Y}}^{(i)}W'\;\;\Leftrightarrow\;\;W\;\text{is input-equivalent to}\;W'.$$

\begin{mydef}
The space of input-equivalent channels with input alphabet $\mathcal{X}$ and output alphabet $\mathcal{Y}$ is the quotient of the space of channels from $\mathcal{X}$ to $\mathcal{Y}$ by the input-equivalence relation:
$$\textstyle\DMC_{\mathcal{X},\mathcal{Y}}^{(i)}=\DMC_{\mathcal{X},\mathcal{Y}}/R_{\mathcal{X},\mathcal{Y}}^{(i)}.$$
We define the topology $\mathcal{T}_{\mathcal{X},\mathcal{Y}}^{(i)}$ on $\DMC_{\mathcal{X},\mathcal{Y}}^{(i)}$ as the quotient topology $\mathcal{T}_{\mathcal{X},\mathcal{Y}}/R_{\mathcal{X},\mathcal{Y}}^{(i)}$.
\end{mydef}

Due to proposition \ref{propCharacInputEquiv}, we can define the \emph{input-equivalence characteristic of $\hat{W}\in\DMC_{\mathcal{X},\mathcal{Y}}^{(i)}$} as $\CE(\hat{W}):=\CE(W')$ for any $W'\in\hat{W}$. Define $\conv(\hat{W}):=\conv(\CE(\hat{W}))$. It is easy to see that $\conv(\hat{W})=\conv(\{W_x':\;x\in\mathcal{X}\})$ for any $W'\in\hat{W}$.

Let $A$ and $B$ be two sets. A \emph{coupling} of $A$ and $B$ is a subset $R$ of $A\times B$ such that
$$\{a\in A:\;\exists b\in B,\;(a,b)\in R\}=A,$$
and
$$\{b\in B:\;\exists a\in A,\;(a,b)\in R\}=B.$$
We denote the set of couplings of $A$ and $B$ as $\mathcal{R}(A,B)$.

We define the \emph{similarity distance} on $\DMC_{\mathcal{X},\mathcal{Y}}^{(i)}$ as follows:
\begin{align*}
d_{\mathcal{X},\mathcal{Y}}^{(i)}(\hat{W}_1,\hat{W}_2)&=\inf_{R\in \mathcal{R}(\conv(\hat{W}_1),\conv(\hat{W}_2))}\sup_{(P_1,P_2)\in R} \|P_1-P_2\|_{TV}\\
&=\frac{1}{2}\inf_{R\in \mathcal{R}(\conv(\hat{W}_1),\conv(\hat{W}_2))}\sup_{(P_1,P_2)\in R} \sum_{y\in\mathcal{Y}} |P_1(y)-P_2(y)|.
\end{align*}

\begin{myprop}
$(\DMC_{\mathcal{X},\mathcal{Y}}^{(i)},d_{\mathcal{X},\mathcal{Y}}^{(i)})$ is a metric space.
\end{myprop}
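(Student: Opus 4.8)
The plan is to show that the similarity distance coincides with the Hausdorff distance between the convex hulls, and then to deduce the metric axioms from the fact that the Hausdorff metric $d_H$ is already a metric on the space $\mathcal{K}(\Delta_{\mathcal{Y}})$ of compact subsets of $\Delta_{\mathcal{Y}}$. Write $A_i=\conv(\hat{W}_i)$. By Proposition \ref{propCharacInputEquiv} the assignment $\hat{W}\mapsto\conv(\hat{W})$ is well defined (independent of the representative) and injective on $\DMC_{\mathcal{X},\mathcal{Y}}^{(i)}$, since $\conv(\hat{W}_1)=\conv(\hat{W}_2)$ is equivalent to $\CE(\hat{W}_1)=\CE(\hat{W}_2)$, i.e. to $\hat{W}_1=\hat{W}_2$. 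Each $A_i$ is the convex hull of a finite subset of the compact set $\Delta_{\mathcal{Y}}$, hence compact, so $d_H(A_1,A_2)$ is defined.

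First I would prove $d_{\mathcal{X},\mathcal{Y}}^{(i)}(\hat{W}_1,\hat{W}_2)=d_H(A_1,A_2)$. For the inequality $d_H\le d_{\mathcal{X},\mathcal{Y}}^{(i)}$, fix any coupling $R\in\mathcal{R}(A_1,A_2)$. Because the projection of $R$ onto $A_1$ is surjective, every $P_1\in A_1$ is paired with some $P_2\in A_2$ with $(P_1,P_2)\in R$, so $\inf_{P_2\in A_2}\|P_1-P_2\|_{TV}\le\sup_{(P_1,P_2)\in R}\|P_1-P_2\|_{TV}$; taking the supremum over $P_1$ and the symmetric bound over $A_2$ gives $d_H(A_1,A_2)\le\sup_{(P_1,P_2)\in R}\|P_1-P_2\|_{TV}$, and taking the infimum over $R$ yields the claim. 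For the reverse inequality, let $c=d_H(A_1,A_2)$ and set $R^\ast=\{(P_1,P_2)\in A_1\times A_2:\;\|P_1-P_2\|_{TV}\le c\}$. Compactness of $A_2$ guarantees that the distance $\inf_{P_2}\|P_1-P_2\|_{TV}\le c$ is attained, so for every $P_1\in A_1$ there is $P_2$ with $(P_1,P_2)\in R^\ast$, and symmetrically for $A_2$; hence $R^\ast$ is a coupling with $\sup_{(P_1,P_2)\in R^\ast}\|P_1-P_2\|_{TV}\le c$, giving $d_{\mathcal{X},\mathcal{Y}}^{(i)}\le c$.

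With the identification in hand, the metric axioms are immediate: $d_{\mathcal{X},\mathcal{Y}}^{(i)}$ is nonnegative and symmetric because $d_H$ is; it satisfies the triangle inequality because $d_H$ does; and $d_{\mathcal{X},\mathcal{Y}}^{(i)}(\hat{W}_1,\hat{W}_2)=0$ forces $d_H(A_1,A_2)=0$, hence $A_1=A_2$ (two compact, thus closed, sets at Hausdorff distance $0$ coincide), which by injectivity of $\hat{W}\mapsto\conv(\hat{W})$ means $\hat{W}_1=\hat{W}_2$. Conversely $\hat{W}_1=\hat{W}_2$ gives $A_1=A_2$, and the diagonal coupling $\{(P,P):P\in A_1\}$ witnesses distance $0$.

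The main obstacle is the identity $d_{\mathcal{X},\mathcal{Y}}^{(i)}=d_H$, specifically the direction $d_{\mathcal{X},\mathcal{Y}}^{(i)}\le d_H$: one must produce a single coupling realizing, up to the Hausdorff bound, a matching simultaneously for all points, and this is exactly where compactness of the convex hulls is used to turn the distances $\inf_{P_2}\|P_1-P_2\|_{TV}$ into attained nearest-point pairings. If one prefers to avoid the Hausdorff identification, the triangle inequality can instead be obtained directly by composing couplings: given $R_{12}\in\mathcal{R}(A_1,A_2)$ and $R_{23}\in\mathcal{R}(A_2,A_3)$, the relation $R_{13}=\{(P_1,P_3):\;\exists P_2,\ (P_1,P_2)\in R_{12},\ (P_2,P_3)\in R_{23}\}$ is a coupling of $A_1$ and $A_3$ whose width is bounded by $\sup R_{12}+\sup R_{23}$ via the triangle inequality for $\|\cdot\|_{TV}$, and an $\epsilon$-optimization then yields the claim.
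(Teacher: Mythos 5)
Your proof is correct and takes essentially the same approach as the paper: both establish the identity $d_{\mathcal{X},\mathcal{Y}}^{(i)}(\hat{W}_1,\hat{W}_2)=d_H\big(\conv(\hat{W}_1),\conv(\hat{W}_2)\big)$ via couplings, using compactness of the convex hulls to attain nearest points, and then inherit the metric axioms from the Hausdorff metric. The only cosmetic differences are your coupling $R^\ast$ (all pairs within distance $c$) in place of the paper's union of two nearest-point graphs $R_0$, and your explicit appeal to the injectivity of $\hat{W}\mapsto\conv(\hat{W})$ for the separation axiom, a point the paper leaves implicit.
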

\begin{proof}
We will show that $d_{\mathcal{X},\mathcal{Y}}^{(i)}(\hat{W}_1, \hat{W}_2)=d_H\big(\conv(\hat{W}_1),\conv(\hat{W}_2)\big)$, where $d_H$ is the Hausdorff metric on $\mathcal{K}(\Delta_{\mathcal{Y}})$ corresponding to the total variation distance on $\Delta_{\mathcal{Y}}$. Define $K_1=\conv(\hat{W}_1)$ and $K_2=\conv(\hat{W}_2)$, and let $R\in\mathcal{R}(K_1,K_2)$. For every $(P_1,P_2)\in R$, we have:
$$\|P_1-P_2\|_{TV}\geq \inf_{P_2'\in K_2} \|P_1-P_2'\|_{TV}.$$
Therefore,
$$\sup_{(P_1,P_2)\in R} \|P_1-P_2\|_{TV}\geq \sup_{P_1'\in K_1} \inf_{P_2'\in K_2} \|P_1'-P_2'\|_{TV}.$$
Similarly,
$$\sup_{(P_1,P_2)\in R} \|P_1-P_2\|_{TV}\geq \sup_{P_2'\in K_2} \inf_{P_1'\in K_1} \|P_1'-P_2'\|_{TV}.$$
Hence,
\begin{align*}
\sup_{(P_1,P_2)\in R} \|P_1-P_2\|_{TV}&\geq \max\left\{ \sup_{P_1'\in K_1} \inf_{P_2'\in K_2} \|P_1'-P_2'\|_{TV},\sup_{P_2'\in K_2} \inf_{P_1'\in K_1} \|P_1'-P_2'\|_{TV}\right\}\\
&=d_H(K_1,K_2).
\end{align*}
We conclude that
$$d_{\mathcal{X},\mathcal{Y}}^{(i)}(\hat{W}_1,\hat{W}_2)=\inf_{R\in\mathcal{R}(K_1,K_2)}\sup_{(P_1,P_2)\in R}\|P_1-P_2\|_{TV}\geq d_H(K_1,K_2).$$
Let $P_1\in K_1$. Since $K_2$ is compact, there exists $\tilde{P}_2(P_1)\in K_2$ such that $$\|P_1-\tilde{P}_2(P_1)\|_{TV}=\inf_{P_2\in K_2}\|P_1-P_2\|_{TV}.$$ Similarly, for every $P_2\in K_2$, there exists $\tilde{P}_1(P_2)\in K_1$ such that $\displaystyle\|P_2-\tilde{P}_1(P_2)\|_{TV}=\inf_{P_1\in K_1}\|P_1-P_2\|_{TV}$. Define the coupling $R_0\in\mathcal{R}(K_1,K_2)$ as
$$R_0=\{(P_1,\tilde{P}_2(P_1)):\;P_1\in K_1\}\cup \{(\tilde{P}_1(P_2),P_2):\;P_2\in K_2\}.$$
We have:
\begin{align*}
d_{\mathcal{X},\mathcal{Y}}^{(i)}(\hat{W}_1,\hat{W}_2)&=\inf_{R\in\mathcal{R}(K_1,K_2)}\sup_{(P_1,P_2)\in R}\|P_1-P_2\|_{TV}\leq \sup_{(P_1,P_2)\in R_0}\|P_1-P_2\|_{TV}\\
&=\max\left\{\sup_{P_1\in K_1}\|P_1-\tilde{P}_2(P_1)\|, \sup_{P_2\in K_2}\|P_2-\tilde{P}_1(P_2)\|\right\}=d_H(K_1,K_2).
\end{align*}
We conclude that $d_{\mathcal{X},\mathcal{Y}}^{(i)}(\hat{W}_1, \hat{W}_2)=d_H(K_1,K_2)=d_H\big(\conv(\hat{W}_1),\conv(\hat{W}_2)\big)$, hence $d_{\mathcal{X},\mathcal{Y}}^{(i)}$ is a metric.
\end{proof}

\begin{myprop}
\label{propReldXYdXYi}
Let $W,W'\in\DMC_{\mathcal{X},\mathcal{Y}}$ and let $\hat{W}$ and $\hat{W}'$ be the $R_{\mathcal{X},\mathcal{Y}}^{(i)}$-equivalence classes of $W$ and $W'$ respectively. We have $d_{\mathcal{X},\mathcal{Y}}^{(i)}(\hat{W},\hat{W}')\leq d_{\mathcal{X},\mathcal{Y}}(W,W')$.
\end{myprop}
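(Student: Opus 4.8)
The plan is to reduce the statement to the Hausdorff-distance characterization established in the proof of the preceding proposition, and then bound $d_H$ directly. Write $K_1=\conv(\hat{W})=\conv(\{W_x:\;x\in\mathcal{X}\})$ and $K_2=\conv(\hat{W}')=\conv(\{W_x':\;x\in\mathcal{X}\})$. The previous proposition showed that $d_{\mathcal{X},\mathcal{Y}}^{(i)}(\hat{W},\hat{W}')=d_H(K_1,K_2)$, where $d_H$ is the Hausdorff metric on $\mathcal{K}(\Delta_{\mathcal{Y}})$ for the total variation distance. Hence it suffices to prove $d_H(K_1,K_2)\leq \delta$, where I set $\delta:=d_{\mathcal{X},\mathcal{Y}}(W,W')=\max_{x\in\mathcal{X}}\|W_x-W_x'\|_{TV}$.

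The key observation is that $K_1$ and $K_2$ are parametrized by the same simplex $\Delta_{\mathcal{X}}$ through the same convex-combination coefficients. First I would take an arbitrary $P_1\in K_1$ and write $P_1=\sum_{x\in\mathcal{X}}\lambda_x W_x$ for some $(\lambda_x)_{x\in\mathcal{X}}\in\Delta_{\mathcal{X}}$. Choosing the matched point $P_2=\sum_{x\in\mathcal{X}}\lambda_x W_x'\in K_2$ and applying the triangle inequality and homogeneity of the total variation norm yields
\[
\|P_1-P_2\|_{TV}=\Big\|\sum_{x\in\mathcal{X}}\lambda_x(W_x-W_x')\Big\|_{TV}\leq \sum_{x\in\mathcal{X}}\lambda_x\|W_x-W_x'\|_{TV}\leq \delta.
\]
Thus $\inf_{P_2\in K_2}\|P_1-P_2\|_{TV}\leq\delta$ for every $P_1\in K_1$, so $\sup_{P_1\in K_1}\inf_{P_2\in K_2}\|P_1-P_2\|_{TV}\leq\delta$. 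By the symmetric argument (matching an arbitrary $P_2=\sum_x\lambda_x W_x'\in K_2$ with $\sum_x\lambda_x W_x\in K_1$), the other sup-inf term is also at most $\delta$, and therefore $d_H(K_1,K_2)\leq\delta$, which is exactly the claimed inequality.

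There is essentially no hard step here: the entire argument rests on the convexity of the total variation norm together with the fact that $K_1$ and $K_2$ carry the same convex-combination structure indexed by $\mathcal{X}$. The only point worth a moment's care is to invoke the identity $d_{\mathcal{X},\mathcal{Y}}^{(i)}=d_H$ from the preceding proposition rather than manipulating the infimum over couplings directly; alternatively one could exhibit the coupling $R=\{(\sum_x\lambda_x W_x,\sum_x\lambda_x W_x'):\;(\lambda_x)_{x\in\mathcal{X}}\in\Delta_{\mathcal{X}}\}\in\mathcal{R}(K_1,K_2)$ and verify the same bound on $\sup_{(P_1,P_2)\in R}\|P_1-P_2\|_{TV}$, but this is slightly more delicate since the convex-combination representation of a given point of $K_1$ need not be unique, so the matched point $P_2$ is attached to a representation rather than to $P_1$ itself.
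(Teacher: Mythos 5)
Your proof is correct, and it takes a slightly different route from the paper's: the paper proves this proposition by exactly the method you set aside as an alternative, namely it exhibits the coupling $R_0=\left\{\left(\sum_{x\in\mathcal{X}}\lambda_x W_x,\sum_{x\in\mathcal{X}}\lambda_x W_x'\right):\;(\lambda_x)_{x\in\mathcal{X}}\in\Delta_{\mathcal{X}}\right\}$, bounds $\sup_{(P_1,P_2)\in R_0}\|P_1-P_2\|_{TV}\leq d_{\mathcal{X},\mathcal{Y}}(W,W')$, and concludes from the definition of $d_{\mathcal{X},\mathcal{Y}}^{(i)}$ as an infimum over couplings. You instead invoke the identity $d_{\mathcal{X},\mathcal{Y}}^{(i)}=d_H$ established in the proof of the preceding proposition and bound the two sup-inf terms of the Hausdorff distance separately; the core computation --- matching $\sum_x\lambda_x W_x$ with $\sum_x\lambda_x W_x'$ and using convexity of the total variation norm --- is identical in both. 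Your route avoids having to verify that $R_0$ is a coupling; the paper's route keeps the proposition self-contained relative to the definition of $d_{\mathcal{X},\mathcal{Y}}^{(i)}$, so it would survive even if one had not recorded the $d_H$ characterization. One correction to your closing remark: the coupling route is not actually delicate, and non-uniqueness of convex representations causes no trouble. A coupling is a relation, not a function, so $R_0$ may legitimately couple a single $P_1$ to several partners (one per representation); the bound $\|P_1-P_2\|_{TV}\leq d_{\mathcal{X},\mathcal{Y}}(W,W')$ holds for every pair in $R_0$ precisely because, by the definition of $R_0$, each pair comes equipped with at least one shared coefficient vector. Nothing in the paper's argument needs the matched point to be attached to $P_1$ itself.
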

\begin{proof}
Define $R_0\subset \conv(\hat{W})\times \conv(\hat{W}')$ as follows:
$$R_0=\left\{\left(\sum_{x\in\mathcal{X}}\lambda_x W_x,\sum_{x\in\mathcal{X}}\lambda_x W_x'\right):\; \sum_{x\in\mathcal{X}}\lambda_x=1,\;\text{and}\;\lambda_x\geq 0,\;\forall x\in \mathcal{X}\right\}.$$
Clearly, $R_0$ is a coupling of $\conv(\hat{W})$ and $\conv(\hat{W}')$. For every $(P_1,P_2)\in R_0$, there exists $(\lambda_x)_{x\in\mathcal{X}}\in [0,1]^{\mathcal{X}}$ such that $\displaystyle\sum_{x\in\mathcal{X}}\lambda_x=1$, $P_1=\displaystyle \sum_{x\in\mathcal{X}}\lambda_x W_x$ and $P_2=\displaystyle \sum_{x\in\mathcal{X}}\lambda_x W_x'$. We have:
\begin{align*}
\|P_1-P_2\|_{TV}&=\left\|\left(\sum_{x\in\mathcal{X}}\lambda_x W_x\right)- \left(\sum_{x\in\mathcal{X}}\lambda_x W_x'\right)\right\|_{TV}=\left\|\sum_{x\in\mathcal{X}}\lambda_x (W_x - W_x')\right\|_{TV}\\
&\leq \sum_{x\in\mathcal{X}} \lambda_x \|W_x-W_x'\|_{TV}\leq \sup_{x\in\mathcal{X}}\|W_x-W_x'\|_{TV}=d_{\mathcal{X},\mathcal{Y}}(W,W').
\end{align*}
Therefore,
\begin{align*}
d_{\mathcal{X},\mathcal{Y}}^{(i)}(\hat{W},\hat{W}')&=\inf_{R\in\mathcal{R}(\conv(\hat{W}),\conv(\hat{W}'))}\sup_{(P_1,P_2)\in R}\|P_1-P_2\|_{TV}\leq \sup_{(P_1,P_2)\in R_0}\|P_1-P_2\|_{TV}\leq d_{\mathcal{X},\mathcal{Y}}(W,W').
\end{align*}
\end{proof}

\begin{mythe}
\label{theDMCXYi}
The topology induced by $d_{\mathcal{X},\mathcal{Y}}^{(i)}$ on $\DMC_{\mathcal{X},\mathcal{Y}}^{(i)}$ is the same as the quotient topology $\mathcal{T}_{\mathcal{X},\mathcal{Y}}^{(i)}$. Moreover, $(\DMC_{\mathcal{X},\mathcal{Y}}^{(i)},d_{\mathcal{X},\mathcal{Y}}^{(i)})$ is compact and path-connected.
\end{mythe}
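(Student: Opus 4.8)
The plan is to identify the metric topology $\mathcal{T}_d$ induced by $d_{\mathcal{X},\mathcal{Y}}^{(i)}$ with the quotient topology $\mathcal{T}_{\mathcal{X},\mathcal{Y}}^{(i)}$ by establishing the two inclusions separately and then closing the loop with a compactness argument. Throughout I write $\Proj=\Proj_{R_{\mathcal{X},\mathcal{Y}}^{(i)}}$ for the projection onto the equivalence classes.

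First I would observe that Proposition \ref{propReldXYdXYi} says precisely that $\Proj$ is $1$-Lipschitz from $(\DMC_{\mathcal{X},\mathcal{Y}},d_{\mathcal{X},\mathcal{Y}})$ to $(\DMC_{\mathcal{X},\mathcal{Y}}^{(i)},d_{\mathcal{X},\mathcal{Y}}^{(i)})$, since $d_{\mathcal{X},\mathcal{Y}}^{(i)}(\Proj(W),\Proj(W'))\leq d_{\mathcal{X},\mathcal{Y}}(W,W')$. A Lipschitz map is continuous, so $\Proj$ is continuous when the quotient is endowed with the metric topology $\mathcal{T}_d$. Because the quotient topology $\mathcal{T}_{\mathcal{X},\mathcal{Y}}^{(i)}$ is by definition the finest topology making $\Proj$ continuous, any topology making $\Proj$ continuous is contained in it; applying this to $\mathcal{T}_d$ yields $\mathcal{T}_d\subseteq \mathcal{T}_{\mathcal{X},\mathcal{Y}}^{(i)}$.

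For the reverse inclusion I would invoke the compact-to-Hausdorff trick. Since $\Proj$ is continuous and surjective from the compact space $\DMC_{\mathcal{X},\mathcal{Y}}$ onto $(\DMC_{\mathcal{X},\mathcal{Y}}^{(i)},\mathcal{T}_{\mathcal{X},\mathcal{Y}}^{(i)})$, the latter is compact as the continuous image of a compact space. On the other hand $(\DMC_{\mathcal{X},\mathcal{Y}}^{(i)},\mathcal{T}_d)$ is a metric space and hence Hausdorff. The inclusion $\mathcal{T}_d\subseteq \mathcal{T}_{\mathcal{X},\mathcal{Y}}^{(i)}$ obtained above means exactly that the identity map $\iota$ from $(\DMC_{\mathcal{X},\mathcal{Y}}^{(i)},\mathcal{T}_{\mathcal{X},\mathcal{Y}}^{(i)})$ to $(\DMC_{\mathcal{X},\mathcal{Y}}^{(i)},\mathcal{T}_d)$ is continuous. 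A continuous bijection from a compact space onto a Hausdorff space is a homeomorphism, so $\iota$ is a homeomorphism and the two topologies coincide: $\mathcal{T}_{\mathcal{X},\mathcal{Y}}^{(i)}=\mathcal{T}_d$.

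The remaining assertions then follow immediately. Compactness of $(\DMC_{\mathcal{X},\mathcal{Y}}^{(i)},d_{\mathcal{X},\mathcal{Y}}^{(i)})$ has already been established in passing, since I showed $(\DMC_{\mathcal{X},\mathcal{Y}}^{(i)},\mathcal{T}_{\mathcal{X},\mathcal{Y}}^{(i)})$ is compact and the two topologies agree. For path-connectedness I would use that $\DMC_{\mathcal{X},\mathcal{Y}}$ is path-connected (recalled in the preliminaries): given two classes $\hat{W},\hat{W}'$, lifting to representatives $W,W'$, composing a path between $W$ and $W'$ with the continuous surjection $\Proj$ produces a path between $\hat{W}$ and $\hat{W}'$. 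The whole argument is soft point-set topology, and the only step I regard as the crux is the compact-to-Hausdorff homeomorphism theorem, which is what upgrades the single inclusion coming from Lipschitz continuity into a genuine equality of topologies.
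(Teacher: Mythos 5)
Your proposal is correct and follows essentially the same route as the paper: Proposition \ref{propReldXYdXYi} gives continuity of the projection into the metric space (hence $\mathcal{T}_d\subseteq\mathcal{T}_{\mathcal{X},\mathcal{Y}}^{(i)}$), and the reverse inclusion comes from compactness of the quotient against Hausdorffness of the metric topology. The only cosmetic difference is that you cite the compact-to-Hausdorff homeomorphism theorem where the paper unrolls its proof (closed-in-compact is compact, continuous images of compact sets are compact, compact-in-metric is closed), and you use the universal property of the quotient topology where the paper invokes its Lemma \ref{lemQuotientFunction}; these are equivalent steps.
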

\begin{proof}
Since $(\DMC_{\mathcal{X},\mathcal{Y}},d_{\mathcal{X},\mathcal{Y}})$ is compact and path-connected, the quotient space $(\DMC_{\mathcal{X},\mathcal{Y}}^{(i)},\mathcal{T}_{\mathcal{X},\mathcal{Y}}^{(i)})$ is compact and path-connected.

Define the mapping $\Proj:\DMC_{\mathcal{X},\mathcal{Y}}\rightarrow\DMC_{\mathcal{X},\mathcal{Y}}^{(i)}$ as $\Proj(W)=\hat{W}$, where $\hat{W}$ is the $R_{\mathcal{X},\mathcal{Y}}^{(i)}$-equivalence class of $W$. Proposition \ref{propReldXYdXYi} implies that $\Proj$ is a continuous mapping from $(\DMC_{\mathcal{X},\mathcal{Y}},d_{\mathcal{X},\mathcal{Y}})$ to $(\DMC_{\mathcal{X},\mathcal{Y}}^{(i)},d_{\mathcal{X},\mathcal{Y}}^{(i)})$. Since $\Proj(W)$ depends only on $\hat{W}$, Lemma \ref{lemQuotientFunction} implies that the transcendent mapping of $\Proj$ defined on the quotient space $(\DMC_{\mathcal{X},\mathcal{Y}}^{(i)},\mathcal{T}_{\mathcal{X},\mathcal{Y}}^{(i)})$ is continuous. But the transcendent mapping of $\Proj$ is nothing but the identity on $\DMC_{\mathcal{X},\mathcal{Y}}^{(i)}$. Therefore, the identity mapping $id$ on $\DMC_{\mathcal{X},\mathcal{Y}}^{(i)}$ is a continuous mapping from $(\DMC_{\mathcal{X},\mathcal{Y}}^{(i)},\mathcal{T}_{\mathcal{X},\mathcal{Y}}^{(i)})$ to $(\DMC_{\mathcal{X},\mathcal{Y}}^{(i)},d_{\mathcal{X},\mathcal{Y}}^{(i)})$.
For every subset $U$ of $\DMC_{\mathcal{X},\mathcal{Y}}^{(i)}$ we have:
\begin{itemize}
\item If $U$ is open in $(\DMC_{\mathcal{X},\mathcal{Y}}^{(i)},d_{\mathcal{X},\mathcal{Y}}^{(i)})$, then $U=id^{-1}(U)$ is open in $(\DMC_{\mathcal{X},\mathcal{Y}}^{(i)},\mathcal{T}_{\mathcal{X},\mathcal{Y}}^{(i)})$.
\item If $U$ is open in $(\DMC_{\mathcal{X},\mathcal{Y}}^{(i)},\mathcal{T}_{\mathcal{X},\mathcal{Y}}^{(i)})$, then its complement $U^c$ is closed in $(\DMC_{\mathcal{X},\mathcal{Y}}^{(i)},\mathcal{T}_{\mathcal{X},\mathcal{Y}}^{(i)})$ which is compact, hence $U^c$ is compact in $(\DMC_{\mathcal{X},\mathcal{Y}}^{(i)},\mathcal{T}_{\mathcal{X},\mathcal{Y}}^{(i)})$. This shows that $U^c=id(U^c)$ is a compact subset of $(\DMC_{\mathcal{X},\mathcal{Y}}^{(i)},d_{\mathcal{X},\mathcal{Y}}^{(i)})$. But $(\DMC_{\mathcal{X},\mathcal{Y}}^{(i)},d_{\mathcal{X},\mathcal{Y}}^{(i)})$ is a metric space, so $U^c$ is closed in $(\DMC_{\mathcal{X},\mathcal{Y}}^{(i)},d_{\mathcal{X},\mathcal{Y}}^{(i)})$. Therefore, $U$ is open $(\DMC_{\mathcal{X},\mathcal{Y}}^{(i)},d_{\mathcal{X},\mathcal{Y}}^{(i)})$.
\end{itemize}
We conclude that $(\DMC_{\mathcal{X},\mathcal{Y}}^{(i)},\mathcal{T}_{\mathcal{X},\mathcal{Y}}^{(i)})$ and $(\DMC_{\mathcal{X},\mathcal{Y}}^{(i)},d_{\mathcal{X},\mathcal{Y}}^{(i)})$ have the same open sets. Therefore, the topology induced by $d_{\mathcal{X},\mathcal{Y}}^{(i)}$ on $\DMC_{\mathcal{X},\mathcal{Y}}^{(i)}$ is the same as the quotient topology $\mathcal{T}_{\mathcal{X},\mathcal{Y}}^{(i)}$. Now since $(\DMC_{\mathcal{X},\mathcal{Y}}^{(i)},\mathcal{T}_{\mathcal{X},\mathcal{Y}}^{(i)})$ is compact and path-connected, $(\DMC_{\mathcal{X},\mathcal{Y}}^{(i)},d_{\mathcal{X},\mathcal{Y}}^{(i)})$ is compact and path-connected as well.
\end{proof}

\vspace*{3mm}

In the rest of this paper, we always associate $\DMC_{\mathcal{X},\mathcal{Y}}^{(i)}$ with the similarity metric $d_{\mathcal{X},\mathcal{Y}}^{(i)}$ and the quotient topology $\mathcal{T}_{\mathcal{X},\mathcal{Y}}^{(i)}$.

\subsection{Canonical embedding and canonical identification}

Let $\mathcal{X}_1,\mathcal{X}_2$ and $\mathcal{Y}$ be three finite sets such that $|\mathcal{X}_1|\leq |\mathcal{X}_2|$. We will show that there is a canonical embedding from $\DMC_{\mathcal{X}_1,\mathcal{Y}}^{(i)}$ to $\DMC_{\mathcal{X}_2,\mathcal{Y}}^{(i)}$. In other words, there exists an explicitly constructable compact subset $A$ of $\DMC_{\mathcal{X}_2,\mathcal{Y}}^{(i)}$ such that $A$ is homeomorphic to $\DMC_{\mathcal{X}_1,\mathcal{Y}}^{(i)}$. $A$ and the homeomorphism depend only on $\mathcal{X}_1,\mathcal{X}_2$ and $\mathcal{Y}$ (this is why we say that they are canonical). Moreover, we can show that $A$ depends only on $|\mathcal{X}_1|$, $\mathcal{X}_2$ and $\mathcal{Y}$.

\begin{mylem}
\label{lemEquivChannelSurj}
For every $W\in\DMC_{\mathcal{X}_1,\mathcal{Y}}$ and every surjection $f$ from $\mathcal{X}_2$ to $\mathcal{X}_1$, $W$ is input-equivalent to $W\circ D_f$.
\end{mylem}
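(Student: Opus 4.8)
The plan is to verify the two directions of input-equivalence separately, exhibiting in each case an explicit randomizing channel at the input. Recall that $W$ being input-equivalent to $W\circ D_f$ means precisely that each of the two channels is input-degraded from the other, where here $D_f\in\DMC_{\mathcal{X}_2,\mathcal{X}_1}$ and $W\circ D_f\in\DMC_{\mathcal{X}_2,\mathcal{Y}}$.

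One direction is immediate and requires nothing about $f$ beyond being a map. Since $W\circ D_f$ is by construction the composition of $W$ with $D_f$, and $D_f\in\DMC_{\mathcal{X}_2,\mathcal{X}_1}$ is itself a legitimate input-randomizing channel, the tautology $W\circ D_f=W\circ D_f$ directly witnesses that $W\circ D_f$ is input-degraded from $W$.

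The other direction, that $W$ is input-degraded from $W\circ D_f$, is where the surjectivity of $f$ enters. I would invoke the fact that a surjection $f:\mathcal{X}_2\rightarrow\mathcal{X}_1$ between finite sets admits a right inverse (a section) $g:\mathcal{X}_1\rightarrow\mathcal{X}_2$ with $f\circ g=\mathrm{id}_{\mathcal{X}_1}$: one picks, for each $x_1\in\mathcal{X}_1$, some preimage $g(x_1)\in f^{-1}(x_1)$, which is nonempty exactly because $f$ is onto. Taking the randomizing channel to be $V'=D_g\in\DMC_{\mathcal{X}_1,\mathcal{X}_2}$ and using the composition rule $D_f\circ D_g=D_{f\circ g}$ from the preliminaries together with associativity of channel composition, I compute $(W\circ D_f)\circ D_g=W\circ(D_f\circ D_g)=W\circ D_{f\circ g}=W\circ D_{\mathrm{id}_{\mathcal{X}_1}}=W$. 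This exhibits $W$ as input-degraded from $W\circ D_f$ and completes both directions.

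There is no serious obstacle here: the only substantive ingredient is the existence of a section of a surjection, which is elementary for finite sets, and everything else is bookkeeping with compositions of deterministic channels. As an alternative and equally short route, one could instead apply Proposition \ref{propCharacInputEquiv}. Since $(W\circ D_f)_{x_2}=W_{f(x_2)}$ for every $x_2\in\mathcal{X}_2$, the surjectivity of $f$ yields $\{(W\circ D_f)_{x_2}:\;x_2\in\mathcal{X}_2\}=\{W_{x_1}:\;x_1\in\mathcal{X}_1\}$ as subsets of $\Delta_{\mathcal{Y}}$, so the two channels have the same set of row-distributions, hence identical characteristics $\CE(W\circ D_f)=\CE(W)$, and are therefore input-equivalent.
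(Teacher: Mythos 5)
Your proof is correct and follows essentially the same route as the paper: the forward direction is witnessed by $D_f$ itself, and the reverse direction uses a section $g$ of the surjection $f$ (the paper calls it $f'$) together with $D_f\circ D_g=D_{f\circ g}=D_{\mathrm{id}_{\mathcal{X}_1}}$, giving $(W\circ D_f)\circ D_g=W$. Your alternative argument via Proposition \ref{propCharacInputEquiv} (equality of the sets of row distributions, hence equal characteristics) is also valid, but the main argument is the paper's proof verbatim in substance.
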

\begin{proof}
Clearly $W\circ D_f$ is input-degraded from $W$. Now let $f'$ be any mapping from $\mathcal{X}_1$ to $\mathcal{X}_2$ such that $f(f'(x_1))=x_1$ for every $x_1\in\mathcal{X}_1$. We have $W=W\circ (D_{f}\circ D_{f'})=(W\circ D_f)\circ D_{f'}$, and so $W$ is also input-degraded from $W\circ D_f$.
\end{proof}

\begin{mycor}
\label{corEquivChannelSurj}
For every $W,W'\in\DMC_{\mathcal{X}_1,\mathcal{Y}}$ and every two surjections $f,g$ from $\mathcal{X}_2$ to $\mathcal{X}_1$, we have:
$$W R_{\mathcal{X}_1,\mathcal{Y}}^{(i)} W'\;\;\Leftrightarrow\;\; (W\circ D_f)R_{\mathcal{X}_2,\mathcal{Y}}^{(i)}(W'\circ D_g).$$
\end{mycor}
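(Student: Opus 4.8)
The plan is to prove the biconditional by reducing it to the fixed-input case already handled by Proposition \ref{propCharacInputEquiv}, using Lemma \ref{lemEquivChannelSurj} as the key bridge. The main observation is that input-equivalence is characterized entirely by the characteristic $\CE(\cdot)$, and that precomposing with a deterministic channel $D_f$ coming from a surjection does not change the channel's input-equivalence class (this is precisely Lemma \ref{lemEquivChannelSurj}).

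First I would handle the forward direction. Suppose $W R_{\mathcal{X}_1,\mathcal{Y}}^{(i)} W'$. By Lemma \ref{lemEquivChannelSurj}, since $f$ is a surjection from $\mathcal{X}_2$ to $\mathcal{X}_1$, the channel $W\circ D_f$ is input-equivalent to $W$; similarly $W'\circ D_g$ is input-equivalent to $W'$. Since input-equivalence is an equivalence relation (it is symmetric and transitive by construction, each channel being input-degraded from the other), we can chain these three relations: $W\circ D_f$ is input-equivalent to $W$, which is input-equivalent to $W'$, which is input-equivalent to $W'\circ D_g$. Hence $W\circ D_f$ is input-equivalent to $W'\circ D_g$, i.e.\ $(W\circ D_f) R_{\mathcal{X}_2,\mathcal{Y}}^{(i)} (W'\circ D_g)$.

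The reverse direction is entirely symmetric and uses exactly the same chaining argument run backwards. Assume $(W\circ D_f) R_{\mathcal{X}_2,\mathcal{Y}}^{(i)} (W'\circ D_g)$. Again by Lemma \ref{lemEquivChannelSurj}, $W$ is input-equivalent to $W\circ D_f$ and $W'$ is input-equivalent to $W'\circ D_g$. Chaining gives that $W$ is input-equivalent to $W\circ D_f$, which is input-equivalent to $W'\circ D_g$, which is input-equivalent to $W'$, so $W R_{\mathcal{X}_1,\mathcal{Y}}^{(i)} W'$.

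Honestly, I do not expect a genuine obstacle here: once Lemma \ref{lemEquivChannelSurj} is in hand, the corollary is a pure transitivity argument. The only point requiring a moment's care is making sure the equivalence relation $R_{\mathcal{X},\mathcal{Y}}^{(i)}$ is genuinely transitive across the two different spaces $\DMC_{\mathcal{X}_1,\mathcal{Y}}$ and $\DMC_{\mathcal{X}_2,\mathcal{Y}}$ — but input-equivalence is defined channel-to-channel regardless of the input alphabet (all channels here share the output alphabet $\mathcal{Y}$), so the transitivity of the abstract input-equivalence relation applies uniformly, and the formal relations $R_{\mathcal{X}_1,\mathcal{Y}}^{(i)}$ and $R_{\mathcal{X}_2,\mathcal{Y}}^{(i)}$ are just its restrictions to the respective spaces. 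Alternatively, one could phrase the whole proof in terms of $\CE$ via Proposition \ref{propCharacInputEquiv}, noting that $\CE(W\circ D_f)=\CE(W)$ and $\CE(W'\circ D_g)=\CE(W')$ directly from Lemma \ref{lemEquivChannelSurj}, and then the statement reduces to the trivial equivalence $\CE(W)=\CE(W') \Leftrightarrow \CE(W\circ D_f)=\CE(W'\circ D_g)$.
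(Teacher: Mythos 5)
Your proof is correct and follows essentially the same route as the paper: the paper's own proof is precisely the one-line observation that, by Lemma \ref{lemEquivChannelSurj}, $W$ is input-equivalent to $W\circ D_f$ and $W'$ to $W'\circ D_g$, so the two equivalences $W R_{\mathcal{X}_1,\mathcal{Y}}^{(i)} W'$ and $(W\circ D_f)R_{\mathcal{X}_2,\mathcal{Y}}^{(i)}(W'\circ D_g)$ follow from each other by transitivity. Your remark that transitivity applies across channels with different input alphabets (but a common output alphabet) is exactly the right point of care, and your alternative phrasing via $\CE$ and Proposition \ref{propCharacInputEquiv} is also valid.
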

\begin{proof}
Since $W$ is input-equivalent to $W\circ D_f$ and $W'$ is input-equivalent to $W'\circ D_g$, then $W$ is input-equivalent to $W'$ if and only if $W\circ D_f$ is input-equivalent to $W'\circ D_g$.
\end{proof}

\vspace*{3mm}

For every $W\in\DMC_{\mathcal{X}_1,\mathcal{Y}}$, we denote the $R_{\mathcal{X}_1,\mathcal{Y}}^{(i)}$-equivalence class of $W$ as $\hat{W}$, and for every $W\in\DMC_{\mathcal{X}_2,\mathcal{Y}}$, we denote the $R_{\mathcal{X}_2,\mathcal{Y}}^{(i)}$-equivalence class of $W$ as $\tilde{W}$.

\begin{myprop}
\label{propEmbedInEquiv}
Let $\mathcal{X}_1,\mathcal{X}_2$ and $\mathcal{Y}$ be three finite sets such that $|\mathcal{X}_1|\leq |\mathcal{X}_2|$. Let $f:\mathcal{X}_2\rightarrow\mathcal{X}_1$ be any fixed surjection from $\mathcal{X}_2$ to $\mathcal{X}_1$. Define the mapping $F:\DMC_{\mathcal{X}_1,\mathcal{Y}}^{(i)}\rightarrow \DMC_{\mathcal{X}_2,\mathcal{Y}}^{(i)}$ as
$F(\hat{W})=\widetilde{W'\circ D_f}=\Proj_2(W'\circ D_f)$, where $W'\in \hat{W}$ and $\Proj_2$ is the projection onto the $R_{\mathcal{X},\mathcal{Y}_2}^{(i)}$-equivalence classes. We have:
\begin{itemize}
\item $F$ is well defined, i.e., $F(\hat{W})$ does not depend on $W'\in\hat{W}$.
\item $F$ is a homeomorphism from $\DMC_{\mathcal{X}_1,\mathcal{Y}}^{(i)}$ to $F\big(\DMC_{\mathcal{X}_1,\mathcal{Y}}^{(i)}\big)\subset \DMC_{\mathcal{X}_2,\mathcal{Y}}^{(i)}$.
\item $F$ does not depend on the surjection $f$. It depends only on $\mathcal{X}_1$, $\mathcal{X}_2$ and $\mathcal{Y}$, hence it is canonical.
\item $F\big(\DMC_{\mathcal{X}_1,\mathcal{Y}}^{(i)}\big)$ depends only on $|\mathcal{X}_1|$, $\mathcal{X}_2$ and $\mathcal{Y}$.
\item For every $W'\in\hat{W}$ and every $W''\in F(\hat{W})$, $W'$ is input-equivalent to $W''$.
\end{itemize}
\end{myprop}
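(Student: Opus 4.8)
The plan is to verify the five bullets in order, leaning almost entirely on Lemma \ref{lemEquivChannelSurj} and Corollary \ref{corEquivChannelSurj}, and saving the real work for the last bullet. For well-definedness, if $W_1',W_2'\in\hat{W}$ then $W_1'$ is input-equivalent to $W_2'$; applying Corollary \ref{corEquivChannelSurj} with $g=f$ gives that $W_1'\circ D_f$ is input-equivalent to $W_2'\circ D_f$, so $\widetilde{W_1'\circ D_f}=\widetilde{W_2'\circ D_f}$ and $F(\hat{W})$ is unambiguous. Independence of the surjection is equally immediate: for two surjections $f,g$, apply Corollary \ref{corEquivChannelSurj} with $W=W'$ to the trivially true relation $W'R_{\mathcal{X}_1,\mathcal{Y}}^{(i)}W'$ to conclude $W'\circ D_f$ is input-equivalent to $W'\circ D_g$, whence $\widetilde{W'\circ D_f}=\widetilde{W'\circ D_g}$. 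The last bullet then follows by transitivity: Lemma \ref{lemEquivChannelSurj} gives that $W'$ is input-equivalent to $W'\circ D_f$, and any $W''\in F(\hat{W})$ is input-equivalent to $W'\circ D_f$, so $W'$ is input-equivalent to $W''$.

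For the homeomorphism bullet, I would first note that $W'\mapsto \Proj_2(W'\circ D_f)$ is a continuous map from $\DMC_{\mathcal{X}_1,\mathcal{Y}}$ to $\DMC_{\mathcal{X}_2,\mathcal{Y}}^{(i)}$, since composition with the fixed channel $D_f$ is continuous and $\Proj_2$ is continuous. Well-definedness shows this map is constant on $R_{\mathcal{X}_1,\mathcal{Y}}^{(i)}$-classes, so Lemma \ref{lemQuotientFunction} produces a continuous transcendent map, which is exactly $F$. Injectivity comes once more from Corollary \ref{corEquivChannelSurj} with $g=f$: $F(\hat{W}_1)=F(\hat{W}_2)$ means $W_1'\circ D_f$ and $W_2'\circ D_f$ are input-equivalent, hence $W_1'$ and $W_2'$ are input-equivalent and $\hat{W}_1=\hat{W}_2$. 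Finally, since $\DMC_{\mathcal{X}_1,\mathcal{Y}}^{(i)}$ is compact by Theorem \ref{theDMCXYi} and $\DMC_{\mathcal{X}_2,\mathcal{Y}}^{(i)}$ is a metric space, hence Hausdorff, the continuous injection $F$ is automatically a homeomorphism onto its image, and that image is compact.

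The substantive step is the fourth bullet, and I expect it to be the main obstacle. The key observation is that surjectivity of $f$ forces $(W'\circ D_f)_{x_2}=W'_{f(x_2)}$ to range over \emph{all} rows $\{W'_{x_1}:x_1\in\mathcal{X}_1\}$, so that $\conv(F(\hat{W}))=\conv(\hat{W})$; in particular every representative of $F(\hat{W})$ has input-rank at most $|\mathcal{X}_1|$. Conversely, given any class $\tilde{V}\in\DMC_{\mathcal{X}_2,\mathcal{Y}}^{(i)}$ with $\irank(V)\leq|\mathcal{X}_1|$, I would build $W'\in\DMC_{\mathcal{X}_1,\mathcal{Y}}$ by using the convex-extreme points $\CE(V)$ as its distinct rows and padding the remaining rows by repeating one of them. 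Then $\conv(\hat{W})=\conv(\CE(V))=\conv(V)$, so Proposition \ref{propCharacInputEquiv} yields $F(\hat{W})=\tilde{V}$. This identifies the image as
$$F\big(\DMC_{\mathcal{X}_1,\mathcal{Y}}^{(i)}\big)=\left\{\tilde{V}\in\DMC_{\mathcal{X}_2,\mathcal{Y}}^{(i)}:\;\irank(V)\leq|\mathcal{X}_1|\right\},$$
a description depending only on $|\mathcal{X}_1|$, $\mathcal{X}_2$ and $\mathcal{Y}$, which is exactly the claim. The conceptual content here is recognizing that surjectivity of $f$ is precisely what preserves the convex hull, and that $\irank$ is the invariant that decides membership in the image; the remaining four claims reduce to quick applications of the already-established lemma and corollary.
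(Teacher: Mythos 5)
Your proof is correct, and four of its five parts follow the paper's own argument: well-definedness and independence of $f$ via Corollary~\ref{corEquivChannelSurj}, continuity via Lemma~\ref{lemQuotientFunction}, injectivity via Corollary~\ref{corEquivChannelSurj}, the compact-to-Hausdorff argument for the homeomorphism (the paper phrases this as ``$F$ is a closed map,'' which is the same standard fact unpacked), and transitivity for the last bullet.

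Where you genuinely diverge is the fourth bullet. The paper proves that $F\big(\DMC_{\mathcal{X}_1,\mathcal{Y}}^{(i)}\big)$ depends only on $|\mathcal{X}_1|$ by a transport argument: given $\mathcal{X}_1'$ with $|\mathcal{X}_1'|=|\mathcal{X}_1|$, it fixes a bijection $g:\mathcal{X}_1\rightarrow\mathcal{X}_1'$, writes $f'=g\circ f$, and shows the two embeddings have the same image by composing with $D_g$ and $D_{g^{-1}}$; the image is never described explicitly. You instead identify the image intrinsically as $\{\tilde{V}\in\DMC_{\mathcal{X}_2,\mathcal{Y}}^{(i)}:\;\irank(\tilde{V})\leq|\mathcal{X}_1|\}$, using that surjectivity of $f$ makes the rows of $W'\circ D_f$ sweep exactly the rows of $W'$ (so $\conv$ and hence $\CE$ are preserved), and conversely building a preimage of any rank-bounded class by listing $\CE(\tilde{V})$ as rows with padding. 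Both are valid. Your version buys more: it is exactly the identification $\DMC_{[n],\mathcal{Y}}^{(i)}=\{\hat{W}:\irank(\hat{W})\leq n\}$ that the paper only states later, in Section~\ref{secInEquivSpace}, as a consequence of Proposition~\ref{propCharacInputEquiv}; proving it here makes the later identification of $\DMC_{[n],\mathcal{Y}}^{(i)}$ as a subspace of $\DMC_{[m],\mathcal{Y}}^{(i)}$ immediate. What it costs is one extra (easy but real) ingredient: the polytope fact that $\conv(\CE(A))=\conv(A)$ for a finite $A\subset\Delta_{\mathcal{Y}}$, i.e.\ that a finite point set is recovered, up to convex hull, from its convex-extreme points; the paper's bijection argument needs nothing beyond the already-proved Corollary~\ref{corEquivChannelSurj}.
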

\begin{proof}
Corollary \ref{corEquivChannelSurj} implies that $\Proj_2(W\circ D_f)=\Proj_2(W'\circ D_f)$ if and only if $W R_{\mathcal{X}_1,\mathcal{Y}}^{(i)}W'$. Therefore, $\Proj_2(W'\circ D_f)$ does not depend on $W'\in\hat{W}$, hence $F$ is well defined. Corollary \ref{corEquivChannelSurj} also shows that $\Proj_2(W'\circ D_f)$ does not depend on the particular choice of the surjection $f$, hence it is canonical (i.e., it depends only on $\mathcal{X}_1,\mathcal{X}_2$ and $\mathcal{Y}$).

On the other hand, the mapping $W\rightarrow W\circ D_f$ is a continuous mapping from $\DMC_{\mathcal{X}_1,\mathcal{Y}}$ to $\DMC_{\mathcal{X}_2,\mathcal{Y}}$, and $\Proj_2$ is continuous. Therefore, the mapping $W\rightarrow \Proj_2(W\circ D_f)$ is a continuous mapping from $\DMC_{\mathcal{X}_1,\mathcal{Y}}$ to $\DMC_{\mathcal{X}_2,\mathcal{Y}}^{(i)}$. Now since $\Proj_2(W \circ D_f)$ depends only on the $R_{\mathcal{X}_1,\mathcal{Y}}^{(i)}$-equivalence class $\hat{W}$ of $W$, Lemma \ref{lemQuotientFunction} implies that the transcendent mapping of $W\rightarrow \Proj_2(W\circ D_f)$ that is defined on $\DMC_{\mathcal{X}_1,\mathcal{Y}}^{(i)}$ is continuous. Therefore, $F$ is a continuous mapping from $(\DMC_{\mathcal{X}_1,\mathcal{Y}}^{(i)},\mathcal{T}_{\mathcal{X}_1,\mathcal{Y}}^{(i)})$ to $(\DMC_{\mathcal{X}_2,\mathcal{Y}}^{(i)},\mathcal{T}_{\mathcal{X}_2,\mathcal{Y}}^{(i)})$. Moreover, we can see from Corollary \ref{corEquivChannelSurj} that $F$ is an injection.

For every closed subset $B$ of $\DMC_{\mathcal{X}_1,\mathcal{Y}}^{(i)}$, $B$ is compact since $\DMC_{\mathcal{X}_1,\mathcal{Y}}^{(i)}$ is compact, hence $F(B)$ is compact because $F$ is continuous. This implies that $F(B)$ is closed in $\DMC_{\mathcal{X}_2,\mathcal{Y}}^{(i)}$ since $\DMC_{\mathcal{X}_2,\mathcal{Y}}^{(i)}$ is Hausdorff (as it is metrizable). Therefore, $F$ is a closed mapping.

Now since $F$ is an injection that is both continuous and closed, $F$ is a homeomorphism between $\DMC_{\mathcal{X}_1,\mathcal{Y}}^{(i)}$ and $F\big(\DMC_{\mathcal{X}_1,\mathcal{Y}}^{(i)}\big)\subset \DMC_{\mathcal{X}_2,\mathcal{Y}}^{(i)}$.

We would like now to show that $F\big(\DMC_{\mathcal{X}_1,\mathcal{Y}}^{(i)}\big)$ depends only on $|\mathcal{X}_1|$, $\mathcal{X}_2$ and $\mathcal{Y}$. Let $\mathcal{X}_1'$ be a finite set such that $|\mathcal{X}_1|=|\mathcal{X}_1'|$. For every $W\in \DMC_{\mathcal{X}_1',\mathcal{Y}}$, let $\overline{W}\in\DMC_{\mathcal{X}_1',\mathcal{Y}}^{(i)}$ be the $R_{\mathcal{X}_1',\mathcal{Y}}^{(i)}$-equivalence class of $W$.

Let $g:\mathcal{X}_1\rightarrow \mathcal{X}_1'$ be a fixed bijection from $\mathcal{X}_1$ to $\mathcal{X}_1'$ and let $f'=g\circ f$. Define $F': \DMC_{\mathcal{X}_1',\mathcal{Y}}^{(i)}\rightarrow \DMC_{\mathcal{X}_2,\mathcal{Y}}^{(i)}$ as $F'(\overline{W})=\widetilde{W'\circ D_{f'}}=\Proj_2(W'\circ D_{f'}),$ where $W'\in \overline{W}$. As above, $F'$ is well defined, and it is a homeomorphism from $\DMC_{\mathcal{X}_1',\mathcal{Y}}^{(i)}$ to $F'\big(\DMC_{\mathcal{X}_1',\mathcal{Y}}^{(i)}\big)$. We want to show that $F'\big(\DMC_{\mathcal{X}_1',\mathcal{Y}}^{(i)}\big)=F\big(\DMC_{\mathcal{X}_1,\mathcal{Y}}^{(i)}\big)$. For every $\overline{W}\in \DMC_{\mathcal{X}_1',\mathcal{Y}}^{(i)}$, let $W'\in\overline{W}$. We have $$\textstyle F'(\overline{W})= \Proj_2(W'\circ D_{f'})=\Proj_2((W'\circ D_g)\circ D_f)=F\left(\widehat{W'\circ D_g}\right)\in F\big(\DMC_{\mathcal{X}_1,\mathcal{Y}}^{(i)}\big).$$ Since this is true for every $\overline{W}\in \DMC_{\mathcal{X}_1',\mathcal{Y}}^{(i)}$, we deduce that $F'\big(\DMC_{\mathcal{X}_1',\mathcal{Y}}^{(i)}\big)\subset F\big(\DMC_{\mathcal{X}_1,\mathcal{Y}}^{(i)}\big)$. By exchanging the roles of $\mathcal{X}_1$ and $\mathcal{X}_1'$ and using the fact that $f=g^{-1}\circ f'$, we get $F\big(\DMC_{\mathcal{X}_1,\mathcal{Y}}^{(i)}\big)\subset F'\big(\DMC_{\mathcal{X}_1',\mathcal{Y}}^{(i)}\big)$. We conclude that $F\big(\DMC_{\mathcal{X}_1,\mathcal{Y}}^{(i)}\big)=F'\big(\DMC_{\mathcal{X}_1',\mathcal{Y}}^{(i)}\big)$, which means that $F\big(\DMC_{\mathcal{X}_1,\mathcal{Y}}^{(i)}\big)$ depends only on $|\mathcal{X}_1|$, $\mathcal{X}_2$ and $\mathcal{Y}$.

For every $W'\in\hat{W}$ and every $W''\in F(\hat{W})=\widetilde{W'\circ D_f}$, $W''$ is input-equivalent to $W'\circ D_f$ and $W'\circ D_f$ is input-equivalent to $W'$ (by Lemma \ref{lemEquivChannelSurj}), hence $W''$ is input-equivalent to $W'$.
\end{proof}

\begin{mycor}
\label{corIdentInEquiv}
If $|\mathcal{X}_1|=|\mathcal{X}_2|$, there exists a canonical homeomorphism from $\DMC_{\mathcal{X}_1,\mathcal{Y}}^{(i)}$ to $\DMC_{\mathcal{X}_2,\mathcal{Y}}^{(i)}$ depending only on $\mathcal{X}_1,\mathcal{X}_2$ and $\mathcal{Y}$.
\end{mycor}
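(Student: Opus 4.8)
The plan is to derive the corollary directly from Proposition \ref{propEmbedInEquiv} by showing that when $|\mathcal{X}_1|=|\mathcal{X}_2|$ the canonical embedding $F$ is onto. First I would fix any bijection $f:\mathcal{X}_2\rightarrow\mathcal{X}_1$ (since the cardinalities are equal, every surjection from $\mathcal{X}_2$ to $\mathcal{X}_1$ is automatically a bijection) and let $F:\DMC_{\mathcal{X}_1,\mathcal{Y}}^{(i)}\rightarrow\DMC_{\mathcal{X}_2,\mathcal{Y}}^{(i)}$ be the map $F(\hat{W})=\Proj_2(W'\circ D_f)$ supplied by Proposition \ref{propEmbedInEquiv}. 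That proposition already gives, for free, that $F$ is a well-defined, canonical homeomorphism onto its image $F\big(\DMC_{\mathcal{X}_1,\mathcal{Y}}^{(i)}\big)\subset\DMC_{\mathcal{X}_2,\mathcal{Y}}^{(i)}$, and that this image depends only on $|\mathcal{X}_1|$, $\mathcal{X}_2$ and $\mathcal{Y}$. So the only thing left to establish is surjectivity.

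For surjectivity I would exploit that $f$, being a bijection, has an inverse $f^{-1}:\mathcal{X}_1\rightarrow\mathcal{X}_2$, together with the identity $D_g\circ D_f=D_{g\circ f}$ recorded in the preliminaries, which yields $D_{f^{-1}}\circ D_f = D_{f^{-1}\circ f}=D_{\mathrm{id}_{\mathcal{X}_2}}$. Given an arbitrary channel $W\in\DMC_{\mathcal{X}_2,\mathcal{Y}}$ with class $\tilde{W}$, set $V:=W\circ D_{f^{-1}}\in\DMC_{\mathcal{X}_1,\mathcal{Y}}$. Then
$$V\circ D_f=(W\circ D_{f^{-1}})\circ D_f=W\circ(D_{f^{-1}}\circ D_f)=W\circ D_{\mathrm{id}_{\mathcal{X}_2}}=W,$$
so $F(\hat{V})=\Proj_2(V\circ D_f)=\Proj_2(W)=\tilde{W}$. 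Since $\tilde{W}$ was arbitrary, $F$ is onto, and therefore a homeomorphism from $\DMC_{\mathcal{X}_1,\mathcal{Y}}^{(i)}$ onto all of $\DMC_{\mathcal{X}_2,\mathcal{Y}}^{(i)}$. The canonicity (dependence only on $\mathcal{X}_1,\mathcal{X}_2,\mathcal{Y}$) is inherited verbatim from Proposition \ref{propEmbedInEquiv}.

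There is no real obstacle here: the entire analytic content — continuity, injectivity, closedness, canonicity — is already packaged inside Proposition \ref{propEmbedInEquiv}, so the corollary reduces to the purely combinatorial observation that a surjection between equicardinal finite sets is invertible. An alternative, even shorter route avoids exhibiting $V$ explicitly: apply the fourth bullet of Proposition \ref{propEmbedInEquiv} with $\mathcal{X}_1'=\mathcal{X}_2$ to conclude that $F\big(\DMC_{\mathcal{X}_1,\mathcal{Y}}^{(i)}\big)$ equals the image of the corresponding embedding built from the identity surjection $\mathrm{id}:\mathcal{X}_2\rightarrow\mathcal{X}_2$; since $W\circ D_{\mathrm{id}_{\mathcal{X}_2}}=W$, that embedding is just $\Proj_2$, whose image is plainly the whole space $\DMC_{\mathcal{X}_2,\mathcal{Y}}^{(i)}$. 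I would present the direct surjectivity argument as the main proof and mention this alternative as a remark.
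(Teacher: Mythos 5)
Your proposal is correct and follows essentially the same route as the paper: the paper also takes a bijection $f$, invokes Proposition \ref{propEmbedInEquiv} for well-definedness and the homeomorphism-onto-image property, and uses $D_{f^{-1}}$ together with $D_g\circ D_f=D_{g\circ f}$ to show $F$ covers all of $\DMC_{\mathcal{X}_2,\mathcal{Y}}^{(i)}$ (the paper phrases this as constructing the inverse map $F'(\tilde{V})=\Proj_1(V'\circ D_{f^{-1}})$ and checking both compositions, which is exactly your preimage $\hat{V}=\widehat{W\circ D_{f^{-1}}}$ in different clothing). Your observation that only surjectivity needs to be checked, injectivity being already supplied by Proposition \ref{propEmbedInEquiv}, is a minor but legitimate economy over the paper's redundant verification of $F'\circ F=\mathrm{id}$.
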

\begin{proof}
Let $f$ be a bijection from $\mathcal{X}_2$ to $\mathcal{X}_1$. Define the mapping $F:\DMC_{\mathcal{X}_1,\mathcal{Y}}^{(i)}\rightarrow \DMC_{\mathcal{X}_2,\mathcal{Y}}^{(i)}$ as
$F(\hat{W})=\widetilde{W'\circ D_f}=\Proj_2(W'\circ D_f),$ where $W'\in \hat{W}$ and $\Proj_2:\DMC_{\mathcal{X}_2,\mathcal{Y}}\rightarrow\DMC_{\mathcal{X}_2,\mathcal{Y}}^{(i)}$ is the projection onto the $R_{\mathcal{X}_2,\mathcal{Y}}^{(i)}$-equivalence classes.

Also, define the mapping $F':\DMC_{\mathcal{X}_2,\mathcal{Y}}^{(i)}\rightarrow \DMC_{\mathcal{X}_1,\mathcal{Y}}^{(i)}$ as
$F(\tilde{V})=\widehat{V'\circ D_{f^{-1}}}=\Proj_1(V'\circ D_{f^{-1}}),$ where $V'\in \tilde{V}$ and $\Proj_1:\DMC_{\mathcal{X},\mathcal{Y}_1}\rightarrow\DMC_{\mathcal{X}_1,\mathcal{Y}}^{(i)}$ is the projection onto the $R_{\mathcal{X}_1,\mathcal{Y}}^{(i)}$-equivalence classes.

Proposition \ref{propEmbedInEquiv} shows that $F$ and $F'$ are well defined.

For every $W\in\DMC_{\mathcal{X}_1,\mathcal{Y}}$, we have:
\begin{align*}
F'(F(\hat{W}))&\stackrel{(a)}{=}F'(\widetilde{W\circ D_f})\stackrel{(b)}{=}\widehat{(W\circ D_f)\circ D_{f^{-1}}}=\hat{W},
\end{align*}
where (a) follows from the fact that $W\in \hat{W}$ and (b) follows from the fact that $W\circ D_f\in \widetilde{W\circ D_f}$.

We can similarly show that $F(F'(\tilde{V}))=\tilde{V}$ for every $\tilde{V}\in\DMC_{\mathcal{X}_2,\mathcal{Y}}^{(i)}$. Therefore, both $F$ and $F'$ are bijections. Proposition \ref{propEmbedInEquiv} now implies that $F$ is a homeomorphism from $\DMC_{\mathcal{X}_1,\mathcal{Y}}^{(i)}$ to $F\big(\DMC_{\mathcal{X}_1,\mathcal{Y}}^{(i)}\big)=\DMC_{\mathcal{X}_2,\mathcal{Y}}^{(i)}$. Moreover, $F$ depends only on $\mathcal{X},\mathcal{Y}_1$ and $\mathcal{Y}_2$.
\end{proof}

\vspace*{3mm}

Corollary \ref{corIdentInEquiv} allows us to identify $\DMC_{\mathcal{X},\mathcal{Y}}^{(i)}$ with $\DMC_{[n],\mathcal{Y}}^{(i)}$ through the canonical homeomorphism, where $n=|\mathcal{X}|$ and $[n]=\{1,\ldots,n\}$. Moreover, for every $1\leq n\leq m$, Proposition \ref{propEmbedInEquiv} allows us to identify $\DMC_{[n],\mathcal{Y}}^{(i)}$ with the canonical subspace of $\DMC_{[m],\mathcal{Y}}^{(i)}$ that is homeomorphic to $\DMC_{[n],\mathcal{Y}}^{(i)}$. In the rest of this paper, we consider that $\DMC_{[n],\mathcal{Y}}^{(i)}$ is a compact subspace of $\DMC_{[m],\mathcal{Y}}^{(i)}$.

\vspace*{3mm}
Intuitively, $\DMC_{[n],\mathcal{Y}}^{(i)}$ has a ``lower dimension" compared to $\DMC_{[m],\mathcal{Y}}^{(i)}$. So one expects that the interior of $\DMC_{[n],\mathcal{Y}}^{(i)}$ in $(\DMC_{[m],\mathcal{Y}}^{(i)},\mathcal{T}_{[m],\mathcal{Y}}^{(i)})$ is empty if $m>n$. The following proposition shows that this intuition is accurate when $|\mathcal{Y}|\geq 3$.

\begin{myprop}
\label{propInteriorEmptyDMCXni}
We have:
\begin{itemize}
\item If $|\mathcal{Y}|=1$, then $\DMC_{[n],\mathcal{Y}}^{(i)}=\DMC_{[1],\mathcal{Y}}^{(i)}$ for every $n\geq 1$.
\item If $|\mathcal{Y}|=2$, then $\DMC_{[n],\mathcal{Y}}^{(i)}=\DMC_{[2],\mathcal{Y}}^{(i)}$ for every $n\geq 2$.
\item If $|\mathcal{Y}|\geq 3$, then for every $1\leq n<m$, the interior of $\DMC_{[n],\mathcal{Y}}^{(i)}$ in $(\DMC_{[m],\mathcal{Y}}^{(i)},\mathcal{T}_{[m],\mathcal{Y}}^{(i)})$ is empty.
\end{itemize}
\end{myprop}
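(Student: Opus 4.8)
The plan is to translate the statement into a purely geometric one about convex polytopes in the probability simplex and then argue by an explicit vertex-adding perturbation. Recall that by Proposition \ref{propCharacInputEquiv} a class $\hat W\in\DMC_{[k],\mathcal Y}^{(i)}$ is determined by its characteristic $\CE(\hat W)$, and that the proof that $d_{\mathcal X,\mathcal Y}^{(i)}$ is a metric shows $d_{\mathcal X,\mathcal Y}^{(i)}(\hat W_1,\hat W_2)=d_H(\conv(\hat W_1),\conv(\hat W_2))$. Thus $\hat W$ is faithfully represented by the polytope $\conv(\hat W)\subset\Delta_{\mathcal Y}$, whose vertices are exactly $\CE(\hat W)$, so that $\irank(\hat W)$ is the number of vertices of this polytope; and under the canonical embedding of Proposition \ref{propEmbedInEquiv}, the inclusion $\DMC_{[n],\mathcal Y}^{(i)}\subset\DMC_{[m],\mathcal Y}^{(i)}$ becomes simply the inclusion of polytopes with at most $n$ vertices into those with at most $m$ vertices, the metric on both being the Hausdorff metric $d_H$ on $\mathcal K(\Delta_{\mathcal Y})$. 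Conversely every polytope with at most $k$ vertices is $\conv(\hat W)$ for some $\hat W\in\DMC_{[k],\mathcal Y}^{(i)}$. Hence the whole statement is about the space of polytopes in $\Delta_{\mathcal Y}$ equipped with $d_H$.

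The first two items are then immediate: if $|\mathcal Y|=1$ then $\Delta_{\mathcal Y}$ is a single point and if $|\mathcal Y|=2$ then $\Delta_{\mathcal Y}$ is a line segment, so every nonempty compact convex subset has at most $|\mathcal Y|$ extreme points; thus $\irank\le|\mathcal Y|$ for every channel, which gives $\DMC_{[n],\mathcal Y}^{(i)}=\DMC_{[|\mathcal Y|],\mathcal Y}^{(i)}$ for all $n\ge|\mathcal Y|$.

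For the third item, assume $|\mathcal Y|\ge3$, so $\dim\Delta_{\mathcal Y}=|\mathcal Y|-1\ge2$, and fix $1\le n<m$. I must show that every polytope $K\subset\Delta_{\mathcal Y}$ with $k\le n$ vertices is, for every $\epsilon>0$, within Hausdorff distance $\epsilon$ of a polytope lying in $\Delta_{\mathcal Y}$ with strictly more than $n$ but at most $m$ vertices; since $n<m$ it suffices to produce one with exactly $n+1\le m$ vertices. The key device for avoiding boundary complications is to first shrink: fixing a point $c$ in the interior of $\Delta_{\mathcal Y}$ and applying the homothety $x\mapsto c+(1-\eta)(x-c)$, I obtain for small $\eta>0$ a polytope $K_\eta\subset\mathrm{int}\,\Delta_{\mathcal Y}$ that has the same number of vertices as $K$ and satisfies $d_H(K,K_\eta)\to0$ as $\eta\to0$. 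Because $K_\eta$ now sits strictly inside $\Delta_{\mathcal Y}$, there is room to push outward in every direction while remaining in $\Delta_{\mathcal Y}$, and I add vertices one at a time: if the current polytope is full-dimensional I adjoin a point lying just beyond exactly one of its facets and beneath all the others, and otherwise I adjoin a point just off its affine hull. By the elementary beneath-beyond description of convex hulls, each such step keeps all previous vertices and creates exactly one new vertex, while changing the polytope by an arbitrarily small amount in $d_H$. Iterating this $n+1-k$ times, with all displacements small enough that the total Hausdorff change is below $\epsilon$, yields a polytope inside $\Delta_{\mathcal Y}$ with exactly $n+1\le m$ vertices and within $\epsilon$ of $K$. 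This witnesses that $K$ is not an interior point of $\DMC_{[n],\mathcal Y}^{(i)}$, so the interior is empty.

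The main work, and the step I expect to require the most care, is not a deep obstacle but the verification of the two elementary geometric facts underlying the construction: that the interior-homothety preserves the vertex count and keeps the shrunk polytope inside the open simplex, and that adjoining a nearby point beyond a single facet, or off the affine hull, increases the number of vertices by exactly one without destroying any existing vertex. The reason to insert the shrinking step first is precisely to sidestep the only genuine difficulty one meets without it, namely polytopes all of whose facets lie on the boundary $\partial\Delta_{\mathcal Y}$ — in the full-dimensional case these are forced to equal $\Delta_{\mathcal Y}$ itself — where there would be no admissible outward direction; shrinking into the interior removes this case uniformly.
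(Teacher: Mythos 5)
Your proof is correct, but it takes a genuinely different route from the paper's, and the difference matters. The paper stays at the channel level: it pulls a nonempty open set back through the quotient map to the metric space $(\DMC_{[m],\mathcal{Y}},d_{[m],\mathcal{Y}})$, represents $\hat{W}$ by the channel whose rows are $P_1,\ldots,P_{r-1}$ followed by $m-r+1$ copies of $P_r$, and then replaces the duplicate copies of $P_r$ by points $P_{r+1},\ldots,P_m$ chosen within TV-distance $\epsilon$ of $P_r$ so that $\CE(\{P_1,\ldots,P_m\})=\{P_1,\ldots,P_m\}$. You instead invoke Theorem \ref{theDMCXYi} together with the identity $d_{[m],\mathcal{Y}}^{(i)}(\hat{W}_1,\hat{W}_2)=d_H\big(\conv(\hat{W}_1),\conv(\hat{W}_2)\big)$ to turn the question into one about polytopes in $\Delta_{\mathcal{Y}}$ under the Hausdorff metric, then shrink the polytope into the relative interior of $\Delta_{\mathcal{Y}}$ by a homothety and add one vertex at a time by the beneath-beyond construction, placing each new vertex anywhere convenient near the current polytope rather than near one prescribed row. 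This buys you something concrete: the paper's key assertion is actually false in boundary cases. If $\{P_1,\ldots,P_r\}$ is the full vertex set of $\Delta_{\mathcal{Y}}$ (which happens for some $\hat{W}\in\DMC_{[n],\mathcal{Y}}^{(i)}$ as soon as $n\geq|\mathcal{Y}|$), then $\conv(\{P_1,\ldots,P_r\})=\Delta_{\mathcal{Y}}$ and \emph{no} point of $\Delta_{\mathcal{Y}}$ can be adjoined as a new convex-extreme point; more generally the paper's construction fails whenever $\conv(\hat{W})$ contains a neighborhood of $P_r$ relative to $\Delta_{\mathcal{Y}}$, since its new points are forced to lie within $\epsilon$ of $P_r$. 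Your preliminary shrinking step, combined with measuring proximity by $d_H$ of the hulls rather than row-by-row, removes exactly this obstruction, so your argument in fact repairs an oversight in the paper's appendix; what the paper's route buys in exchange is that it needs only the definition of the quotient topology and $d_{[m],\mathcal{Y}}$, not the metric identification of Theorem \ref{theDMCXYi}. The one thing a complete write-up should still spell out is the pair of facts you flag as elementary: a point off the affine hull, or beyond exactly one facet and strictly beneath all others, adds exactly one vertex and destroys none --- the latter because the polytope there is full-dimensional of dimension $|\mathcal{Y}|-1\geq 2$, so every existing vertex lies on at least two facets and hence on a facet that the new point is beneath; both verifications are short supporting-functional arguments, but they carry the vertex-count bookkeeping on which your whole construction rests.
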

\begin{proof}
See Appendix \ref{appInteriorEmptyDMCXni}.
\end{proof}

\section{Space of input-equivalent channels}

\label{secInEquivSpace}

The previous section showed that if we are interested in input-equivalent channels, it is sufficient to study the spaces $\DMC_{[n],\mathcal{Y}}$ and $\DMC_{[n],\mathcal{Y}}^{(i)}$ for every $n\geq 1$, where $[n]=\{1,\ldots,n\}$. Define the space $$\textstyle\DMC_{\ast,\mathcal{Y}}={\displaystyle\coprod_{n\geq 1}} \DMC_{[n],\mathcal{Y}},$$
where $\coprod$ is the disjoint union symbol. The subscript $\ast$ indicates that the input alphabets of the considered channels are arbitrary but finite. We define the equivalence relation $R_{\ast,\mathcal{Y}}^{(i)}$ on $\DMC_{\ast,\mathcal{Y}}$ as follows:
$$WR_{\ast,\mathcal{Y}}^{(i)}W'\;\;\Leftrightarrow\;\;W\;\text{is input-equivalent to}\;W'.$$

\begin{mydef}
The space of input-equivalent channels with output alphabet $\mathcal{Y}$ is the quotient of the space of channels with output alphabet $\mathcal{Y}$ by the input-equivalence relation:
$$\textstyle\DMC_{\ast,\mathcal{Y}}^{(i)}=\DMC_{\ast,\mathcal{Y}}/R_{\ast,\mathcal{Y}}^{(i)}.$$
\end{mydef}

Clearly, $\DMC_{[n],\mathcal{Y}}/R_{\ast,\mathcal{Y}}^{(i)}$ can be canonically identified with $\DMC_{[n],\mathcal{Y}}/R_{[n],\mathcal{Y}}^{(i)}=\DMC_{[n],\mathcal{Y}}^{(i)}$. Therefore, we can write
$$\textstyle\DMC_{\ast,\mathcal{Y}}^{(i)}={\displaystyle\bigcup_{n\geq 1}}\DMC_{[n],\mathcal{Y}}^{(i)}.$$

We define the \emph{input-rank} of $\hat{W}\in\DMC_{\ast,\mathcal{Y}}^{(i)}$ as the size of its characteristic: $\irank(\hat{W})=|\CE(\hat{W})|$. Due to Proposition \ref{propCharacInputEquiv}, we have
$$\textstyle\DMC_{[n],\mathcal{Y}}^{(i)}=\{\hat{W}\in\DMC_{\ast,\mathcal{Y}}^{(i)}:\;\irank(\hat{W})\leq n\}.$$

A subset $A$ of $\DMC_{\ast,\mathcal{Y}}^{(i)}$ is said to be \emph{rank-bounded} if there exists $n\geq 1$ such that $A\subset \DMC_{[n],\mathcal{Y}}^{(i)}$.

\subsection{Natural topologies on $\DMC_{\ast,\mathcal{Y}}^{(i)}$}

\label{subsecNaturalTop}

Since $\DMC_{\ast,\mathcal{Y}}^{(i)}$ is the quotient of $\DMC_{\ast,\mathcal{Y}}$ and since $\DMC_{\ast,\mathcal{Y}}$ was not given any topology, there is no ``standard topology" on $\DMC_{\ast,\mathcal{Y}}^{(i)}$. However, there are many properties that one may require from any ``reasonable" topology on $\DMC_{\ast,\mathcal{Y}}^{(i)}$. In this paper, we focus on one particular requirement that we consider the most basic property required from any ``acceptable" topology on $\DMC_{\ast,\mathcal{Y}}^{(i)}$:

\begin{mydef}
A topology $\mathcal{T}$ on $\DMC_{\ast,\mathcal{Y}}^{(i)}$ is said to be \emph{natural} if it induces the quotient topology $\mathcal{T}_{[n],\mathcal{Y}}^{(i)}$ on $\DMC_{[n],\mathcal{Y}}^{(i)}$ for every $n\geq 1$.
\end{mydef}

The reason why we consider such topology as natural is because the quotient topology $\mathcal{T}_{[n],\mathcal{Y}}^{(i)}$ is the ``standard" and ``most natural" topology on $\DMC_{[n],\mathcal{Y}}^{(i)}$. Therefore, we do not want to induce any non-standard topology on $\DMC_{[n],\mathcal{Y}}^{(i)}$ by relativization.

\begin{myprop}
Every natural topology is $\sigma$-compact, separable and path-connected.
\end{myprop}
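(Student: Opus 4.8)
The plan is to reduce all three properties to the corresponding properties of the building blocks $\DMC_{[n],\mathcal{Y}}^{(i)}$ and to exploit two structural facts established earlier: that the spaces are nested, $\DMC_{[n],\mathcal{Y}}^{(i)}\subset\DMC_{[m],\mathcal{Y}}^{(i)}$ whenever $n\leq m$ (via the canonical embedding of Proposition \ref{propEmbedInEquiv}), with $\DMC_{\ast,\mathcal{Y}}^{(i)}=\bigcup_{n\geq 1}\DMC_{[n],\mathcal{Y}}^{(i)}$; and that each $(\DMC_{[n],\mathcal{Y}}^{(i)},\mathcal{T}_{[n],\mathcal{Y}}^{(i)})$ is a compact, path-connected metric space by Theorem \ref{theDMCXYi}. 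The single property I would invoke repeatedly is the defining feature of a natural topology $\mathcal{T}$: the subspace topology that $\mathcal{T}$ induces on $\DMC_{[n],\mathcal{Y}}^{(i)}$ is \emph{exactly} $\mathcal{T}_{[n],\mathcal{Y}}^{(i)}$ for every $n\geq 1$.

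For $\sigma$-compactness, I would argue that compactness of a subset is intrinsic to its subspace topology; since $\mathcal{T}$ induces $\mathcal{T}_{[n],\mathcal{Y}}^{(i)}$ and $(\DMC_{[n],\mathcal{Y}}^{(i)},\mathcal{T}_{[n],\mathcal{Y}}^{(i)})$ is compact, each $\DMC_{[n],\mathcal{Y}}^{(i)}$ is a compact subset of $(\DMC_{\ast,\mathcal{Y}}^{(i)},\mathcal{T})$. The representation $\DMC_{\ast,\mathcal{Y}}^{(i)}=\bigcup_{n\geq 1}\DMC_{[n],\mathcal{Y}}^{(i)}$ then exhibits the whole space as a countable union of compacta, which is exactly $\sigma$-compactness. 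For separability, I would pick, for each $n$, a countable dense subset $A_n$ of the compact metric space $(\DMC_{[n],\mathcal{Y}}^{(i)},\mathcal{T}_{[n],\mathcal{Y}}^{(i)})$ and set $A=\bigcup_{n\geq 1}A_n$, which is countable. To check density, take any nonempty $\mathcal{T}$-open set $U$ and a point $\hat{W}\in U$; this point lies in some $\DMC_{[n],\mathcal{Y}}^{(i)}$, so $U\cap\DMC_{[n],\mathcal{Y}}^{(i)}$ is nonempty and open in $\mathcal{T}_{[n],\mathcal{Y}}^{(i)}$, hence meets $A_n\subset A$.

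For path-connectedness, given $\hat{W}_1,\hat{W}_2\in\DMC_{\ast,\mathcal{Y}}^{(i)}$, I would choose $n$ large enough that both lie in $\DMC_{[n],\mathcal{Y}}^{(i)}$ (possible by the nesting), use path-connectedness of $(\DMC_{[n],\mathcal{Y}}^{(i)},\mathcal{T}_{[n],\mathcal{Y}}^{(i)})$ to obtain a path between them inside that subspace, and observe that this path remains continuous as a map into $(\DMC_{\ast,\mathcal{Y}}^{(i)},\mathcal{T})$ because the inclusion of the subspace is continuous (the subspace topology is the induced one).

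There is no serious obstacle here; the proof is essentially bookkeeping built on Theorem \ref{theDMCXYi} and Proposition \ref{propEmbedInEquiv}. The only point requiring a moment's care is the step ``open in the whole space meets the dense piece'': one must use that $U\cap\DMC_{[n],\mathcal{Y}}^{(i)}$ is open in the induced topology $\mathcal{T}_{[n],\mathcal{Y}}^{(i)}$, which is precisely what the word \emph{natural} guarantees, rather than in some finer or coarser topology. Likewise, for path-connectedness the key simplification is that one never needs to leave a single $\DMC_{[n],\mathcal{Y}}^{(i)}$, so no compatibility between the topologies on different pieces is needed beyond the fact that they are nested.
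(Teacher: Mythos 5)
Your proof is correct and follows essentially the same route as the paper: both decompose $\DMC_{\ast,\mathcal{Y}}^{(i)}$ as the countable union of the pieces $\DMC_{[n],\mathcal{Y}}^{(i)}$, use naturalness to transfer compactness, separability and path-connectedness of each piece (Theorem \ref{theDMCXYi}) to the whole space. The only cosmetic difference is in path-connectedness, where the paper invokes the common point guaranteed by $\bigcap_{n\geq 1}\DMC_{[n],\mathcal{Y}}^{(i)}=\DMC_{[1],\mathcal{Y}}^{(i)}\neq\emptyset$ while you use the nesting to place both endpoints in a single path-connected piece; these are interchangeable one-line arguments.
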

\begin{proof}
Since $\DMC_{\ast,\mathcal{Y}}^{(i)}$ is the countable union of compact and separable subspaces (namely $\{\DMC_{[n],\mathcal{Y}}^{(i)}\}_{n\geq 1}$), $\DMC_{\ast,\mathcal{Y}}^{(i)}$ is $\sigma$-compact and separable as well.

On the other hand, since $\displaystyle\bigcap_{n\geq 1}\textstyle\DMC_{[n],\mathcal{Y}}^{(i)}=\DMC_{[1],\mathcal{Y}}^{(i)}\neq\o$ and since $\DMC_{[n],\mathcal{Y}}^{(i)}$ is path-connected for every $n\geq1$, the union $\textstyle\DMC_{\ast,\mathcal{Y}}^{(i)}={\displaystyle\bigcup_{n\geq 0}}\DMC_{[n],\mathcal{Y}}^{(i)}$ is path-connected.
\end{proof}

\vspace*{3mm}

Proposition \ref{propInteriorEmptyDMCXni} implies that if $|\mathcal{Y}|=1$, then $\DMC_{\ast,\mathcal{Y}}^{(i)}=\DMC_{[1],\mathcal{Y}}^{(i)}$, and so the only natural topology on $\DMC_{\ast,\mathcal{Y}}^{(i)}$ is $\mathcal{T}_{[1],\mathcal{Y}}^{(i)}$. Similarly, if $|\mathcal{Y}|=2$, then $\DMC_{\ast,\mathcal{Y}}^{(i)}=\DMC_{[2],\mathcal{Y}}^{(i)}$, and the only natural topology on $\DMC_{\ast,\mathcal{Y}}^{(i)}$ is $\mathcal{T}_{[2],\mathcal{Y}}^{(i)}$. In the rest of this section, we investigate the properties of natural topologies when $|\mathcal{Y}|\geq 3$.

\begin{myprop}
\label{propNaturallyOpenUnboundedIn}
If $|\mathcal{Y}|\geq 3$ and $\mathcal{T}$ is a natural topology, every open set is rank-unbounded.
\end{myprop}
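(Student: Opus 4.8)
The plan is to argue by contradiction, reducing everything to the emptiness-of-interior statement already established in Proposition \ref{propInteriorEmptyDMCXni}. Concretely, I would suppose that some \emph{nonempty} open set $U$ of $(\DMC_{\ast,\mathcal{Y}}^{(i)},\mathcal{T})$ is rank-bounded, say $U\subset \DMC_{[n],\mathcal{Y}}^{(i)}$ for some $n\geq 1$. (The empty set is rank-bounded for trivial reasons, so the claim is understood to concern nonempty open sets.) I would then derive a contradiction by examining how $U$ sits inside the next space $\DMC_{[n+1],\mathcal{Y}}^{(i)}$.

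The first step is to transfer openness from the ambient topology $\mathcal{T}$ down to the relative topology on $\DMC_{[n+1],\mathcal{Y}}^{(i)}$. Since $U$ is open in $\mathcal{T}$, the set $U\cap \DMC_{[n+1],\mathcal{Y}}^{(i)}$ is open in the subspace topology that $\mathcal{T}$ induces on $\DMC_{[n+1],\mathcal{Y}}^{(i)}$; because $\mathcal{T}$ is natural, this induced topology is exactly $\mathcal{T}_{[n+1],\mathcal{Y}}^{(i)}$. Using the inclusion $\DMC_{[n],\mathcal{Y}}^{(i)}\subset \DMC_{[n+1],\mathcal{Y}}^{(i)}$ furnished by the canonical embedding of Proposition \ref{propEmbedInEquiv}, the hypothesis $U\subset \DMC_{[n],\mathcal{Y}}^{(i)}$ yields $U\cap \DMC_{[n+1],\mathcal{Y}}^{(i)}=U$. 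Hence $U$ itself is open in $(\DMC_{[n+1],\mathcal{Y}}^{(i)},\mathcal{T}_{[n+1],\mathcal{Y}}^{(i)})$ while still being contained in $\DMC_{[n],\mathcal{Y}}^{(i)}$.

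The second step finishes the argument. An open subset of $\DMC_{[n+1],\mathcal{Y}}^{(i)}$ that is contained in $\DMC_{[n],\mathcal{Y}}^{(i)}$ necessarily lies in the interior of $\DMC_{[n],\mathcal{Y}}^{(i)}$ computed with respect to $\mathcal{T}_{[n+1],\mathcal{Y}}^{(i)}$. But $|\mathcal{Y}|\geq 3$ and $n<n+1$, so Proposition \ref{propInteriorEmptyDMCXni} guarantees that this interior is empty. Therefore $U=\o$, contradicting the nonemptiness of $U$; consequently every nonempty open set must be rank-unbounded.

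I do not expect a genuine obstacle, since the hard work is carried by Proposition \ref{propInteriorEmptyDMCXni}. The only points that require care are the two bookkeeping facts making the reduction legitimate: that naturality permits replacing the induced subspace topology by $\mathcal{T}_{[n+1],\mathcal{Y}}^{(i)}$, and that rank-boundedness forces $U\cap \DMC_{[n+1],\mathcal{Y}}^{(i)}=U$ so that openness is inherited verbatim. If one preferred not to single out $m=n+1$, the identical argument goes through with any $m>n$.
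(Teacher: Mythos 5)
Your proof is correct and is essentially identical to the paper's: both argue by contradiction, use naturality to conclude that $U\cap \DMC_{[n+1],\mathcal{Y}}^{(i)}$ is open in $(\DMC_{[n+1],\mathcal{Y}}^{(i)},\mathcal{T}_{[n+1],\mathcal{Y}}^{(i)})$, and then invoke Proposition \ref{propInteriorEmptyDMCXni} to force this set (which equals $U$ itself, since $U\subset\DMC_{[n],\mathcal{Y}}^{(i)}\subset\DMC_{[n+1],\mathcal{Y}}^{(i)}$) to be empty. The only difference is cosmetic bookkeeping in how the identity $U=U\cap\DMC_{[n+1],\mathcal{Y}}^{(i)}$ is stated.
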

\begin{proof}
Assume to the contrary that there exists a non-empty open set $U\in \mathcal{T}$ such that $U\subset \DMC_{[n],\mathcal{Y}}^{(i)}$ for some $n\geq 1$. $U\cap \DMC_{[n+1],\mathcal{Y}}^{(i)}$ is open in $\DMC_{[n+1],\mathcal{Y}}^{(i)}$ because $\mathcal{T}$ is natural. On the other hand, $U\cap \DMC_{[n+1],\mathcal{Y}}^{(i)}\subset U\subset \DMC_{[n],\mathcal{Y}}^{(i)}$. Proposition \ref{propInteriorEmptyDMCXni} now implies that $U\cap \DMC_{[n+1],\mathcal{Y}}^{(i)}=\o$. Therefore,
$$\textstyle U=U\cap \DMC_{[n],\mathcal{Y}}^{(i)}\subset U\cap \DMC_{[n+1],\mathcal{Y}}^{(i)}=\o,$$ which is a contradiction.
\end{proof}

\begin{mycor}
\label{corInteriorEmptyDMCXni}
If $|\mathcal{Y}|\geq 3$ and $\mathcal{T}$ is a natural topology, then for every $n\geq 1$, the interior of $\DMC_{[n],\mathcal{Y}}^{(i)}$ in $(\DMC_{\ast,\mathcal{Y}}^{(i)},\mathcal{T})$ is empty.
\end{mycor}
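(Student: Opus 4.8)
The plan is to deduce the statement immediately from Proposition \ref{propNaturallyOpenUnboundedIn}, which has already done all the substantive work (and which itself rests on Proposition \ref{propInteriorEmptyDMCXni}). Fix $n\geq 1$ and let $U$ denote the interior of $\DMC_{[n],\mathcal{Y}}^{(i)}$ in $(\DMC_{\ast,\mathcal{Y}}^{(i)},\mathcal{T})$. By the definition of interior, $U$ is open in $\mathcal{T}$ and $U\subset \DMC_{[n],\mathcal{Y}}^{(i)}$.

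First I would observe that the inclusion $U\subset \DMC_{[n],\mathcal{Y}}^{(i)}$ exhibits $U$ as a \emph{rank-bounded} set, since rank-boundedness of a subset $A$ only requires the existence of some $m\geq 1$ with $A\subset \DMC_{[m],\mathcal{Y}}^{(i)}$, and here $m=n$ works. On the other hand, Proposition \ref{propNaturallyOpenUnboundedIn} asserts that when $|\mathcal{Y}|\geq 3$ and $\mathcal{T}$ is natural, every non-empty open set is rank-unbounded. Applying this to the open set $U$: if $U$ were non-empty, it would be simultaneously rank-bounded and rank-unbounded, which is impossible.

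Therefore $U=\o$, i.e.\ the interior of $\DMC_{[n],\mathcal{Y}}^{(i)}$ in $(\DMC_{\ast,\mathcal{Y}}^{(i)},\mathcal{T})$ is empty. Since $n\geq 1$ was arbitrary, the corollary follows.

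I do not expect any genuine obstacle in this step: the entire geometric content has already been absorbed into Proposition \ref{propNaturallyOpenUnboundedIn}, and the corollary is merely the observation that the largest open subset of the rank-bounded set $\DMC_{[n],\mathcal{Y}}^{(i)}$ cannot itself be a non-empty open set under a natural topology.
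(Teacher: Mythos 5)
Your proof is correct and is exactly the deduction the paper intends: the corollary is stated there without proof as an immediate consequence of Proposition \ref{propNaturallyOpenUnboundedIn}, since the interior of $\DMC_{[n],\mathcal{Y}}^{(i)}$ is an open set contained in $\DMC_{[n],\mathcal{Y}}^{(i)}$, hence rank-bounded, and so must be empty. Nothing is missing; you have simply written out the one-line argument the paper leaves implicit.
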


\begin{myprop}
\label{propNaturaIsNotBaireIn}
If $|\mathcal{Y}|\geq 3$ and $\mathcal{T}$ is a Hausdorff natural topology, then $(\DMC_{\ast,\mathcal{Y}}^{(i)},\mathcal{T})$ is not a Baire space.
\end{myprop}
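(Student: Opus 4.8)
The plan is to exhibit $(\DMC_{\ast,\mathcal{Y}}^{(i)},\mathcal{T})$ as a countable union of closed sets, each having empty interior, which is exactly what a Baire space forbids. The decomposition is already at hand from the definition of the space, namely $\DMC_{\ast,\mathcal{Y}}^{(i)}=\bigcup_{n\geq 1}\DMC_{[n],\mathcal{Y}}^{(i)}$, so the work reduces to checking two properties of each piece $\DMC_{[n],\mathcal{Y}}^{(i)}$: that it is closed in $\mathcal{T}$, and that it is nowhere dense.

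First I would establish closedness. Since $\mathcal{T}$ is natural, it induces the quotient topology $\mathcal{T}_{[n],\mathcal{Y}}^{(i)}$ on the subspace $\DMC_{[n],\mathcal{Y}}^{(i)}$. By Theorem \ref{theDMCXYi}, the space $(\DMC_{[n],\mathcal{Y}}^{(i)},\mathcal{T}_{[n],\mathcal{Y}}^{(i)})$ is compact, so $\DMC_{[n],\mathcal{Y}}^{(i)}$ is a compact subset of $(\DMC_{\ast,\mathcal{Y}}^{(i)},\mathcal{T})$. Because $\mathcal{T}$ is assumed Hausdorff, every compact subset is closed, and hence $\DMC_{[n],\mathcal{Y}}^{(i)}$ is closed in $\mathcal{T}$ for every $n\geq 1$. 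The nowhere-density is immediate from Corollary \ref{corInteriorEmptyDMCXni}, which asserts precisely that when $|\mathcal{Y}|\geq 3$ the interior of $\DMC_{[n],\mathcal{Y}}^{(i)}$ in $(\DMC_{\ast,\mathcal{Y}}^{(i)},\mathcal{T})$ is empty; combined with closedness, each $\DMC_{[n],\mathcal{Y}}^{(i)}$ is a closed nowhere-dense set.

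Finally I would invoke the category characterization of Baire spaces. The whole space $\DMC_{\ast,\mathcal{Y}}^{(i)}$ is nonempty since it contains $\DMC_{[1],\mathcal{Y}}^{(i)}$, and it is the countable union $\bigcup_{n\geq 1}\DMC_{[n],\mathcal{Y}}^{(i)}$ of closed sets with empty interior. If $(\DMC_{\ast,\mathcal{Y}}^{(i)},\mathcal{T})$ were a Baire space, then this countable union of nowhere-dense closed sets would itself have empty interior; but the union equals the entire space, whose interior is nonempty. This contradiction shows that $(\DMC_{\ast,\mathcal{Y}}^{(i)},\mathcal{T})$ is not a Baire space.

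The only genuinely delicate point is the closedness step, which crucially uses \emph{both} hypotheses: naturality is what forces the subspace topology on $\DMC_{[n],\mathcal{Y}}^{(i)}$ to coincide with the compact quotient topology, and the Hausdorff assumption is what lets us upgrade compactness to closedness in the ambient space. Everything else is a direct assembly of the earlier results, so I do not anticipate any further obstacle.
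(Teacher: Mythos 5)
Your proposal is correct and follows essentially the same route as the paper: both establish that each $\DMC_{[n],\mathcal{Y}}^{(i)}$ is compact (by naturality) hence closed (by Hausdorffness) with empty interior (Corollary \ref{corInteriorEmptyDMCXni}), and then derive a contradiction from the countable decomposition $\DMC_{\ast,\mathcal{Y}}^{(i)}=\bigcup_{n\geq 1}\DMC_{[n],\mathcal{Y}}^{(i)}$. The only cosmetic difference is that you use the meager-union formulation of the Baire property (a countable union of closed nowhere-dense sets cannot be the whole space), while the paper uses the complementary formulation (the open dense complements have empty intersection); these are the same argument.
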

\begin{proof}
Fix $n\geq 1$. Since $\mathcal{T}$ is natural, $\DMC_{[n],\mathcal{Y}}^{(i)}$ is a compact subset of $(\DMC_{\ast,\mathcal{Y}}^{(i)},\mathcal{T})$. But $\mathcal{T}$ is Hausdorff, so $\DMC_{[n],\mathcal{Y}}^{(i)}$ is a closed subset of $(\DMC_{\ast,\mathcal{Y}}^{(i)},\mathcal{T})$. Therefore, $\DMC_{\ast,\mathcal{Y}}^{(i)}\setminus\DMC_{[n],\mathcal{Y}}^{(i)}$ is open.

On the other hand, Corollary \ref{corInteriorEmptyDMCXni} shows that the interior of $\DMC_{[n],\mathcal{Y}}^{(i)}$ in $(\DMC_{\ast,\mathcal{Y}}^{(i)},\mathcal{T})$ is empty. Therefore, $\DMC_{\ast,\mathcal{Y}}^{(i)}\setminus\DMC_{[n],\mathcal{Y}}^{(i)}$ is dense in $(\DMC_{\ast,\mathcal{Y}}^{(i)},\mathcal{T})$.

Now since
$$\bigcap_{n\geq 1} {\textstyle\left(\DMC_{\ast,\mathcal{Y}}^{(i)}\setminus\DMC_{[n],\mathcal{Y}}^{(i)}\right)}={\DMC}_{\ast,\mathcal{Y}}^{(i)}\setminus {\textstyle\left({\displaystyle\bigcup_{n\geq 1}}\DMC_{[n],\mathcal{Y}}^{(i)}\right)}=\o,$$
and since $\DMC_{\ast,\mathcal{Y}}^{(i)}\setminus\DMC_{[n],\mathcal{Y}}^{(i)}$ is open and dense in $(\DMC_{\ast,\mathcal{Y}}^{(i)},\mathcal{T})$ for every $n\geq 1$, we conclude that $(\DMC_{\ast,\mathcal{Y}}^{(i)},\mathcal{T})$ is not a Baire space.
\end{proof}

\begin{mycor}
\label{corNaturalNotCompleteIn}
If $|\mathcal{Y}|\geq 3$, no natural topology on $\DMC_{\ast,\mathcal{Y}}^{(i)}$ can be completely metrizable.
\end{mycor}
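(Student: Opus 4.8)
The plan is to derive this as an immediate consequence of Proposition \ref{propNaturaIsNotBaireIn} together with the classical Baire Category Theorem, which asserts that every completely metrizable topological space is a Baire space. The strategy is a proof by contradiction: assume that some natural topology $\mathcal{T}$ on $\DMC_{\ast,\mathcal{Y}}^{(i)}$ is completely metrizable, and then exhibit an incompatibility with the fact that $\mathcal{T}$ cannot be Baire.

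First I would suppose, toward a contradiction, that $\mathcal{T}$ is a completely metrizable natural topology on $\DMC_{\ast,\mathcal{Y}}^{(i)}$. Since $\mathcal{T}$ is metrizable, it is in particular Hausdorff. Hence $\mathcal{T}$ is a Hausdorff natural topology, and Proposition \ref{propNaturaIsNotBaireIn} applies (using the hypothesis $|\mathcal{Y}|\geq 3$), showing that $(\DMC_{\ast,\mathcal{Y}}^{(i)},\mathcal{T})$ is \emph{not} a Baire space. On the other hand, by assumption $\mathcal{T}$ is completely metrizable, so the Baire Category Theorem guarantees that $(\DMC_{\ast,\mathcal{Y}}^{(i)},\mathcal{T})$ \emph{is} a Baire space. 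These two conclusions contradict each other, so no such $\mathcal{T}$ can exist.

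There is essentially no technical obstacle here: all the substantive work was carried out in establishing Proposition \ref{propNaturaIsNotBaireIn}, which in turn rested on Corollary \ref{corInteriorEmptyDMCXni} and the countable exhaustion $\DMC_{\ast,\mathcal{Y}}^{(i)}=\bigcup_{n\geq 1}\DMC_{[n],\mathcal{Y}}^{(i)}$ by closed sets with empty interior. The only external ingredient is the Baire Category Theorem, which links complete metrizability to the Baire property. The one point worth stating explicitly is that metrizability forces the Hausdorff separation axiom, so that Proposition \ref{propNaturaIsNotBaireIn} is indeed applicable; everything else is a direct logical chain.
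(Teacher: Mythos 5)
Your proof is correct and is essentially identical to the paper's own argument: both derive the corollary by noting that a completely metrizable topology would be Hausdorff (so Proposition \ref{propNaturaIsNotBaireIn} applies) and Baire (by the Baire Category Theorem), which is a contradiction. No gaps; the explicit remark that metrizability yields the Hausdorff property is exactly the point the paper's one-line proof implicitly relies on.
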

\begin{proof}
The corollary follows from Proposition \ref{propNaturaIsNotBaireIn} and the fact that every completely metrizable topology is both Hausdorff and Baire.
\end{proof}

\begin{myprop}
\label{propNaturalNotLocallyCompactIn}
If $|\mathcal{Y}|\geq 3$ and $\mathcal{T}$ is a Hausdorff natural topology, then $(\DMC_{\ast,\mathcal{Y}}^{(i)},\mathcal{T})$ is not locally compact anywhere, i.e., for every $\hat{W}\in \DMC_{\ast,\mathcal{Y}}^{(i)}$, there is no compact neighborhood of $\hat{W}$ in $(\DMC_{\ast,\mathcal{Y}}^{(i)},\mathcal{T})$.
\end{myprop}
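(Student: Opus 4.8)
The plan is to argue by contradiction: suppose some $\hat{W}\in\DMC_{\ast,\mathcal{Y}}^{(i)}$ admits a compact neighborhood $K$, and derive a contradiction with Proposition \ref{propNaturallyOpenUnboundedIn}, which asserts that every nonempty open set is rank-unbounded. The strategy is to locate inside $K$ a nonempty $\mathcal{T}$-open set that is entirely contained in some level $\DMC_{[n_0],\mathcal{Y}}^{(i)}$, i.e., a rank-bounded open set, which is exactly what Proposition \ref{propNaturallyOpenUnboundedIn} forbids. The engine for producing such a set will be the Baire category theorem applied to a suitable compact subset of $K$.

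First I would set $U=\operatorname{int}(K)$, which is a nonempty $\mathcal{T}$-open neighborhood of $\hat{W}$. Since $\mathcal{T}$ is Hausdorff and $K$ is compact, $K$ is closed, so $\overline{U}\subset K$ and hence $\overline{U}$ is a closed subset of a compact set, thus itself compact. As a compact Hausdorff (hence locally compact Hausdorff) space, $\overline{U}$ is a Baire space; this is the space on which the category argument will be run.

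Next I would exploit the decomposition $\DMC_{\ast,\mathcal{Y}}^{(i)}=\bigcup_{n\geq 1}\DMC_{[n],\mathcal{Y}}^{(i)}$. Because $\mathcal{T}$ is natural, each $\DMC_{[n],\mathcal{Y}}^{(i)}$ is compact in $\mathcal{T}$, and because $\mathcal{T}$ is Hausdorff it is therefore closed (the same observation used in the proof of Proposition \ref{propNaturaIsNotBaireIn}). Consequently $\overline{U}=\bigcup_{n\geq 1}\big(\overline{U}\cap\DMC_{[n],\mathcal{Y}}^{(i)}\big)$ exhibits the Baire space $\overline{U}$ as a countable union of sets that are closed in $\overline{U}$. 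By the Baire category theorem, at least one of them, say $\overline{U}\cap\DMC_{[n_0],\mathcal{Y}}^{(i)}$, has nonempty interior relative to $\overline{U}$; that is, there is a $\mathcal{T}$-open set $O$ with $\emptyset\neq O\cap\overline{U}\subset\DMC_{[n_0],\mathcal{Y}}^{(i)}$.

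The last, and most delicate, step is to convert this relatively open set into a genuine nonempty $\mathcal{T}$-open set sitting inside $\DMC_{[n_0],\mathcal{Y}}^{(i)}$. Here I would use that $U$ is dense in $\overline{U}$: since $O\cap\overline{U}$ is a nonempty set open in $\overline{U}$, it must meet the dense subset $U$, so $O\cap U\neq\emptyset$. Then $O\cap U$ is a nonempty $\mathcal{T}$-open set with $O\cap U\subset O\cap\overline{U}\subset\DMC_{[n_0],\mathcal{Y}}^{(i)}$, hence rank-bounded, contradicting Proposition \ref{propNaturallyOpenUnboundedIn}. I expect this density step to be the crux: Baire only yields interior relative to $\overline{U}$, and passing to an honest open set of the ambient space—without which Proposition \ref{propNaturallyOpenUnboundedIn} cannot be invoked—is precisely what the denseness of $U$ in $\overline{U}$ provides.
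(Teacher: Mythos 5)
Your proof is correct and follows essentially the same route as the paper: a proof by contradiction running a Baire-category argument against the decomposition $\DMC_{\ast,\mathcal{Y}}^{(i)}=\bigcup_{n\geq 1}\DMC_{[n],\mathcal{Y}}^{(i)}$ into rank levels that are closed because they are compact in a Hausdorff topology. The only difference is organizational: the paper applies the dense-open-sets form of the Baire property directly to the open set $U\subset K$ (using that an open subspace of a Baire space is Baire, and contradicting Baireness via Corollary \ref{corInteriorEmptyDMCXni}), whereas you apply the closed-sets form to the compact set $\overline{U}$ and then use the density of $U$ in $\overline{U}$ to extract a genuine ambient open set inside one level, contradicting Proposition \ref{propNaturallyOpenUnboundedIn} instead.
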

\begin{proof}
Assume to the contrary that there exists a compact neighborhood $K$ of $\hat{W}$. There exists an open set $U$ such that $\hat{W}\in U\subset K$.

Since $K$ is compact and Hausdorff, it is a Baire space. Moreover, since $U$ is an open subset of $K$, $U$ is also a Baire space.

Fix $n\geq 1$. Since the interior of $\DMC_{[n],\mathcal{Y}}^{(i)}$ in $(\DMC_{\ast,\mathcal{Y}}^{(i)},\mathcal{T})$ is empty, the interior of $U\cap\DMC_{[n],\mathcal{Y}}^{(i)}$ in $U$ is also empty. Therefore, $U\setminus\DMC_{[n],\mathcal{Y}}^{(i)}$ is dense in $U$. On the other hand, since $\mathcal{T}$ is natural, $\DMC_{[n],\mathcal{Y}}^{(i)}$ is compact which implies that it is closed because $\mathcal{T}$ is Hausdorff. Therefore, $U\setminus\DMC_{[n],\mathcal{Y}}^{(i)}$ is open in $U$. Now since
$$\bigcap_{n\geq 1} {\textstyle\left(U\setminus\DMC_{[n],\mathcal{Y}}^{(i)}\right)}=U\setminus {\textstyle\left({\displaystyle\bigcup_{n\geq 1}}\DMC_{[n],\mathcal{Y}}^{(i)}\right)}=\o,$$
and since $U\setminus\DMC_{[n],\mathcal{Y}}^{(i)}$ is open and dense in $U$ for every $n\geq 1$, $U$ is not Baire, which is a contradiction. Therefore, there is no compact neighborhood of $\hat{W}$ in $(\DMC_{\ast,\mathcal{Y}}^{(i)},\mathcal{T})$.
\end{proof}

\section{Strong topology on $\DMC_{\ast,\mathcal{Y}}^{(i)}$}

\label{subsecInStrongTop}

The first natural topology that we study is the \emph{strong topology} $\mathcal{T}_{s,\ast,\mathcal{Y}}^{(i)}$ on $\DMC_{\ast,\mathcal{Y}}^{(i)}$, which is the finest natural topology.

Since the spaces $\{\DMC_{[n],\mathcal{Y}}\}_{n\geq 1}$ are disjoint and since there is no a priori way to (topologically) compare channels in $\DMC_{[n],\mathcal{Y}}$ with channels in $\DMC_{[n'],\mathcal{Y}}$ for $n\neq n'$, the ``most natural" topology that we can define on $\DMC_{\ast,\mathcal{Y}}$ is the disjoint union topology $\mathcal{T}_{s,\ast,\mathcal{Y}}:=\displaystyle\bigoplus_{n\geq 1}\mathcal{T}_{[n],\mathcal{Y}}$. Clearly, the space $(\DMC_{\ast,\mathcal{Y}},\mathcal{T}_{s,\ast,\mathcal{Y}})$ is disconnected. Moreover, $\mathcal{T}_{s,\ast,\mathcal{Y}}$ is metrizable because it is the disjoint union of metrizable spaces. It is also $\sigma$-compact because it is the union of countably many compact spaces.

We added the subscript $s$ to emphasize the fact that $\mathcal{T}_{s,\ast,\mathcal{Y}}$ is a strong topology (remember that the disjoint union topology is the \emph{finest} topology that makes the canonical injections continuous).

\begin{mydef}
We define the strong topology $\mathcal{T}_{s,\ast,\mathcal{Y}}^{(i)}$ on $\DMC_{\ast,\mathcal{Y}}^{(i)}$ as the quotient topology $\mathcal{T}_{s,\ast,\mathcal{Y}}/R_{\ast,\mathcal{Y}}^{(i)}$.

We call open and closed sets in $(\DMC_{\ast,\mathcal{Y}}^{(i)},\mathcal{T}_{s,\ast,\mathcal{Y}}^{(i)})$ as strongly open and strongly closed sets respectively.
\end{mydef}

Let $\Proj:\DMC_{\ast,\mathcal{Y}}\rightarrow\DMC_{\ast,\mathcal{Y}}^{(i)}$ be the projection onto the $R_{\ast,\mathcal{Y}}^{(i)}$-equivalence classes, and for every $n\geq 1$ let $\Proj_n:\DMC_{[n],\mathcal{Y}}\rightarrow\DMC_{[n],\mathcal{Y}}^{(i)}$ be the projection onto the $R_{[n],\mathcal{Y}}^{(i)}$-equivalence classes. Due to the identifications that we made in Section \ref{secInEquivSpace}, we have $\Proj(W)=\Proj_n(W)$ for every $W\in\DMC_{[n],\mathcal{Y}}$. Therefore, for every $U\subset \DMC_{\ast,\mathcal{Y}}^{(i)}$, we have
$$\textstyle\Proj^{-1}(U)={\displaystyle\coprod_{n\geq 1}}\Proj_n^{-1}(U\cap\DMC_{[n],\mathcal{Y}}^{(i)}).$$
Hence,
\begin{align*}
\textstyle U\in\mathcal{T}_{s,\ast,\mathcal{Y}}^{(i)}\;\;&\stackrel{(a)}{\Leftrightarrow}\;\;\textstyle\Proj^{-1}(U)\in\mathcal{T}_{s,\ast,\mathcal{Y}}\\
&\textstyle\stackrel{(b)}{\Leftrightarrow}\;\;\Proj^{-1}(U)\cap\DMC_{[n],\mathcal{Y}} \in\mathcal{T}_{[n],\mathcal{Y}},\;\;\forall n\geq 1\\
&\textstyle\Leftrightarrow\;\;\left({\displaystyle\coprod_{n'\geq 1}}\Proj_{n'}^{-1}(U\cap\DMC_{[n'],\mathcal{Y}}^{(i)})\right)\cap\DMC_{[n],\mathcal{Y}} \in\mathcal{T}_{[n],\mathcal{Y}},\;\;\forall n\geq 1\\
&\textstyle\Leftrightarrow\;\;\Proj_n^{-1}(U\cap\DMC_{[n],\mathcal{Y}}^{(i)}) \in\mathcal{T}_{[n],\mathcal{Y}},\;\;\forall n\geq 1\\
&\textstyle\stackrel{(c)}{\Leftrightarrow}\;\;U\cap\DMC_{[n],\mathcal{Y}}^{(i)} \in\mathcal{T}_{[n],\mathcal{Y}}^{(i)},\;\;\forall n\geq 1,
\end{align*}
where (a) and (c) follows from the properties of the quotient topology, and (b) follows from the properties of the disjoint union topology.

We conclude that $U\subset \DMC_{\ast,\mathcal{Y}}^{(i)}$ is strongly open in $\DMC_{\ast,\mathcal{Y}}^{(i)}$ if and only if $U\cap \DMC_{[n],\mathcal{Y}}^{(i)}$ is open in $\DMC_{[n],\mathcal{Y}}^{(i)}$ for every $n\geq 1$. This shows that the topology on $\DMC_{[n],\mathcal{Y}}^{(i)}$ that is inherited from $(\DMC_{\ast,\mathcal{Y}}^{(i)},\mathcal{T}_{s,\ast,\mathcal{Y}}^{(i)})$ is exactly $\mathcal{T}_{[n],\mathcal{Y}}^{(i)}$. Therefore, $\mathcal{T}_{s,\ast,\mathcal{Y}}^{(i)}$ is a natural topology. On the other hand, if $\mathcal{T}$ is an arbitrary natural topology and $U\in\mathcal{T}$, then $U\cap\DMC_{[n],\mathcal{Y}}^{(i)}$ is open in $\DMC_{[n],\mathcal{Y}}^{(i)}$ for every $n\geq 1$, so $U\in\mathcal{T}_{s,\ast,\mathcal{Y}}^{(i)}$. We conclude that $\mathcal{T}_{s,\ast,\mathcal{Y}}^{(i)}$ is the finest natural topology.

\vspace*{3mm}

We can also characterize the strongly closed subsets of $\DMC_{\ast,\mathcal{Y}}^{(i)}$ in terms of the closed sets of the $\DMC_{[n],\mathcal{Y}}^{(i)}$ spaces:
\begin{align*}
\textstyle F\;\text{is strongly closed in}\;\DMC_{\ast,\mathcal{Y}}^{(i)}\;\;&\Leftrightarrow\;\;\textstyle\DMC_{\ast,\mathcal{Y}}^{(i)}\setminus F\;\text{is strongly open in}\;\textstyle\DMC_{\ast,\mathcal{Y}}^{(i)}\\
&\textstyle\Leftrightarrow\;\;\left(\DMC_{\ast,\mathcal{Y}}^{(i)}\setminus F\right)\cap\DMC_{[n],\mathcal{Y}}^{(i)}\;\text{is open in}\;\DMC_{[n],\mathcal{Y}}^{(i)},\;\;\forall n\geq 1\\
&\textstyle\Leftrightarrow\;\;\DMC_{[n],\mathcal{Y}}^{(i)}\setminus \left(F\cap\DMC_{[n],\mathcal{Y}}^{(i)}\right)\;\text{is open in}\;\DMC_{[n],\mathcal{Y}}^{(i)},\;\;\forall n\geq 1\\
&\textstyle\Leftrightarrow\;\;F\cap\DMC_{[n],\mathcal{Y}}^{(i)}\;\text{is closed in}\;\DMC_{[n],\mathcal{Y}}^{(i)},\;\;\forall n\geq 1.
\end{align*}

Since $\DMC_{[n],\mathcal{Y}}^{(i)}$ is metrizable for every $n\geq 1$, it is also normal. We can use this fact to prove that the strong topology on $\DMC_{\ast,\mathcal{Y}}^{(i)}$ is normal:

\begin{mylem}
\label{lemDMCXiNorm}
$(\DMC_{\ast,\mathcal{Y}}^{(i)},\mathcal{T}_{s,\ast,\mathcal{Y}}^{(i)})$ is normal.
\end{mylem}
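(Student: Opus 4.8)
The plan is to reduce normality to Urysohn's lemma and to build a separating continuous function on $\DMC_{\ast,\mathcal{Y}}^{(i)}$ by an inductive Tietze-extension argument up the tower $\DMC_{[1],\mathcal{Y}}^{(i)}\subset \DMC_{[2],\mathcal{Y}}^{(i)}\subset\cdots$. Write $X=\DMC_{\ast,\mathcal{Y}}^{(i)}$ and $X_n=\DMC_{[n],\mathcal{Y}}^{(i)}$. First I would record the topological facts that drive everything. By the characterization of strongly closed sets established just above, a set $F\subset X$ is strongly closed iff $F\cap X_n$ is closed in $X_n$ for every $n$; dually, a map $f\colon X\to[0,1]$ is continuous for $\mathcal{T}_{s,\ast,\mathcal{Y}}^{(i)}$ iff each restriction $f|_{X_n}$ is continuous, since the preimage of a closed set is strongly closed exactly when it meets every $X_n$ in a closed set. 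Moreover each $X_n$ is compact and metrizable, hence normal, and being compact inside the metrizable (thus Hausdorff) space $X_{n+1}$ it is closed in $X_{n+1}$. Finally $X$ is $T_1$: for $\hat W\in X$ the set $\{\hat W\}\cap X_n$ is either empty or a single point of the Hausdorff space $X_n$, so $\{\hat W\}$ is strongly closed.

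Given disjoint strongly closed sets $A,B\subset X$, the goal is a continuous $f\colon X\to[0,1]$ with $f\equiv 0$ on $A$ and $f\equiv 1$ on $B$; then $f^{-1}\big([0,\tfrac12)\big)$ and $f^{-1}\big((\tfrac12,1]\big)$ are disjoint strongly open sets separating $A$ from $B$. I would construct $f$ as the union of compatible continuous functions $f_n\colon X_n\to[0,1]$ satisfying $f_n\equiv 0$ on $A\cap X_n$, $f_n\equiv 1$ on $B\cap X_n$, and $f_{n+1}|_{X_n}=f_n$. The base case $f_1$ comes from Urysohn's lemma applied to the disjoint closed subsets $A\cap X_1$ and $B\cap X_1$ of the normal space $X_1$.

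For the inductive step, assume $f_n$ is built. I would consider the closed subset $C=X_n\cup(A\cap X_{n+1})\cup(B\cap X_{n+1})$ of $X_{n+1}$ and define $g\colon C\to[0,1]$ by $g=f_n$ on $X_n$, $g=0$ on $A\cap X_{n+1}$, and $g=1$ on $B\cap X_{n+1}$. The map $g$ is well defined because on the overlaps $X_n\cap(A\cap X_{n+1})=A\cap X_n$ and $X_n\cap(B\cap X_{n+1})=B\cap X_n$ the value of $f_n$ already agrees with $0$ and $1$ respectively, while $(A\cap X_{n+1})\cap(B\cap X_{n+1})=\varnothing$ since $A$ and $B$ are disjoint; it is continuous by the pasting lemma for finitely many closed pieces. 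Since $X_{n+1}$ is normal and $C$ is closed in it, the Tietze extension theorem yields a continuous $f_{n+1}\colon X_{n+1}\to[0,1]$ extending $g$, and $f_{n+1}$ then has all the required properties.

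Gluing the $f_n$ produces a well-defined $f\colon X\to[0,1]$ with $f|_{X_n}=f_n$ continuous for every $n$, hence $f$ is continuous in the strong topology, which completes the separation and establishes normality. The one point needing care — and the only real obstacle — is the compatibility condition $f_{n+1}|_{X_n}=f_n$: it is precisely what forces the extension to be carried out along the enlarged closed set $C$ containing all of $X_n$ at each stage, rather than separating $A\cap X_{n+1}$ from $B\cap X_{n+1}$ afresh, and it is exactly this compatibility that makes the limit function continuous in the colimit (weak) topology.
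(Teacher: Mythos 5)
Your proof is correct, and it takes a genuinely different route from the paper's. The paper argues purely with open and closed sets: given disjoint strongly closed $A,B$, it recursively builds four nested sequences $(U_n),(U_n'),(K_n),(K_n')$ with $A_n\cup K_{n-1}\subset U_n\subset K_n$ and $B_n\cup K_{n-1}'\subset U_n'\subset K_n'$, $K_n\cap K_n'=\o$, invoking only the normality of each $\DMC_{[n],\mathcal{Y}}^{(i)}$ at every stage, and then checks that the increasing unions $U=\bigcup_n U_n$ and $U'=\bigcup_n U_n'$ are strongly open and disjoint. Your proof instead builds a single Urysohn-type function $f$ by induction, using Urysohn's lemma at the base and the Tietze extension theorem (applied to the closed set $C=X_n\cup(A\cap X_{n+1})\cup(B\cap X_{n+1})$) at each step, and then separates $A$ and $B$ by $f^{-1}\big([0,\tfrac12)\big)$ and $f^{-1}\big((\tfrac12,1]\big)$. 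The two arguments exploit exactly the same structural facts — the characterization of strongly closed sets, the closedness of $X_n$ in $X_{n+1}$, and the normality of each level — and your enlargement of the extension domain to contain all of $X_n$ plays precisely the role of the paper's nesting conditions $K_{n-1}\subset U_n$, $K_{n-1}'\subset U_n'$: both devices enforce coherence across levels so that the limiting object is compatible with the colimit topology. What your approach buys is lighter bookkeeping (one function and one compatibility condition per level, instead of four interlocking sequences) and, as a by-product, functional separation of $A$ and $B$; what it costs is the use of heavier classical machinery (Tietze extension) where the paper gets by with bare normality of the pieces. One remark: the continuity criterion you use (continuity on $\DMC_{\ast,\mathcal{Y}}^{(i)}$ iff continuity on every $\DMC_{[n],\mathcal{Y}}^{(i)}$) appears in the paper only later, as Lemma \ref{lemContinuityForStrongTopIn}, but you correctly re-derive it from the characterization of strongly open and strongly closed sets established before Lemma \ref{lemDMCXiNorm}, so there is no circularity; also, your observation that $X$ is $T_1$ is not needed for normality itself and can be dropped.
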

\begin{proof}
See Appendix \ref{appDMCXiNorm}.
\end{proof}

\vspace*{3mm}

The following theorem shows that the strong topology satisfies many desirable properties.

\begin{mythe}
\label{theDMCXi}
$(\DMC_{\ast,\mathcal{Y}}^{(i)},\mathcal{T}_{s,\ast,\mathcal{Y}}^{(i)})$ is a compactly generated, sequential and $T_4$ space.
\end{mythe}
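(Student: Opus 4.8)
The plan is to prove the three properties one at a time, in each case reducing a global statement about $(\DMC_{\ast,\mathcal{Y}}^{(i)},\mathcal{T}_{s,\ast,\mathcal{Y}}^{(i)})$ to the corresponding statement on the compact metrizable layers $\DMC_{[n],\mathcal{Y}}^{(i)}$. The two structural facts I would invoke throughout are (i) the characterization derived just above this theorem, that a set $F$ is strongly closed if and only if $F\cap\DMC_{[n],\mathcal{Y}}^{(i)}$ is closed in $\DMC_{[n],\mathcal{Y}}^{(i)}$ for every $n\geq 1$, and (ii) the fact that each $\DMC_{[n],\mathcal{Y}}^{(i)}$ is compact (Theorem \ref{theDMCXYi}), is metrizable (via $d_{[n],\mathcal{Y}}^{(i)}$), and carries exactly the subspace topology $\mathcal{T}_{[n],\mathcal{Y}}^{(i)}$ inside the strong topology (naturality).

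For the $T_4$ property, normality is already supplied by Lemma \ref{lemDMCXiNorm}, so I only need the $T_1$ axiom. Given any point $\hat{W}$, its intersection with each layer, $\{\hat{W}\}\cap\DMC_{[n],\mathcal{Y}}^{(i)}$, is either empty or $\{\hat{W}\}$; both are closed in the metrizable (hence $T_1$) space $\DMC_{[n],\mathcal{Y}}^{(i)}$. By characterization (i) the singleton $\{\hat{W}\}$ is therefore strongly closed, so the space is $T_1$, and together with Lemma \ref{lemDMCXiNorm} this yields $T_4$. For compact generation I would verify that $F$ is strongly closed whenever $F\cap K$ is relatively closed for every compact subspace $K$ (the reverse implication being automatic). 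Because the strong topology is natural, each layer $\DMC_{[n],\mathcal{Y}}^{(i)}$ is itself a compact subspace, so specializing the hypothesis to $K=\DMC_{[n],\mathcal{Y}}^{(i)}$ gives directly that $F\cap\DMC_{[n],\mathcal{Y}}^{(i)}$ is closed in $\DMC_{[n],\mathcal{Y}}^{(i)}$ for every $n$; characterization (i) then forces $F$ to be strongly closed. Thus the compact layers already generate the topology.

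For sequentiality, suppose $F$ is sequentially closed; I would again check that $F\cap\DMC_{[n],\mathcal{Y}}^{(i)}$ is closed in each layer. Since $\DMC_{[n],\mathcal{Y}}^{(i)}$ is metrizable, it is sequential, so it is enough to see that $F\cap\DMC_{[n],\mathcal{Y}}^{(i)}$ is sequentially closed there: if a sequence in this set converges within $\DMC_{[n],\mathcal{Y}}^{(i)}$ to a point $\hat{W}$, then by naturality the same sequence converges to $\hat{W}$ in the ambient strong topology, and sequential closedness of $F$ places $\hat{W}\in F$, hence in the intersection.

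The one point I would take care with---and the only genuine subtlety in the whole argument---is exactly this transfer of convergence from the layer to the ambient space, which is valid precisely because the subspace topology on $\DMC_{[n],\mathcal{Y}}^{(i)}$ coincides with $\mathcal{T}_{[n],\mathcal{Y}}^{(i)}$. No step presents a serious obstacle, since the real analytic work (normality) has been isolated into Lemma \ref{lemDMCXiNorm}; the remainder is the routine but careful bookkeeping of the colimit structure, with every property detected on the compact metrizable layers and lifted through characterization (i).
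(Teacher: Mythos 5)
Your proof is correct, but it takes a genuinely different route from the paper for two of the three properties. The $T_1$-plus-Lemma~\ref{lemDMCXiNorm} argument for $T_4$ is identical to the paper's. For sequentiality and compact generation, however, the paper does not argue layer by layer at all: it exploits the fact that $(\DMC_{\ast,\mathcal{Y}}^{(i)},\mathcal{T}_{s,\ast,\mathcal{Y}}^{(i)})$ is by construction a quotient of the metrizable space $(\DMC_{\ast,\mathcal{Y}},\mathcal{T}_{s,\ast,\mathcal{Y}})$, and invokes the general preservation theorems that a quotient of a sequential space is sequential and that a Hausdorff quotient of a compactly generated space is compactly generated (Hausdorffness being supplied by the $T_4$ step). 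Your argument instead verifies both properties intrinsically: since the layers $\DMC_{[n],\mathcal{Y}}^{(i)}$ are compact and carry the subspace topology, and since a set is strongly closed iff its trace on every layer is closed, testing closedness against the layers alone suffices, which gives compact generation without any quotient theorem (and, incidentally, without needing Hausdorffness); similarly, the transfer of sequential convergence from a layer to the ambient space — valid precisely because subspace convergence always implies ambient convergence — reduces sequentiality to that of the metrizable layers. The paper's route is shorter given the general machinery; yours is more self-contained and in fact establishes the stronger structural statement that the strong topology is determined by a countable family of compact subspaces (a $k_\omega$-type structure), from which both properties follow. Both proofs are sound.
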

\begin{proof}
Since $(\DMC_{\ast,\mathcal{Y}},\mathcal{T}_{s,\ast,\mathcal{Y}})$ is metrizable, it is sequential. Therefore, $(\DMC_{\ast,\mathcal{Y}}^{(i)},\mathcal{T}_{s,\ast,\mathcal{Y}}^{(i)})$, which is the quotient of a sequential space, is sequential.

Let us now show that $\DMC_{\ast,\mathcal{Y}}^{(i)}$ is $T_4$. Fix $\hat{W}\in\DMC_{\ast,\mathcal{Y}}^{(i)}$. For every $n\geq 1$, we have $\{\hat{W}\}\cap \DMC_{[n],\mathcal{Y}}^{(i)}$ is either $\o$ or $\{\hat{W}\}$ depending on whether $\hat{W}\in \DMC_{[n],\mathcal{Y}}^{(i)}$ or not. Since $\DMC_{[n],\mathcal{Y}}^{(i)}$ is metrizable, it is $T_1$ and so singletons are closed in $\DMC_{[n],\mathcal{Y}}^{(i)}$. We conclude that in all cases, $\{\hat{W}\}\cap \DMC_{[n],\mathcal{Y}}^{(i)}$ is closed in $\DMC_{[n],\mathcal{Y}}^{(i)}$ for every $n\geq 1$. Therefore, $\{\hat{W}\}$ is strongly closed in $\DMC_{\ast,\mathcal{Y}}^{(i)}$. This shows that $(\DMC_{\ast,\mathcal{Y}}^{(i)},\mathcal{T}_{s,\ast,\mathcal{Y}}^{(i)})$ is $T_1$. On the other hand, Lemma \ref{lemDMCXiNorm} shows that $(\DMC_{\ast,\mathcal{Y}}^{(i)},\mathcal{T}_{s,\ast,\mathcal{Y}}^{(i)})$ is normal. This means that $(\DMC_{\ast,\mathcal{Y}}^{(i)},\mathcal{T}_{s,\ast,\mathcal{Y}}^{(i)})$ is $T_4$, which implies that it is Hausdorff.

Now since $(\DMC_{\ast,\mathcal{Y}},\mathcal{T}_{s,\ast,\mathcal{Y}})$ is metrizable, it is compactly generated. On the other hand, the quotient space $(\DMC_{\ast,\mathcal{Y}}^{(i)},\mathcal{T}_{s,\ast,\mathcal{Y}}^{(i)})$ was shown to be Hausdorff. We conclude that $(\DMC_{\ast,\mathcal{Y}}^{(i)},\mathcal{T}_{s,\ast,\mathcal{Y}}^{(i)})$ is compactly generated.
\end{proof}

\begin{mycor}
If $|\mathcal{Y}|\geq 3$, $(\DMC_{\ast,\mathcal{Y}}^{(i)},\mathcal{T}_{s,\ast,\mathcal{Y}}^{(i)})$ is not locally compact anywhere.
\end{mycor}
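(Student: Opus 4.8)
The plan is to recognize this corollary as an immediate specialization of Proposition \ref{propNaturalNotLocallyCompactIn}. That proposition asserts that whenever $|\mathcal{Y}|\geq 3$ and $\mathcal{T}$ is a \emph{Hausdorff} and \emph{natural} topology on $\DMC_{\ast,\mathcal{Y}}^{(i)}$, the resulting space fails to be locally compact at every point. Hence it suffices to verify that the strong topology $\mathcal{T}_{s,\ast,\mathcal{Y}}^{(i)}$ satisfies both hypotheses, and the conclusion will follow with no further work.

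First I would invoke the discussion preceding Lemma \ref{lemDMCXiNorm}, where it was shown that a subset $U$ is strongly open precisely when $U\cap\DMC_{[n],\mathcal{Y}}^{(i)}$ is open in $\DMC_{[n],\mathcal{Y}}^{(i)}$ for every $n\geq 1$; equivalently, $\mathcal{T}_{s,\ast,\mathcal{Y}}^{(i)}$ induces the quotient topology $\mathcal{T}_{[n],\mathcal{Y}}^{(i)}$ on each subspace $\DMC_{[n],\mathcal{Y}}^{(i)}$. This is exactly the definition of a natural topology, and in fact $\mathcal{T}_{s,\ast,\mathcal{Y}}^{(i)}$ was identified there as the finest natural topology. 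Second I would appeal to Theorem \ref{theDMCXi}, which establishes that $(\DMC_{\ast,\mathcal{Y}}^{(i)},\mathcal{T}_{s,\ast,\mathcal{Y}}^{(i)})$ is a $T_4$ space; since every $T_4$ space is in particular Hausdorff, the strong topology is Hausdorff.

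With both hypotheses of Proposition \ref{propNaturalNotLocallyCompactIn} confirmed for the choice $\mathcal{T}=\mathcal{T}_{s,\ast,\mathcal{Y}}^{(i)}$, the conclusion is immediate: for every $\hat{W}\in\DMC_{\ast,\mathcal{Y}}^{(i)}$ there is no compact neighborhood of $\hat{W}$ in the strong topology, so the space is not locally compact anywhere. I expect essentially no obstacle in this argument, since the corollary is a direct application of an already-proven general statement to a concrete topology whose two defining properties—naturalness and the Hausdorff separation axiom—are both supplied by the immediately preceding results.
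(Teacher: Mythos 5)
Your proposal is correct and follows exactly the paper's own argument: the paper's proof is the one-line observation that $\mathcal{T}_{s,\ast,\mathcal{Y}}^{(i)}$ is a natural Hausdorff topology (naturalness from the discussion in Section VII, Hausdorffness from Theorem \ref{theDMCXi} via $T_4$), so Proposition \ref{propNaturalNotLocallyCompactIn} applies. You have merely spelled out the verification of the two hypotheses in more detail than the paper does.
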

\begin{proof}
Since $\mathcal{T}_{s,\ast,\mathcal{Y}}^{(i)}$ is a natural Hausdorff topology, Proposition \ref{propNaturalNotLocallyCompactIn} implies that $\mathcal{T}_{s,\ast,\mathcal{Y}}^{(i)}$ is not locally compact anywhere.
\end{proof}

\vspace*{3mm}

As in the case of the space of equivalent channels \cite{RajDMCTop}, the space $(\DMC_{\ast,\mathcal{Y}}^{(i)},\mathcal{T}_{s,\ast,\mathcal{Y}}^{(i)})$ fails to be first-countable (and hence it is not metrizable) when $|\mathcal{Y}|\geq 3$. This is one manifestation of the strength of the topology $\mathcal{T}_{s,\ast,\mathcal{Y}}^{(i)}$. In order to show that $(\DMC_{\ast,\mathcal{Y}}^{(i)},\mathcal{T}_{s,\ast,\mathcal{Y}}^{(i)})$ is not first-countable, we need to characterize the converging sequences in $(\DMC_{\ast,\mathcal{Y}}^{(i)},\mathcal{T}_{s,\ast,\mathcal{Y}}^{(i)})$.

A sequence $(\hat{W}_n)_{n\geq 1}$ in $\DMC_{\ast,\mathcal{Y}}^{(i)}$ is said to be \emph{rank-bounded} if $\irank(\hat{W}_n)$ is bounded. $(\hat{W}_n)_{n\geq 1}$ is \emph{rank-unbounded} if it is not bounded.

The following proposition shows that every rank-unbounded sequence does not converge in $(\DMC_{\ast,\mathcal{Y}}^{(i)},\mathcal{T}_{s,\ast,\mathcal{Y}}^{(i)})$.
\begin{myprop}
\label{propCharacConvSeqIn}
A sequence $(\hat{W}_n)_{n\geq 0}$ converges in $(\DMC_{\ast,\mathcal{Y}}^{(i)},\mathcal{T}_{s,\ast,\mathcal{Y}}^{(i)})$ if and only if there exists $m\geq 1$ such that $\hat{W}_n\in \DMC_{[m],\mathcal{Y}}^{(i)}$ for every $n\geq 0$, and $(\hat{W}_n)_{n\geq 0}$ converges in $(\DMC_{[m],\mathcal{Y}}^{(i)},\mathcal{T}_{[m],\mathcal{Y}}^{(i)})$.
\end{myprop}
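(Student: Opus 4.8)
The plan is to prove the two implications separately, with the forward (``only if'') direction carrying all the weight. For the easy ``if'' direction, suppose there is an $m$ with $\hat{W}_n\in\DMC_{[m],\mathcal{Y}}^{(i)}$ for all $n$ and $\hat{W}_n\to\hat{W}$ in $(\DMC_{[m],\mathcal{Y}}^{(i)},\mathcal{T}_{[m],\mathcal{Y}}^{(i)})$. Since $\mathcal{T}_{s,\ast,\mathcal{Y}}^{(i)}$ is natural, it induces exactly $\mathcal{T}_{[m],\mathcal{Y}}^{(i)}$ on $\DMC_{[m],\mathcal{Y}}^{(i)}$, so the inclusion of $\DMC_{[m],\mathcal{Y}}^{(i)}$ into $(\DMC_{\ast,\mathcal{Y}}^{(i)},\mathcal{T}_{s,\ast,\mathcal{Y}}^{(i)})$ is continuous. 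Continuous maps preserve sequential limits, hence $(\hat{W}_n)$ converges to $\hat{W}$ in the strong topology.

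For the ``only if'' direction, assume $\hat{W}_n\to\hat{W}$ strongly. The first and decisive step is to show the sequence must be rank-bounded, which I would argue by contradiction. If it were rank-unbounded, then $\sup_n \irank(\hat{W}_n)=\infty$, so I can extract a subsequence $(\hat{W}_{n_k})_{k\geq 1}$ whose ranks strictly increase to infinity. Using the identity $\DMC_{[m],\mathcal{Y}}^{(i)}=\{\hat{V}\in\DMC_{\ast,\mathcal{Y}}^{(i)}:\irank(\hat{V})\leq m\}$, for every fixed $m$ only finitely many of these subsequence terms lie in $\DMC_{[m],\mathcal{Y}}^{(i)}$. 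The limit $\hat{W}$ has some finite input-rank, so at most one subsequence term can equal $\hat{W}$; let $S$ be the (still infinite) set of subsequence terms different from $\hat{W}$. Then $S\cap\DMC_{[m],\mathcal{Y}}^{(i)}$ is finite, hence closed in the metrizable (so $T_1$) space $\DMC_{[m],\mathcal{Y}}^{(i)}$, for every $m$. By the characterization of strongly closed sets established earlier ($F$ is strongly closed iff $F\cap\DMC_{[m],\mathcal{Y}}^{(i)}$ is closed in $\DMC_{[m],\mathcal{Y}}^{(i)}$ for all $m$), the set $S$ is strongly closed, so $U:=\DMC_{\ast,\mathcal{Y}}^{(i)}\setminus S$ is a strongly open neighborhood of $\hat{W}$ omitting infinitely many terms of the original sequence. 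This contradicts $\hat{W}_n\to\hat{W}$, so the sequence is rank-bounded.

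Once rank-boundedness is in hand, let $m_1$ bound the ranks of all terms, choose $m_0$ with $\hat{W}\in\DMC_{[m_0],\mathcal{Y}}^{(i)}$, and set $m=\max(m_0,m_1)$, so that every $\hat{W}_n$ and the limit $\hat{W}$ lie in $\DMC_{[m],\mathcal{Y}}^{(i)}$. By naturality the subspace topology that $\mathcal{T}_{s,\ast,\mathcal{Y}}^{(i)}$ induces on $\DMC_{[m],\mathcal{Y}}^{(i)}$ is precisely $\mathcal{T}_{[m],\mathcal{Y}}^{(i)}$, and a sequence that converges in the ambient space to a point of a subspace containing all of its terms converges within that subspace; hence $\hat{W}_n\to\hat{W}$ in $(\DMC_{[m],\mathcal{Y}}^{(i)},\mathcal{T}_{[m],\mathcal{Y}}^{(i)})$, completing the proof. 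I expect the main obstacle to be exactly the rank-boundedness step: constructing the strongly closed ``escaping'' set $S$ from a rank-increasing subsequence, which relies essentially on the colimit-type description of the strong topology (strongly closed sets are those meeting each $\DMC_{[m],\mathcal{Y}}^{(i)}$ in a closed set) together with the finiteness forced by strictly increasing input-ranks.
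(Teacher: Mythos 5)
Your proposal is correct and follows essentially the same route as the paper: the decisive rank-boundedness step is proved via exactly the paper's construction (your set $S$ is the paper's $A=\{\hat{W}_{n_k}\}\setminus\hat{W}$, shown strongly closed because it meets each $\DMC_{[m],\mathcal{Y}}^{(i)}$ in a finite set), and both directions of the equivalence then follow from the naturality/subspace description of the strong topology. The only cosmetic difference is that in the rank-bounded case you enlarge $m$ to capture the limit, where the paper instead invokes strong closedness of $\DMC_{[m],\mathcal{Y}}^{(i)}$; both are valid.
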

\begin{proof}
Assume that a sequence $(\hat{W}_n)_{n\geq 0}$ in $\DMC_{\ast,\mathcal{Y}}^{(i)}$ is rank-unbounded. This cannot happen unless $|\mathcal{Y}|\geq 3$. In order to show that $(\hat{W}_n)_{n\geq 0}$ that does not converge, it is sufficient to show that there exists a subsequence of $(\hat{W}_n)_{n\geq 0}$ which does not converge.

Let $(\hat{W}_{n_k})_{k\geq 0}$ be any subsequence of $(\hat{W}_n)_{n\geq 0}$ where the input-rank strictly increases, i.e., $\irank(W_{n_k})<\irank(W_{n_{k'}})$ for every $0\leq k<k'$. We will show that $(\hat{W}_{n_k})_{k\geq 0}$ does not converge.

Assume to the contrary that $(\hat{W}_{n_k})_{k\geq 0}$ converges to $\hat{W}\in\DMC_{\ast,\mathcal{Y}}^{(i)}$. Define the set $$A=\{\hat{W}_{n_k}:\;k\geq 0\}\setminus\hat{W}.$$ For every $m\geq 1$, the set $A\cap \DMC_{[m],\mathcal{Y}}^{(i)}$ contains finitely many points. This means that $A\cap \DMC_{[m],\mathcal{Y}}^{(i)}$ is a finite union of singletons (which are closed in $\DMC_{[m],\mathcal{Y}}^{(i)}$), hence $A\cap \DMC_{[m],\mathcal{Y}}^{(i)}$ is closed in $\DMC_{[m],\mathcal{Y}}^{(i)}$ for every $m\geq 1$. Therefore $A$ is closed in $(\DMC_{\ast,\mathcal{Y}}^{(i)},\mathcal{T}_{s,\ast,\mathcal{Y}}^{(i)})$.

Now define $U=\DMC_{\ast,\mathcal{Y}}^{(i)}\setminus A$. Since $A$ is strongly closed, $U$ is strongly open. Moreover, $U$ contains $\hat{W}$, so $U$ is a neighborhood of $\hat{W}$. Therefore, there exists $k_0\geq 0$ such that  $\hat{W}_{n_k}\in U$ for every $k\geq k_0$. Now since the input-rank of $(\hat{W}_{n_k})_{k\geq 0}$  strictly increases, we can find $k\geq k_0$ such that $\irank(\hat{W}_{n_k})>\irank(\hat{W})$. This means that $\hat{W}_{n_k}\neq\hat{W}$ and so $\hat{W}_{n_k}\in A$. Therefore, $\hat{W}_{n_k}\notin U$ which is a contradiction.

We conclude that every converging sequence in $(\DMC_{\ast,\mathcal{Y}}^{(i)},\mathcal{T}_{s,\ast,\mathcal{Y}}^{(i)})$ must be rank-bounded.

Now let $(\hat{W}_n)_{n\geq 0}$ be a rank-bounded sequence in $\DMC_{\ast,\mathcal{Y}}^{(i)}$, i.e., there exists $m\geq 1$ such that  $\hat{W}_n\in \DMC_{[m],\mathcal{Y}}^{(i)}$ for every $n\geq 0$. If $(\hat{W}_n)_{n\geq 0}$ converges in $(\DMC_{\ast,\mathcal{Y}}^{(i)},\mathcal{T}_{s,\ast,\mathcal{Y}}^{(i)})$ then it converges in $\DMC_{[m],\mathcal{Y}}^{(i)}$ since $\DMC_{[m],\mathcal{Y}}^{(i)}$ is strongly closed.

Conversely, assume that $(\hat{W}_n)_{n\geq 0}$ converges in $(\DMC_{[m],\mathcal{Y}}^{(i)},\mathcal{T}_{[m],\mathcal{Y}}^{(i)})$ to $\hat{W}\in \DMC_{[m],\mathcal{Y}}^{(i)}$. Let $O$ be any neighborhood of $\hat{W}$ in $(\DMC_{\ast,\mathcal{Y}}^{(i)},\mathcal{T}_{s,\ast,\mathcal{Y}}^{(i)})$. There exists a strongly open set $U$ such that $\hat{W}\in U\subset O$. Since $U\cap \DMC_{[m],\mathcal{Y}}^{(i)}$ is open in $(\DMC_{[m],\mathcal{Y}}^{(i)},\mathcal{T}_{[m],\mathcal{Y}}^{(i)})$, there exists $n_0>0$ such that $\hat{W}_n\in  U\cap \DMC_{[m],\mathcal{Y}}^{(i)}$ for every $n\geq n_0$. This implies that $\hat{W}_n\in  O$ for every $n\geq n_0$. Therefore $(\hat{W}_n)_{n\geq 0}$ converges to $\hat{W}$ in $(\DMC_{\ast,\mathcal{Y}}^{(i)},\mathcal{T}_{s,\ast,\mathcal{Y}}^{(i)})$.
\end{proof}

\begin{mycor}
\label{corNotFirstCountableIn}
If $|\mathcal{Y}|\geq 3$, $(\DMC_{\ast,\mathcal{Y}}^{(i)},\mathcal{T}_{s,\ast,\mathcal{Y}}^{(i)})$ is not first-countable anywhere, i.e., for every $\hat{W}\in\DMC_{\ast,\mathcal{Y}}^{(i)}$, there is no countable neighborhood basis of $\hat{W}$.
\end{mycor}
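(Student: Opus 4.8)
The plan is to fix an arbitrary $\hat{W}\in\DMC_{\ast,\mathcal{Y}}^{(i)}$ and derive a contradiction from the assumption that $\hat{W}$ has a countable neighborhood basis. The two ingredients I would combine are Proposition \ref{propNaturallyOpenUnboundedIn}, which guarantees (since $|\mathcal{Y}|\geq 3$) that every non-empty open set is rank-unbounded, and Proposition \ref{propCharacConvSeqIn}, which forces every convergent sequence to be rank-bounded. The idea is to manufacture a rank-unbounded sequence that nonetheless converges to $\hat{W}$, which is exactly the contradiction these two propositions rule out.

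First I would suppose, for contradiction, that $\hat{W}$ admits a countable neighborhood basis. By choosing an open neighborhood inside each basic neighborhood and then intersecting the first $k$ of these, I may assume without loss of generality that the basis is a decreasing sequence $V_1\supseteq V_2\supseteq\cdots$ of open sets, each containing $\hat{W}$; it is still a neighborhood basis, since any neighborhood $O$ of $\hat{W}$ contains some $V_{k_0}$. Each $V_k$ is non-empty and open, so Proposition \ref{propNaturallyOpenUnboundedIn} applies and shows that $V_k$ is rank-unbounded. I can therefore pick, for each $k$, a point $\hat{W}_k\in V_k$ with $\irank(\hat{W}_k)\geq k$, making the sequence $(\hat{W}_k)_{k\geq 1}$ rank-unbounded.

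Next I would check that $(\hat{W}_k)_{k\geq 1}$ converges to $\hat{W}$. Given any neighborhood $O$ of $\hat{W}$, the basis property yields $k_0$ with $V_{k_0}\subseteq O$, and for every $k\geq k_0$ monotonicity gives $\hat{W}_k\in V_k\subseteq V_{k_0}\subseteq O$, so the sequence is eventually in $O$. But Proposition \ref{propCharacConvSeqIn} asserts that a sequence converges in $(\DMC_{\ast,\mathcal{Y}}^{(i)},\mathcal{T}_{s,\ast,\mathcal{Y}}^{(i)})$ only if it is rank-bounded, contradicting the rank-unboundedness of $(\hat{W}_k)_{k\geq 1}$. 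Hence $\hat{W}$ has no countable neighborhood basis, and since $\hat{W}$ was arbitrary the space is not first-countable anywhere.

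I do not expect a genuinely hard step here: the substance is already contained in Propositions \ref{propNaturallyOpenUnboundedIn} and \ref{propCharacConvSeqIn}. The only point needing mild care is the reduction to a decreasing open basis, together with the observation that the rank-unboundedness of each basic open set is precisely what allows the chosen points to have ranks tending to infinity while still lying inside every neighborhood.
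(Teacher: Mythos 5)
Your proof is correct and follows essentially the same route as the paper: the paper also replaces the basis $\{O_n\}$ by the decreasing open sets $U_n=\bigcap_{i=1}^n U_i'$, uses Proposition \ref{propNaturallyOpenUnboundedIn} to pick $\hat{W}_n\in U_n$ with $\irank(\hat{W}_n)>n$, and derives the same contradiction between convergence to $\hat{W}$ and the rank-boundedness required by Proposition \ref{propCharacConvSeqIn}. No gaps; the decreasing-basis reduction you flag as needing care is exactly the step the paper carries out.
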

\begin{proof}
Fix $\hat{W}\in \DMC_{\ast,\mathcal{Y}}^{(i)}$ and assume to the contrary that $\hat{W}$ admits a countable neighborhood basis $\{O_n\}_{n\geq 1}$ in $(\DMC_{\ast,\mathcal{Y}}^{(i)},\mathcal{T}_{s,\ast,\mathcal{Y}}^{(i)})$. For every $n\geq 1$, let $U_n'$ be a strongly open set such that $\hat{W}\in U_n'\subset O_n$. Define $\displaystyle U_n=\bigcap_{i=1}^n U_n'$. $U_n$ is strongly open because it is the intersection of finitely many strongly open sets. Moreover, $U_n\subset O_m$ for every $n\geq m$.

For every $n\geq 1$, Proposition \ref{propNaturallyOpenUnboundedIn} implies that $U_n$ (which is non-empty and strongly open) is rank-unbounded, so it cannot be contained in $\DMC_{[n],\mathcal{Y}}^{(i)}$. Hence there exists $\hat{W}_n\in U_n$ such that $\hat{W}_n\notin \DMC_{[n],\mathcal{Y}}^{(i)}$.

Since $\hat{W}_n\notin \DMC_{[n],\mathcal{Y}}^{(i)}$, we have $\irank(\hat{W}_n)>n$ for every $n\geq 1$. Therefore, $(\hat{W}_n)_{n\geq 1}$ is rank-unbounded. Proposition \ref{propCharacConvSeqIn} implies that $(\hat{W}_n)_{n\geq 1}$ does not converge in $(\DMC_{\ast,\mathcal{Y}}^{(i)},\mathcal{T}_{s,\ast,\mathcal{Y}}^{(i)})$.

Now let $O$ be a neighborhood of $\hat{W}$ in $(\DMC_{\ast,\mathcal{Y}}^{(i)},\mathcal{T}_{s,\ast,\mathcal{Y}}^{(i)})$. Since $\{O_n\}_{n\geq 1}$ is a neighborhood basis for $\hat{W}$, there exists $n_0\geq 1$ such that $O_{n_0}\subset O$. For every $n\geq n_0$, we have $\hat{W}_n\in U_n\subset O_{n_0}$. This means that $(\hat{W}_n)_{n\geq 1}$ converges to $\hat{W}$ in $(\DMC_{\ast,\mathcal{Y}}^{(i)},\mathcal{T}_{s,\ast,\mathcal{Y}}^{(i)})$ which is a contradiction. Therefore, $\hat{W}$ does not admit a countable neighborhood basis in $(\DMC_{\ast,\mathcal{Y}}^{(i)},\mathcal{T}_{s,\ast,\mathcal{Y}}^{(i)})$.
\end{proof}

\subsection{Compact subspaces of $(\DMC_{\ast,\mathcal{Y}}^{(i)},\mathcal{T}_{s,\ast,\mathcal{Y}}^{(i)})$}

It is well known that a compact subset of $\mathbb{R}$ is compact if and only if it is closed and bounded. The following proposition shows that a similar statement holds for $(\DMC_{\ast,\mathcal{Y}}^{(i)},\mathcal{T}_{s,\ast,\mathcal{Y}}^{(i)})$.

\begin{myprop}
\label{propCharacCompactDMCXis}
A subspace of $(\DMC_{\ast,\mathcal{Y}}^{(i)},\mathcal{T}_{s,\ast,\mathcal{Y}}^{(i)})$ is compact if and only if it is rank-bounded and strongly closed.
\end{myprop}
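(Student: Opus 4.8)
The plan is to prove the two implications separately, relying on the characterization of strongly closed sets established just before Lemma~\ref{lemDMCXiNorm} (namely, $F$ is strongly closed if and only if $F\cap\DMC_{[n],\mathcal{Y}}^{(i)}$ is closed in $\DMC_{[n],\mathcal{Y}}^{(i)}$ for every $n\geq 1$), the naturality of $\mathcal{T}_{s,\ast,\mathcal{Y}}^{(i)}$, the compactness of each $\DMC_{[n],\mathcal{Y}}^{(i)}$ (Theorem~\ref{theDMCXYi}), and the fact that the strong topology is Hausdorff (it is $T_4$ by Theorem~\ref{theDMCXi}).

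For the ``if'' direction, I would suppose that $A$ is rank-bounded and strongly closed, so that $A\subset\DMC_{[m],\mathcal{Y}}^{(i)}$ for some $m\geq 1$. Being strongly closed, $A=A\cap\DMC_{[m],\mathcal{Y}}^{(i)}$ is closed in $\DMC_{[m],\mathcal{Y}}^{(i)}$; since $\DMC_{[m],\mathcal{Y}}^{(i)}$ is compact, $A$ is a compact subset of it. Because $\mathcal{T}_{s,\ast,\mathcal{Y}}^{(i)}$ is natural, the subspace topology it induces on $\DMC_{[m],\mathcal{Y}}^{(i)}$ is exactly $\mathcal{T}_{[m],\mathcal{Y}}^{(i)}$, so $A$ is equally compact as a subspace of $(\DMC_{\ast,\mathcal{Y}}^{(i)},\mathcal{T}_{s,\ast,\mathcal{Y}}^{(i)})$.

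For the ``only if'' direction, suppose $A$ is compact in $(\DMC_{\ast,\mathcal{Y}}^{(i)},\mathcal{T}_{s,\ast,\mathcal{Y}}^{(i)})$. Since this space is Hausdorff, every compact subset is closed, so $A$ is strongly closed immediately; the remaining task, which I expect to be the main obstacle, is to show that $A$ is rank-bounded. I would argue by contradiction: if $A$ were rank-unbounded, then for each $n$ there would be a point of $A$ outside $\DMC_{[n],\mathcal{Y}}^{(i)}$, and one could extract a sequence $(\hat{W}_k)_{k\geq 0}$ in $A$ whose input-ranks strictly increase.

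The crux is then the observation that the set $B=\{\hat{W}_k:\;k\geq 0\}$, and in fact every subset $S\subseteq B$, is strongly closed: for each $n$ only finitely many $\hat{W}_k$ have input-rank at most $n$, so $S\cap\DMC_{[n],\mathcal{Y}}^{(i)}$ is finite and hence closed in the metrizable (thus $T_1$) space $\DMC_{[n],\mathcal{Y}}^{(i)}$, and the characterization of strongly closed sets yields that $S$ is strongly closed. Consequently each singleton of $B$ is open in the subspace $B$ (its complement in $B$ is strongly closed), so $B$ carries the discrete topology while being infinite. But $B$ is a strongly closed subset of the compact set $A$, hence itself compact, contradicting the fact that an infinite discrete space is never compact. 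This forces $A$ to be rank-bounded, which together with strong closedness completes the equivalence.
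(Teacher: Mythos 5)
Your proof is correct, and its ``if'' direction together with the use of Hausdorffness (Theorem \ref{theDMCXi}) to conclude that compact sets are strongly closed coincides with the paper's argument; where you genuinely diverge is in showing that a compact set must be rank-bounded. The paper's route: from a rank-unbounded $A$ extract a sequence of strictly increasing input-rank; by Proposition \ref{propCharacConvSeqIn} no subsequence of it converges; on the other hand $A$ is countably compact (being compact), sequential (being strongly closed in a sequential space) and Hausdorff, hence sequentially compact by an external theorem on sequential spaces \cite{SequentialSpace}, so the sequence would need a convergent subsequence --- contradiction. You instead observe that every subset $S$ of $B=\{\hat{W}_k:\;k\geq 0\}$ meets each $\DMC_{[n],\mathcal{Y}}^{(i)}$ in a finite set, which is closed there since $\DMC_{[n],\mathcal{Y}}^{(i)}$ is $T_1$, so every such $S$ is strongly closed by the characterization preceding Lemma \ref{lemDMCXiNorm}; hence $B$ is an infinite, closed, discrete subspace of the compact set $A$, which is impossible because a closed subset of a compact space is compact while an infinite discrete space never is. Your argument is more elementary and self-contained: it bypasses both Proposition \ref{propCharacConvSeqIn} and the cited fact that countably compact sequential Hausdorff spaces are sequentially compact, whereas the paper reuses machinery it had already built for other purposes (notably Corollary \ref{corNotFirstCountableIn}); indeed, the paper's proof of Proposition \ref{propCharacConvSeqIn} contains essentially the same finite-intersection observation you make, so your proof can be viewed as extracting that kernel and applying it directly to compactness. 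A minor stylistic difference: the paper treats $|\mathcal{Y}|\in\{1,2\}$ as separate trivial cases, while in your formulation no case split is needed since rank-unboundedness is vacuously impossible there.
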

\begin{proof}
If $|\mathcal{Y}|=1$, $\DMC_{\ast,\mathcal{Y}}^{(i)}=\DMC_{[1],\mathcal{Y}}^{(i)}$ consists of only one point, hence all subsets of $\DMC_{\ast,\mathcal{Y}}^{(i)}$ are rank-bounded, compact and strongly closed.

If $|\mathcal{Y}|=2$, $\DMC_{\ast,\mathcal{Y}}^{(i)}=\DMC_{[2],\mathcal{Y}}^{(i)}$ and $\mathcal{T}_{s,\ast,\mathcal{Y}}^{(i)}=\mathcal{T}_{[2],\mathcal{Y}}^{(i)}$, hence all subsets of $\DMC_{\ast,\mathcal{Y}}^{(i)}$ are rank-bounded. But $\DMC_{[2],\mathcal{Y}}^{(i)}$ is compact and Hausdorff. Therefore, a subset of $\DMC_{\ast,\mathcal{Y}}^{(i)}$ is compact if and only if it is closed in $\mathcal{T}_{[2],\mathcal{Y}}^{(i)}=\mathcal{T}_{s,\ast,\mathcal{Y}}^{(i)}$.

Assume now that $|\mathcal{Y}|\geq 3$.
Let $A$ be a subspace of $(\DMC_{\ast,\mathcal{Y}}^{(i)},\mathcal{T}_{s,\ast,\mathcal{Y}}^{(i)})$. If $A$ is rank-bounded and strongly closed, then there exists $n\geq 1$ such that $A\subset \DMC_{[n],\mathcal{Y}}^{(i)}$. Since $A$ is strongly closed, then $A=A\cap \DMC_{[n],\mathcal{Y}}^{(i)}$ is closed in $\DMC_{[n],\mathcal{Y}}^{(i)}$ which is compact. Therefore, $A$ is compact.

Now let $A$ be a compact subspace of $(\DMC_{\ast,\mathcal{Y}}^{(i)},\mathcal{T}_{s,\ast,\mathcal{Y}}^{(i)})$. Since $(\DMC_{\ast,\mathcal{Y}}^{(i)},\mathcal{T}_{s,\ast,\mathcal{Y}}^{(i)})$ is Hausdorff, $A$ is strongly closed. It remains to show that $A$ is rank-bounded.

Assume to the contrary that $A$ is rank-unbounded. We can construct a sequence $(\hat{W}_n)_{n\geq 0}$ in $A$ where the input-rank is strictly increasing, i.e., $\irank(\hat{W}_n)<\irank(\hat{W}_{n'})$ for every $0\leq n<n'$. Since the input-rank of $(\hat{W}_n)_{n\geq 0}$ is strictly increasing, every subsequence of $(\hat{W}_n)_{n\geq 0}$ is rank-unbounded. Proposition \ref{propCharacConvSeqIn} implies that every subsequence of $(\hat{W}_n)_{n\geq 0}$ does not converge in $(\DMC_{\ast,\mathcal{Y}}^{(i)},\mathcal{T}_{s,\ast,\mathcal{Y}}^{(i)})$. On the other hand, we have:
\begin{itemize}
\item $A$ is countably compact because it is compact.
\item Since $A$ is strongly closed and since $(\DMC_{\ast,\mathcal{Y}}^{(i)},\mathcal{T}_{s,\ast,\mathcal{Y}}^{(i)})$ is a sequential space, $A$ is sequential.
\item $A$ is Hausdorff because $(\DMC_{\ast,\mathcal{Y}}^{(i)},\mathcal{T}_{s,\ast,\mathcal{Y}}^{(i)})$ is Hausdorff.
\end{itemize}
Now since every countably compact sequential Hausdorff space is sequentially compact \cite{SequentialSpace}, $A$ must be sequentially compact. Therefore, $(\hat{W}_n)_{n\geq 0}$ has a converging subsequence which is a contradiction. We conclude that $A$ must be rank-bounded.
\end{proof}

\section{The similarity metric on the space of input-equivalent channels}

We define the \emph{similarity metric} on $\DMC_{\ast,\mathcal{Y}}^{(i)}$ as follows:
\begin{align*}
d_{\ast,\mathcal{Y}}^{(i)}(\hat{W}_1,\hat{W}_2)&=\min_{R\in \mathcal{R}(\conv(\hat{W}_1),\conv(\hat{W}_2))}\max_{(P_1,P_2)\in R} \|P_1-P_2\|_{TV}\\
&=\frac{1}{2}\min_{R\in \mathcal{R}(\conv(\hat{W}_1),\conv(\hat{W}_2))}\max_{(P_1,P_2)\in R} \sum_{y\in\mathcal{Y}} |P_1(y)-P_2(y)|.
\end{align*}

Let $\mathcal{T}_{\ast,\mathcal{Y}}^{(i)}$ be the metric topology on $\DMC_{\ast,\mathcal{Y}}^{(i)}$ that is induced by $d_{\ast,\mathcal{Y}}^{(i)}$.  We call $\mathcal{T}_{\ast,\mathcal{Y}}^{(i)}$ the \emph{similarity topology} on $\DMC_{\ast,\mathcal{Y}}^{(i)}$. 

Clearly, $\mathcal{T}_{\ast,\mathcal{Y}}^{(i)}$ is natural because the restriction of $d_{\ast,\mathcal{Y}}^{(i)}$ on $\DMC_{[n],\mathcal{Y}}^{(i)}$ is exactly $d_{[n],\mathcal{Y}}^{(i)}$, and the topology induced by $d_{[n],\mathcal{Y}}^{(i)}$ is $\mathcal{T}_{[n],\mathcal{Y}}^{(i)}$ (Theorem \ref{theDMCXYi}).

\section{Continuity of channel parameters and operations}

\subsection{Channel parameters}

The \emph{capacity} of a channel $W\in\DMC_{\mathcal{X},\mathcal{Y}}$ is denoted as $C(W)$.

An \emph{$(n,M)$-encoder} on the alphabet $\mathcal{X}$  is a mapping $\mathcal{E}:\mathcal{M}\rightarrow\mathcal{X}^n$ such that $|\mathcal{M}|=M$. The set $\mathcal{M}$ is the \emph{message set} of $\mathcal{E}$, $n$ is the \emph{blocklength} of $\mathcal{E}$, $M$ is the \emph{size} of $\mathcal{E}$, and $\frac{1}{n}\log M$ is the \emph{rate} of $\mathcal{E}$ (measured in nats). The \emph{error probability of the ML decoder for the encoder $\mathcal{E}$ when it is used for a channel $W\in\DMC_{\mathcal{X},\mathcal{Y}}$} is given by:
$$P_{e,\mathcal{E}}(W)=1-\frac{1}{M}\sum_{y_1^n\in\mathcal{Y}^n} \max_{m\in\mathcal{M}}\left\{\prod_{i=1}^n W(y_i|\mathcal{E}_i(m))\right\},$$
where $(\mathcal{E}_1(m),\ldots,\mathcal{E}_n(m))=\mathcal{E}(m)$.

The \emph{optimal error probability of $(n,M)$-encoders for a channel $W$} is given by:
$$P_{e,n,M}(W)=\min_{\substack{\mathcal{E}\;\text{is an}\\(n,M)\text{-encoder}}}P_{e,\mathcal{E}}(W).$$

Since input-degradedness is a particular case of the Shannon ordering \cite{ShannonDegrad}, we can easily see that if $W$ and $W'$ are input-equivalent, then $C(W)=C(W')$ and $P_{e,n,M}(W)=P_{e,n,M}(W')$ for every $n\geq 1$ and every $M\geq 1$. Therefore, for every $\hat{W}\in\DMC_{\ast,\mathcal{Y}}^{(i)}$, we can define $C(\hat{W}):=C(W')$ for any $W'\in\hat{W}$. We can define $P_{e,n,M}(\hat{W})$ similarly. Moreover, due to Proposition \ref{propInputDegradOperational}, we can also define $P_{e,\mathcal{D}}(\hat{W})$ for any decoder $\mathcal{D}$ on the output alphabet $\mathcal{Y}$.

\begin{myprop}
\label{propContParamDMCXYiStr}
Let $\mathcal{X}$ and $\mathcal{Y}$ be two finite sets. We have:
\begin{itemize}
\item $C:\DMC_{\mathcal{X},\mathcal{Y}}^{(i)}\rightarrow \mathbb{R}^+$ is continuous on $(\DMC_{\mathcal{X},\mathcal{Y}}^{(i)},\mathcal{T}_{\mathcal{X},\mathcal{Y}}^{(i)})$.
\item For every $n\geq 1$ and every $M\geq 1$, the mapping $P_{e,n,M}:\DMC_{\mathcal{X},\mathcal{Y}}^{(i)}\rightarrow [0,1]$ is continuous on $(\DMC_{\mathcal{X},\mathcal{Y}}^{(i)},\mathcal{T}_{\mathcal{X},\mathcal{Y}}^{(i)})$.
\item For every decoder $\mathcal{D}$ on $\mathcal{Y}$, the mapping $P_{e,\mathcal{D}}:\DMC_{\mathcal{X},\mathcal{Y}}^{(i)}\rightarrow [0,1]$ is continuous on $(\DMC_{\mathcal{X},\mathcal{Y}}^{(i)},\mathcal{T}_{\mathcal{X},\mathcal{Y}}^{(i)})$.
\end{itemize}
\end{myprop}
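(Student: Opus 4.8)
The plan is to realize each of the three maps as the transcendent mapping, in the sense of Lemma \ref{lemQuotientFunction}, of a continuous map already defined on the base space $(\DMC_{\mathcal{X},\mathcal{Y}},\mathcal{T}_{\mathcal{X},\mathcal{Y}})$, and then to invoke that lemma together with Theorem \ref{theDMCXYi}. The whole argument is a factorization-through-the-quotient argument, so the only real content lies in the two ingredients that Lemma \ref{lemQuotientFunction} requires: continuity on the base space, and constancy on the $R_{\mathcal{X},\mathcal{Y}}^{(i)}$-equivalence classes.

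First I would check continuity on the base space $(\DMC_{\mathcal{X},\mathcal{Y}},\mathcal{T}_{\mathcal{X},\mathcal{Y}})$. For a fixed encoder $\mathcal{E}$, the quantity $P_{e,\mathcal{E}}(W)$ is a finite sum over $y_1^n\in\mathcal{Y}^n$ of a maximum over the finite message set $\mathcal{M}$ of the products $\prod_{i=1}^n W(y_i|\mathcal{E}_i(m))$; each such product is a polynomial in the entries of $W$, hence continuous, and a finite maximum of continuous functions is continuous, so $W\mapsto P_{e,\mathcal{E}}(W)$ is continuous. Since $\mathcal{X}$ is fixed, there are only finitely many $(n,M)$-encoders, so $P_{e,n,M}(W)=\min_{\mathcal{E}}P_{e,\mathcal{E}}(W)$ is a minimum of finitely many continuous functions and is therefore continuous; the identical reasoning applies verbatim to $P_{e,\mathcal{D}}(W)$. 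The continuity of the capacity $C$ on $\DMC_{\mathcal{X},\mathcal{Y}}$ is the standard fact that the mutual information $I(p,W)$ is jointly continuous and that its supremum over the compact set $\Delta_{\mathcal{X}}$ depends continuously on $W$; I would either derive this from a compactness argument on $\Delta_{\mathcal{X}}$ or simply cite it from \cite{RajDMCTop}.

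Next I would observe that all three maps are constant on the $R_{\mathcal{X},\mathcal{Y}}^{(i)}$-equivalence classes, which is precisely what was recorded just before the statement: input-degradedness is a special case of the Shannon ordering, so input-equivalent channels share the same capacity and the same $P_{e,n,M}$, and Proposition \ref{propInputDegradOperational} gives $P_{e,\mathcal{D}}(W)=P_{e,\mathcal{D}}(W')$ whenever $W$ and $W'$ are input-equivalent. Thus each map respects $R_{\mathcal{X},\mathcal{Y}}^{(i)}$, its induced transcendent map on $\DMC_{\mathcal{X},\mathcal{Y}}^{(i)}$ is well defined, and Lemma \ref{lemQuotientFunction} yields continuity of that induced map from $(\DMC_{\mathcal{X},\mathcal{Y}}^{(i)},\mathcal{T}_{\mathcal{X},\mathcal{Y}}^{(i)})$ into the respective codomain. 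Since Theorem \ref{theDMCXYi} identifies the quotient topology $\mathcal{T}_{\mathcal{X},\mathcal{Y}}^{(i)}$ with the metric topology induced by $d_{\mathcal{X},\mathcal{Y}}^{(i)}$, this is exactly the asserted continuity. I expect no genuine obstacle in this proof; the only step that needs a word of care is the base-space continuity of the capacity, which is the one parameter whose continuity is not an immediate consequence of a finite max/min over polynomials and so must be argued via the compactness of $\Delta_{\mathcal{X}}$ or cited.
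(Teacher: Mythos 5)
Your proposal is correct and follows essentially the same route as the paper: the paper's proof also factors each parameter through the quotient via Lemma \ref{lemQuotientFunction}, using base-space continuity together with constancy on $R_{\mathcal{X},\mathcal{Y}}^{(i)}$-equivalence classes (the latter supplied by the discussion preceding the proposition and Proposition \ref{propInputDegradOperational}). The only difference is that you spell out the base-space continuity arguments (finite max/min of polynomials, compactness of $\Delta_{\mathcal{X}}$ for $C$), which the paper leaves implicit.
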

\begin{proof}
Since $C:\DMC_{\mathcal{X},\mathcal{Y}}\rightarrow\mathbb{R}^+$ is continuous, and since $C(W)$ depends only on the $R_{\mathcal{X},\mathcal{Y}}^{(i)}$, Lemma \ref{lemQuotientFunction} implies that $C:\DMC_{\mathcal{X},\mathcal{Y}}^{(i)}\rightarrow \mathbb{R}^+$ is continuous on $(\DMC_{\mathcal{X},\mathcal{Y}}^{(i)},\mathcal{T}_{\mathcal{X},\mathcal{Y}}^{(i)})$. We can show the continuity of $P_{e,n,M}$ and $P_{e,\mathcal{D}}$ on $(\DMC_{\mathcal{X},\mathcal{Y}}^{(i)},\mathcal{T}_{\mathcal{X},\mathcal{Y}}^{(i)})$ similarly.
\end{proof}

\vspace*{3mm}

The following lemma provides a way to check whether a mapping defined on $(\DMC_{\ast,\mathcal{Y}}^{(i)},\mathcal{T}_{s,\ast,\mathcal{Y}}^{(i)})$ is continuous:

\begin{mylem}
\label{lemContinuityForStrongTopIn}
Let $(S,\mathcal{V})$ be an arbitrary topological space. A mapping $f:\DMC_{\ast,\mathcal{Y}}^{(i)}\rightarrow S$ is continuous on $(\DMC_{\ast,\mathcal{Y}}^{(i)},\mathcal{T}_{s,\ast,\mathcal{Y}}^{(i)})$ if and only if it is continuous on $(\DMC_{[n],\mathcal{Y}}^{(i)},\mathcal{T}_{[n],\mathcal{Y}}^{(i)})$ for every $n\geq 1$.
\end{mylem}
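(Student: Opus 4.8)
The plan is to reduce both implications to the characterization of strongly open sets established immediately before the lemma: a set $U\subset \DMC_{\ast,\mathcal{Y}}^{(i)}$ is strongly open if and only if $U\cap\DMC_{[n],\mathcal{Y}}^{(i)}$ is open in $(\DMC_{[n],\mathcal{Y}}^{(i)},\mathcal{T}_{[n],\mathcal{Y}}^{(i)})$ for every $n\geq 1$. Phrased differently, $\mathcal{T}_{s,\ast,\mathcal{Y}}^{(i)}$ is the coherent (final) topology with respect to the nested cover $\{\DMC_{[n],\mathcal{Y}}^{(i)}\}_{n\geq 1}$, and the lemma is the standard continuity criterion for such a topology.

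For the forward direction I would invoke the fact that $\mathcal{T}_{s,\ast,\mathcal{Y}}^{(i)}$ is natural, so the topology it induces on $\DMC_{[n],\mathcal{Y}}^{(i)}$ by relativization is precisely $\mathcal{T}_{[n],\mathcal{Y}}^{(i)}$. If $f$ is continuous on $(\DMC_{\ast,\mathcal{Y}}^{(i)},\mathcal{T}_{s,\ast,\mathcal{Y}}^{(i)})$, then its restriction to the subspace $\DMC_{[n],\mathcal{Y}}^{(i)}$ is continuous, and this is exactly continuity on $(\DMC_{[n],\mathcal{Y}}^{(i)},\mathcal{T}_{[n],\mathcal{Y}}^{(i)})$.

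For the converse, which carries the content, I would assume $f$ is continuous on each $(\DMC_{[n],\mathcal{Y}}^{(i)},\mathcal{T}_{[n],\mathcal{Y}}^{(i)})$ and fix an arbitrary open set $V\in\mathcal{V}$. The key observation is the set identity
$$f^{-1}(V)\cap\DMC_{[n],\mathcal{Y}}^{(i)}=\left(f|_{\DMC_{[n],\mathcal{Y}}^{(i)}}\right)^{-1}(V),$$
valid for every $n\geq 1$; its right-hand side is open in $\DMC_{[n],\mathcal{Y}}^{(i)}$ by the assumed continuity of the restriction $f|_{\DMC_{[n],\mathcal{Y}}^{(i)}}$. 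Applying the characterization of strongly open sets then shows that $f^{-1}(V)$ is strongly open, so $f$ is continuous on $(\DMC_{\ast,\mathcal{Y}}^{(i)},\mathcal{T}_{s,\ast,\mathcal{Y}}^{(i)})$. I do not anticipate any genuine obstacle: the entire difficulty has already been absorbed into the description of strongly open sets in terms of the subspaces $\DMC_{[n],\mathcal{Y}}^{(i)}$, and what remains is a one-line preimage computation together with that characterization.
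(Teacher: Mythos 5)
Your proposal is correct and follows essentially the same route as the paper: the paper's proof is a three-line chain of equivalences whose middle step is exactly the characterization of strongly open sets ($U$ strongly open iff $U\cap\DMC_{[n],\mathcal{Y}}^{(i)}$ open in $\mathcal{T}_{[n],\mathcal{Y}}^{(i)}$ for all $n$), combined with the same preimage identity $f^{-1}(V)\cap\DMC_{[n],\mathcal{Y}}^{(i)}=\bigl(f|_{\DMC_{[n],\mathcal{Y}}^{(i)}}\bigr)^{-1}(V)$ you use. Splitting the argument into two directions rather than writing a single chain of equivalences is a purely presentational difference.
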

\begin{proof}
\begin{align*}
\textstyle f\;\text{is continuous on}\;(\DMC_{\ast,\mathcal{Y}}^{(i)},\mathcal{T}_{s,\ast,\mathcal{Y}}^{(i)})\;\;&\textstyle\Leftrightarrow\;\; f^{-1}(V)\in \mathcal{T}_{s,\ast,\mathcal{Y}}^{(i)},\;\;\forall V\in\mathcal{V}\\
&\textstyle\Leftrightarrow\;\; f^{-1}(V)\cap \DMC_{[n],\mathcal{Y}}^{(i)} \in \mathcal{T}_{[n],\mathcal{Y}}^{(i)},\;\;\forall n\geq 1,\; \forall V\in\mathcal{V}\\
&\textstyle\Leftrightarrow\;\; f\;\text{is continuous on}\; (\DMC_{[n],\mathcal{Y}}^{(i)},\mathcal{T}_{[n],\mathcal{Y}}^{(i)}),\;\;\forall n\geq 1.
\end{align*}
\end{proof}

\begin{myprop}
\label{propContParamDMCXiStr}
Let $\mathcal{Y}$ be a finite set. We have:
\begin{itemize}
\item $C:\DMC_{\ast,\mathcal{Y}}^{(i)}\rightarrow \mathbb{R}^+$ is continuous on $(\DMC_{\ast,\mathcal{Y}}^{(i)},\mathcal{T}_{s,\ast,\mathcal{Y}}^{(i)})$.
\item For every $n\geq 1$ and every $M\geq 1$, the mapping $P_{e,n,M}:\DMC_{\ast,\mathcal{Y}}^{(i)}\rightarrow [0,1]$ is continuous on $(\DMC_{\ast,\mathcal{Y}}^{(i)},\mathcal{T}_{s,\ast,\mathcal{Y}}^{(i)})$.
\item For every decoder $\mathcal{D}$ on $\mathcal{Y}$, the mapping $P_{e,\mathcal{D}}:\DMC_{\ast,\mathcal{Y}}^{(i)}\rightarrow [0,1]$ is continuous on $(\DMC_{\ast,\mathcal{Y}}^{(i)},\mathcal{T}_{s,\ast,\mathcal{Y}}^{(i)})$.
\end{itemize}
\end{myprop}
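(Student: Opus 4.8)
The plan is to reduce the three continuity claims on the strong topology to continuity on the fixed-input subspaces, where they have already been verified. The essential tool is Lemma \ref{lemContinuityForStrongTopIn}, which asserts that a mapping $f:\DMC_{\ast,\mathcal{Y}}^{(i)}\rightarrow S$ into any topological space $S$ is continuous on $(\DMC_{\ast,\mathcal{Y}}^{(i)},\mathcal{T}_{s,\ast,\mathcal{Y}}^{(i)})$ if and only if its restriction to $(\DMC_{[n],\mathcal{Y}}^{(i)},\mathcal{T}_{[n],\mathcal{Y}}^{(i)})$ is continuous for every $n\geq 1$. Each of the three statements to be proved therefore splits into a countable family of statements about the subspaces $\DMC_{[n],\mathcal{Y}}^{(i)}$, to which the fixed-alphabet results apply directly.

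First I would fix an arbitrary $n\geq 1$ and invoke Proposition \ref{propContParamDMCXYiStr} with $\mathcal{X}=[n]$. That proposition states exactly that the maps $C$, $P_{e,n,M}$ (for every $n,M\geq 1$), and $P_{e,\mathcal{D}}$ (for every decoder $\mathcal{D}$ on $\mathcal{Y}$) are continuous on $(\DMC_{[n],\mathcal{Y}}^{(i)},\mathcal{T}_{[n],\mathcal{Y}}^{(i)})$. Since the relevant restrictions are continuous for every $n\geq 1$, Lemma \ref{lemContinuityForStrongTopIn} immediately promotes this to continuity on the full space $(\DMC_{\ast,\mathcal{Y}}^{(i)},\mathcal{T}_{s,\ast,\mathcal{Y}}^{(i)})$, which is the desired conclusion for all three parameters simultaneously.

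The only point that warrants a moment of attention is that the parameters $C(\hat{W})$, $P_{e,n,M}(\hat{W})$, and $P_{e,\mathcal{D}}(\hat{W})$ are genuinely well defined on $\DMC_{\ast,\mathcal{Y}}^{(i)}$ and that their values coincide with the corresponding parameters on each $\DMC_{[n],\mathcal{Y}}^{(i)}$ under the identifications introduced in Section \ref{secInEquivSpace}. This invariance under input-equivalence is precisely what was established earlier (via the Shannon ordering together with Proposition \ref{propInputDegradOperational}) and was used to define the parameters on the quotient in the first place, so no additional verification is needed. Given this, I do not anticipate any real obstacle: the proposition is a direct composition of Lemma \ref{lemContinuityForStrongTopIn} with Proposition \ref{propContParamDMCXYiStr}, and the argument is essentially a one-line reduction.
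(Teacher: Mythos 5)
Your proposal is correct and follows exactly the paper's own argument: the paper's proof is precisely the one-line combination of Proposition \ref{propContParamDMCXYiStr} (applied with $\mathcal{X}=[n]$ for each $n\geq 1$) and Lemma \ref{lemContinuityForStrongTopIn}. Your added remark on well-definedness of $C$, $P_{e,n,M}$ and $P_{e,\mathcal{D}}$ on the quotient is sound but, as you note, already settled before the proposition is stated.
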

\begin{proof}
The proposition follows from Proposition \ref{propContParamDMCXYiStr} and Lemma \ref{lemContinuityForStrongTopIn}.
\end{proof}

\subsection{Channel operations}

\label{subsecChanOperIn}

For every two channels $W_1\in \DMC_{\mathcal{X}_1,\mathcal{Y}_1}$ and $W_2\in \DMC_{\mathcal{X}_2,\mathcal{Y}_2}$, define the \emph{channel sum} $W_1 \oplus W_2\in \DMC_{\mathcal{X}_1\coprod\mathcal{X}_2,\mathcal{Y}_1\coprod\mathcal{Y}_2}$ of $W_1$ and $W_2$ as:
$$(W_1\oplus W_2)(y,i|x,j)=\begin{cases}W_i(y|x)\quad&\text{if}\;i=j,\\
0&\text{otherwise},\end{cases}$$
where $\mathcal{X}_1\coprod\mathcal{X}_2=(\mathcal{X}_1\times\{1\})\cup(\mathcal{X}_2\times\{2\})$ is the disjoint union of $\mathcal{X}_1$ and $\mathcal{X}_2$. $W_1\oplus W_2$ arises when the transmitter has two channels $W_1$ and $W_2$ at his disposal and he can use exactly one of them at each channel use.

We define the \emph{channel product} $W_1\otimes W_2\in \DMC_{\mathcal{X}_1\times\mathcal{X}_2,\mathcal{Y}_1\times\mathcal{Y}_2}$ of $W_1$ and $W_2$ as:
$$(W_1\otimes W_2)(y_1,y_2|x_1,x_2)=W_1(y_1|x_1)W_2(y_2|x_2).$$
$W_1\otimes W_2$ arises when the transmitter has two channels $W_1$ and $W_2$ at his disposal and he uses both of them at each channel use. Channel sums and products were first introduced by Shannon in \cite{ChannelSumProduct}.

Channel sums and products can be ``quotiented" by the input-equivalence relation. We just need to realize that the input-equivalence class of the resulting channel depends only on the input-equivalence classes of the channels that were used in the operation. Let us illustrate this in the case of channel sums:

Let $W_1,W_1'\in \DMC_{\mathcal{X}_1,\mathcal{Y}_1}$ and $W_2,W_2'\in \DMC_{\mathcal{X}_2,\mathcal{Y}_2}$ and assume that $W_1$ is input-degraded from $W_1'$ and $W_2$ is input-degraded from $W_2'$. There exists $V_1'\in\DMC_{\mathcal{X}_1,\mathcal{X}_1}$ and $V_2'\in\DMC_{\mathcal{X}_2,\mathcal{X}_2}$ such that $W_1=W_1'\circ V_1'$ and $W_2=W_2'\circ V_2'$. It is easy to see that $W_1\oplus W_2=(W_1'\oplus W_2')\circ (V_1'\oplus V_2')$, which shows that $W_1\oplus W_2$ is input-degraded from $W_1'\oplus W_2'$.

Therefore, if $W_1$ is input-equivalent to $W_1'$ and $W_2$ is input-equivalent to $W_2'$, then $W_1\oplus W_2$ is input-equivalent to $W_1'\oplus W_2'$. This allows us to define the channel sum for every $\hat{W}_1\in\DMC_{\mathcal{X}_1,\mathcal{Y}_1}^{(i)}$ and every $\overline{W}_2\in\DMC_{\mathcal{X}_2,\mathcal{Y}_2}^{(i)}$ as $\hat{W}_1\oplus \overline{W}_2 = \widetilde{W_1'\oplus W_2'}\in \DMC_{\mathcal{X}_1\coprod\mathcal{X}_2,\mathcal{Y}_1\coprod\mathcal{Y}_2}^{(i)}$ for any $W_1'\in\hat{W}_1$ and any $W_2'\in\overline{W}_2$, where $\widetilde{W_1'\oplus W_2'}$ is the $R_{\mathcal{X}_1\coprod\mathcal{X}_2,\mathcal{Y}_1\coprod\mathcal{Y}_2}^{(i)}$-equivalence class of $W_1'\oplus W_2'$. We can define the product on the quotient spaces similarly.

\begin{myprop}
\label{propContOperDMCXYi}
We have:
\begin{itemize}
\item The mapping $(\hat{W}_1,\overline{W}_2)\rightarrow \hat{W}_1\oplus \overline{W}_2$ from $\DMC_{\mathcal{X}_1,\mathcal{Y}_1}^{(i)}\times \DMC_{\mathcal{X}_2,\mathcal{Y}_2}^{(i)}$ to $\DMC_{\mathcal{X}_1\coprod \mathcal{X}_2,\mathcal{Y}_1\coprod\mathcal{Y}_2}^{(i)}$ is continuous.
\item The mapping $(\hat{W}_1,\overline{W}_2)\rightarrow \hat{W}_1\otimes \overline{W}_2$ from $\DMC_{\mathcal{X}_1,\mathcal{Y}_1}^{(i)}\times \DMC_{\mathcal{X}_2,\mathcal{Y}_2}^{(i)}$ to $\DMC_{\mathcal{X}_1\times\mathcal{X}_2,\mathcal{Y}_1\times\mathcal{Y}_2}^{(i)}$ is continuous.
\end{itemize}
\end{myprop}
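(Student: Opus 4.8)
The plan is to reduce both continuity claims to the (elementary) continuity of the operations $\oplus$ and $\otimes$ on the un-quotiented $\DMC$ spaces, and then transport that continuity across the quotient maps, the only genuine difficulty being the interchange of product and quotient. First I would observe that the channel-sum map $\oplus:\DMC_{\mathcal{X}_1,\mathcal{Y}_1}\times\DMC_{\mathcal{X}_2,\mathcal{Y}_2}\rightarrow\DMC_{\mathcal{X}_1\coprod\mathcal{X}_2,\mathcal{Y}_1\coprod\mathcal{Y}_2}$ is continuous, since each entry of $W_1\oplus W_2$ is either an entry of $W_1$, an entry of $W_2$, or the constant $0$; similarly $\otimes$ is continuous because each entry of $W_1\otimes W_2$ is the product $W_1(y_1|x_1)W_2(y_2|x_2)$ of two entries. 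Composing with the continuous projection $\Proj$ onto the input-equivalence classes of the target space yields continuous maps $\Proj\circ\oplus$ and $\Proj\circ\otimes$ from the product of the full $\DMC$ spaces into the target quotient.

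Next, let $R$ be the equivalence relation on $\DMC_{\mathcal{X}_1,\mathcal{Y}_1}\times\DMC_{\mathcal{X}_2,\mathcal{Y}_2}$ given by $(W_1,W_2)\,R\,(W_1',W_2')$ if and only if $W_1$ is input-equivalent to $W_1'$ and $W_2$ is input-equivalent to $W_2'$. The discussion preceding this proposition shows precisely that the input-equivalence class of $W_1\oplus W_2$ (resp. $W_1\otimes W_2$) depends only on the input-equivalence classes of $W_1$ and $W_2$; hence $\Proj\circ\oplus$ and $\Proj\circ\otimes$ take equal values on $R$-equivalent pairs. By Lemma \ref{lemQuotientFunction}, each therefore descends to a continuous transcendent map on the quotient $\big((\DMC_{\mathcal{X}_1,\mathcal{Y}_1}\times\DMC_{\mathcal{X}_2,\mathcal{Y}_2})/R,\,(\mathcal{T}_{\mathcal{X}_1,\mathcal{Y}_1}\otimes\mathcal{T}_{\mathcal{X}_2,\mathcal{Y}_2})/R\big)$.

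The remaining and only delicate step is to identify this quotient of the product with the product of the quotients $\DMC_{\mathcal{X}_1,\mathcal{Y}_1}^{(i)}\times\DMC_{\mathcal{X}_2,\mathcal{Y}_2}^{(i)}$, equipped with the product topology $\mathcal{T}_{\mathcal{X}_1,\mathcal{Y}_1}^{(i)}\otimes\mathcal{T}_{\mathcal{X}_2,\mathcal{Y}_2}^{(i)}$. In general the canonical bijection between these two spaces need not be a homeomorphism, which is exactly where Corollary \ref{corQuotientProd} is needed. Here $\DMC_{\mathcal{X}_2,\mathcal{Y}_2}$ is compact metric (preliminaries) and, by Theorem \ref{theDMCXYi}, the quotient $\DMC_{\mathcal{X}_1,\mathcal{Y}_1}^{(i)}$ is compact metric as well; both are thus locally compact and Hausdorff, so the hypotheses of Corollary \ref{corQuotientProd} are met and the canonical bijection is a homeomorphism. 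Composing the transcendent maps of the previous paragraph with this homeomorphism gives the desired continuity of $(\hat{W}_1,\overline{W}_2)\mapsto\hat{W}_1\oplus\overline{W}_2$ and $(\hat{W}_1,\overline{W}_2)\mapsto\hat{W}_1\otimes\overline{W}_2$. I expect the product-quotient interchange (and verifying its local-compactness hypothesis via Theorem \ref{theDMCXYi}) to be the main obstacle; the continuity of $\oplus$, $\otimes$ and the descent via Lemma \ref{lemQuotientFunction} are routine.
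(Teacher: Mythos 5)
Your proposal is correct and follows essentially the same route as the paper's own proof: compose the operation with the projection, descend through the joint equivalence relation $R$ via Lemma \ref{lemQuotientFunction}, and identify the quotient of the product with the product of the quotients via Corollary \ref{corQuotientProd}, whose local-compactness and Hausdorff hypotheses hold because one factor is the compact metric space $\DMC_{\mathcal{X}_2,\mathcal{Y}_2}$ and the other is the compact metrizable quotient $\DMC_{\mathcal{X}_1,\mathcal{Y}_1}^{(i)}$ (Theorem \ref{theDMCXYi}). The only cosmetic differences are that you spell out the entry-wise continuity of $\oplus$ and $\otimes$ and treat both operations explicitly, whereas the paper does the sum and declares the product analogous.
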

\begin{proof}
We only prove the continuity of the channel sum because the proof for the channel product is similar.

Let $\Proj:\DMC_{\mathcal{X}_1\coprod\mathcal{X}_2,\mathcal{Y}_1\coprod\mathcal{Y}_2}\rightarrow \DMC_{\mathcal{X}_1\coprod\mathcal{X}_2,\mathcal{Y}_1\coprod\mathcal{Y}_2}^{(i)}$ be the projection onto the $R_{\mathcal{X}_1\coprod\mathcal{X}_2,\mathcal{Y}_1\coprod\mathcal{Y}_2}^{(i)}$-equivalence classes. Define the mapping $f:\DMC_{\mathcal{X}_1,\mathcal{Y}_1}\times \DMC_{\mathcal{X}_2,\mathcal{Y}_2}\rightarrow \DMC_{\mathcal{X}_1\coprod\mathcal{X}_2,\mathcal{Y}_1\coprod\mathcal{Y}_2}^{(i)}$ as $f(W_1,W_2)=\Proj(W_1\oplus W_2)$. Clearly, $f$ is continuous.

Now define the equivalence relation $R$ on $\DMC_{\mathcal{X}_1,\mathcal{Y}_1}\times \DMC_{\mathcal{X}_2,\mathcal{Y}_2}$ as:
$$(W_1,W_2)R(W_1',W_2')\;\;\Leftrightarrow\;\; W_1 R_{\mathcal{X}_1,\mathcal{Y}_1}^{(i)}W_1'\;\text{and}\;W_2 R_{\mathcal{X}_2,\mathcal{Y}_2}^{(i)}W_2'.$$
The discussion before the proposition shows that $f(W_1,W_2)=\Proj(W_1\oplus W_2)$ depends only on the $R$-equivalence class of $(W_1,W_2)$. Lemma \ref{lemQuotientFunction} now shows that the transcendent map of $f$ defined on $(\DMC_{\mathcal{X}_1,\mathcal{Y}_1}\times \DMC_{\mathcal{X}_2,\mathcal{Y}_2})/R$ is continuous.

Notice that $(\DMC_{\mathcal{X}_1,\mathcal{Y}_1}\times \DMC_{\mathcal{X}_2,\mathcal{Y}_2})/R$ can be identified with $\DMC_{\mathcal{X}_1,\mathcal{Y}_1}^{(i)}\times \DMC_{\mathcal{X}_2,\mathcal{Y}_2}^{(i)}$. Therefore, we can define $f$ on $\DMC_{\mathcal{X}_1,\mathcal{Y}_1}^{(i)}\times \DMC_{\mathcal{X}_2,\mathcal{Y}_2}^{(i)}$ through this identification. Moreover, since $\DMC_{\mathcal{X}_1,\mathcal{Y}_1}$ and $\DMC_{\mathcal{X}_2,\mathcal{Y}_2}^{(i)}$ are locally compact and Hausdorff, Corollary \ref{corQuotientProd} implies that the canonical bijection between $(\DMC_{\mathcal{X}_1,\mathcal{Y}_1}\times \DMC_{\mathcal{X}_2,\mathcal{Y}_2})/R$ and $\DMC_{\mathcal{X}_1,\mathcal{Y}_1}^{(i)}\times \DMC_{\mathcal{X}_2,\mathcal{Y}_2}^{(i)}$ is a homeomorphism. 

Now since the mapping $f$ on $\DMC_{\mathcal{X}_1,\mathcal{Y}_1}^{(i)}\times \DMC_{\mathcal{X}_2,\mathcal{Y}_2}^{(i)}$ is just the channel sum, we conclude that the mapping $(\hat{W}_1,\overline{W}_2)\rightarrow \hat{W}_1\oplus \overline{W}_2$ from $\DMC_{\mathcal{X}_1,\mathcal{Y}_1}^{(i)}\times \DMC_{\mathcal{X}_2,\mathcal{Y}_2}^{(i)}$ to $\DMC_{\mathcal{X}_1\coprod\mathcal{X}_2,\mathcal{Y}_1\coprod\mathcal{Y}_2}^{(i)}$ is continuous.
\end{proof}

\begin{myprop}
\label{propContOperDMCXiStr}
Assume that all spaces of input-equivalent channels are endowed with the strong topology. We have:
\begin{itemize}
\item The mapping $(\hat{W}_1,\overline{W}_2)\rightarrow \hat{W}_1\oplus \overline{W}_2$ from $\DMC_{\ast,\mathcal{Y}_1}^{(i)}\times \DMC_{\mathcal{X}_2,\mathcal{Y}_2}^{(i)}$ to $\DMC_{\ast,\mathcal{Y}_1\coprod\mathcal{Y}_2}^{(i)}$ is continuous.
\item The mapping $(\hat{W}_1,\overline{W}_2)\rightarrow \hat{W}_1\otimes \overline{W}_2$ from $\DMC_{\ast,\mathcal{Y}_1}^{(i)}\times \DMC_{\mathcal{X}_2,\mathcal{Y}_2}^{(i)}$ to $\DMC_{\ast,\mathcal{Y}_1\times\mathcal{Y}_2}^{(i)}$ is continuous.
\end{itemize}
\end{myprop}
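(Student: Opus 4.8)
The plan is to prove the statement for the channel sum; the channel product follows by the same argument with $\otimes$ and $\times$ replacing $\oplus$ and $\coprod$. Write $S=\DMC_{\mathcal{X}_2,\mathcal{Y}_2}^{(i)}$, which is compact (Theorem \ref{theDMCXYi}) and metrizable, hence locally compact and Hausdorff. The obstacle that rules out a direct appeal to Corollary \ref{corQuotientProd} is that $(\DMC_{\ast,\mathcal{Y}_1}^{(i)},\mathcal{T}_{s,\ast,\mathcal{Y}_1}^{(i)})$ is \emph{not} locally compact when $|\mathcal{Y}_1|\geq 3$, so it cannot play the role of a quotient factor whose local compactness is required. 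Instead, I would quotient only the first coordinate: take $T=\DMC_{\ast,\mathcal{Y}_1}$ with the disjoint union topology $\mathcal{T}_{s,\ast,\mathcal{Y}_1}$, let $R=R_{\ast,\mathcal{Y}_1}^{(i)}$, and keep $S$ as the locally compact Hausdorff factor. Theorem \ref{theQuotientProd} then yields that the canonical bijection between $\DMC_{\ast,\mathcal{Y}_1}^{(i)}\times S$ and $(\DMC_{\ast,\mathcal{Y}_1}\times S)/R'$ is a homeomorphism, where $(W_1,s)\,R'\,(W_1',s')$ iff $W_1 R_{\ast,\mathcal{Y}_1}^{(i)} W_1'$ and $s=s'$.

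By this homeomorphism and Lemma \ref{lemQuotientFunction}, it suffices to show that the lifted map
$$g:\DMC_{\ast,\mathcal{Y}_1}\times S\to \DMC_{\ast,\mathcal{Y}_1\coprod\mathcal{Y}_2}^{(i)},\qquad g(W_1,\overline{W}_2)=\widetilde{W_1\oplus W_2'}\quad(W_2'\in\overline{W}_2),$$
is continuous and constant on $R'$-classes; the latter holds because the input-equivalence class of $W_1\oplus W_2'$ depends only on $\hat{W}_1$ and $\overline{W}_2$, as noted in the discussion preceding Proposition \ref{propContOperDMCXYi}. For continuity of $g$, I would use that each $\DMC_{[n],\mathcal{Y}_1}$ is clopen in the disjoint union $\DMC_{\ast,\mathcal{Y}_1}$, so the sets $\DMC_{[n],\mathcal{Y}_1}\times S$ form an open cover of $\DMC_{\ast,\mathcal{Y}_1}\times S$ whose subspace topologies are the product topologies $\mathcal{T}_{[n],\mathcal{Y}_1}\otimes \mathcal{T}_{\mathcal{X}_2,\mathcal{Y}_2}^{(i)}$; continuity on an open cover then reduces the task to checking that each restriction $g|_{\DMC_{[n],\mathcal{Y}_1}\times S}$ is continuous.

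On the $n$-th piece, I would factor this restriction as
$$\DMC_{[n],\mathcal{Y}_1}\times S \xrightarrow{\ \Proj_n\times\mathrm{id}\ } \DMC_{[n],\mathcal{Y}_1}^{(i)}\times S \xrightarrow{\ \oplus\ } \DMC_{[n]\coprod\mathcal{X}_2,\,\mathcal{Y}_1\coprod\mathcal{Y}_2}^{(i)} \hookrightarrow \DMC_{\ast,\mathcal{Y}_1\coprod\mathcal{Y}_2}^{(i)},$$
where $\Proj_n\times\mathrm{id}$ is continuous, the middle map is continuous by Proposition \ref{propContOperDMCXYi} with $\mathcal{X}_1=[n]$, and the last inclusion is continuous because, after the canonical identification of $\DMC_{[n]\coprod\mathcal{X}_2,\,\mathcal{Y}_1\coprod\mathcal{Y}_2}^{(i)}$ with $\DMC_{[n+|\mathcal{X}_2|],\,\mathcal{Y}_1\coprod\mathcal{Y}_2}^{(i)}$ (Corollary \ref{corIdentInEquiv}), the strong topology induces its standard topology by naturalness. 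Composing gives continuity of $g$, and hence of $G$ by the reduction of the first paragraph. The main obstacle, and the precise reason the fixed-alphabet Proposition \ref{propContOperDMCXYi} does not transfer verbatim, is the failure of local compactness of the strong space; the resolution is to invoke Theorem \ref{theQuotientProd} with only the well-behaved factor $S$ quotiented, while exploiting that the product with $S$ commutes with the disjoint union structure of the unquotiented first factor.
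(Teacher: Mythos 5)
Your proof is correct and follows essentially the same route as the paper's: decompose along the clopen pieces $\DMC_{[n],\mathcal{Y}_1}\times(\cdot)$ of the disjoint union, establish continuity piecewise from the fixed-alphabet result, apply Lemma \ref{lemQuotientFunction} to descend to the quotient, and transfer the quotient across the product using local compactness of the factor tied to $\mathcal{X}_2$. The only difference is bookkeeping: you quotient a single coordinate via Theorem \ref{theQuotientProd}, keeping the second factor as the already-quotiented compact space $\DMC_{\mathcal{X}_2,\mathcal{Y}_2}^{(i)}$, whereas the paper lifts to the fully unquotiented product $\DMC_{\ast,\mathcal{Y}_1}\times\DMC_{\mathcal{X}_2,\mathcal{Y}_2}$ and quotients both coordinates at once via Corollary \ref{corQuotientProd} (whose hypotheses are met because the unquotiented disjoint union $\DMC_{\ast,\mathcal{Y}_1}$ is locally compact even though its quotient is not) — the two bookkeepings are interchangeable here.
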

\begin{proof}
We only prove the continuity of the channel sum because the proof of the continuity of the channel product is similar.

Due to the distributivity of the product with respect to disjoint unions, we have:
$$\textstyle\DMC_{\ast,\mathcal{Y}_1}\times\DMC_{\mathcal{X}_2,\mathcal{Y}_2}={\displaystyle\coprod_{n\geq1}}(\DMC_{[n],\mathcal{Y}_1}\times\DMC_{\mathcal{X}_2,\mathcal{Y}_2}),$$
and
$$\textstyle\mathcal{T}_{s,\ast,\mathcal{Y}_1}\otimes\mathcal{T}_{\mathcal{X}_2,\mathcal{Y}_2}={\displaystyle\bigoplus_{n\geq1}}\left(\mathcal{T}_{[n],\mathcal{Y}_1}\otimes\mathcal{T}_{\mathcal{X}_2,\mathcal{Y}_2}\right).$$

Therefore, the space $\DMC_{\ast,\mathcal{Y}_1}\times\DMC_{\mathcal{X}_2,\mathcal{Y}_2}$ is the topological disjoint union of the spaces $(\DMC_{[n],\mathcal{Y}_1}\times\DMC_{\mathcal{X}_2,\mathcal{Y}_2})_{n\geq 1}$.

For every $n\geq 1$, let $\Proj_n$ be the projection onto the $R_{[n]\coprod\mathcal{X}_2,\mathcal{Y}_1\coprod\mathcal{Y}_2}^{(i)}$-equivalence classes and let $i_n$ be the canonical injection from $\DMC_{[n]\coprod\mathcal{X}_2,\mathcal{Y}_1\coprod\mathcal{Y}_2}^{(i)}$ to $\DMC_{\ast,\mathcal{Y}_1\coprod\mathcal{Y}_2}^{(i)}$. 

Define the mapping $f: \DMC_{\ast,\mathcal{Y}_1}\times\DMC_{\mathcal{X}_2,\mathcal{Y}_2}\rightarrow \DMC_{\ast,\mathcal{Y}_1\coprod \mathcal{Y}_2}^{(i)}$ as $$\textstyle f(W_1,W_2)=i_{n}(\Proj_{n}(W_1\oplus W_2))=\hat{W}_1\oplus\overline{W}_2,$$
where $n$ is the unique integer satisfying $W_1\in \DMC_{[n],\mathcal{Y}_1}$. $\hat{W}_1$ and $\overline{W}_2$ are the $R_{[n],\mathcal{Y}_1}^{(i)}$ and $R_{\mathcal{X}_2,\mathcal{Y}_2}^{(i)}$-equivalence classes of $W_1$ and $W_2$ respectively.

Clearly, the mapping $f$ is continuous on $\DMC_{[n],\mathcal{Y}_1}\times\DMC_{\mathcal{X}_2,\mathcal{Y}_2}$ for every $n\geq 1$. Therefore, $f$ is continuous on $(\DMC_{\ast,\mathcal{Y}_1}\times\DMC_{\mathcal{X}_2,\mathcal{Y}_2},\mathcal{T}_{s,\ast,\mathcal{Y}_1}\otimes\mathcal{T}_{\mathcal{X}_2,\mathcal{Y}_2})$.

Let $R$ be the equivalence relation defined on $\DMC_{\ast,\mathcal{Y}_1}\times\DMC_{\mathcal{X}_2,\mathcal{Y}_2}$ as follows: $(W_1,W_2)R(W_1',W_2')$ if and only if $W_1 R_{\ast,\mathcal{Y}}^{(i)} W_1'$ and $W_2 R_{\mathcal{X}_2,\mathcal{Y}_2}^{(i)} W_2'$.

Since $f(W_1,W_2)$ depends only on the $R$-equivalence class of $(W_1,W_2)$, Lemma \ref{lemQuotientFunction} implies that the transcendent mapping of $f$ is continuous on $(\DMC_{\ast,\mathcal{Y}_1}\times\DMC_{\mathcal{X}_2,\mathcal{Y}_2})/R$.

Since $(\DMC_{\ast,\mathcal{Y}_1},\mathcal{T}_{s,\ast,\mathcal{Y}_1})$ and $\DMC_{\mathcal{X}_2,\mathcal{Y}_2}^{(i)}=\DMC_{\mathcal{X}_2,\mathcal{Y}_2}/R_{\mathcal{X}_2,\mathcal{Y}_2}^{(i)}$ are Hausdorff and locally compact, Corollary \ref{corQuotientProd} implies that the canonical bijection from $\DMC_{\ast,\mathcal{Y}_1}^{(i)}\times \DMC_{\mathcal{X}_2,\mathcal{Y}_2}^{(i)}$ to $(\DMC_{\ast,\mathcal{Y}_1}\times \DMC_{\mathcal{X}_2,\mathcal{Y}_2})/R$ is a homeomorphism. We conclude that the channel sum is continuous on $(\DMC_{\ast,\mathcal{Y}_1}^{(i)}\times \DMC_{\mathcal{X}_2,\mathcal{Y}_2}^{(i)},\mathcal{T}_{s,\ast,\mathcal{Y}_1}^{(i)}\otimes\mathcal{T}_{\mathcal{X}_2,\mathcal{Y}}^{(i)})$.
\end{proof}

\vspace*{3mm}

The reader might be wondering why the channel sum and the channel product were not shown to be continuous on the whole space $\DMC_{\ast,\mathcal{Y}_1}^{(i)}\times \DMC_{\ast,\mathcal{Y}_2}^{(i)}$ instead of the smaller space $\DMC_{\ast,\mathcal{Y}_1}^{(i)}\times \DMC_{\mathcal{X}_2,\mathcal{Y}_2}^{(i)}$. The reason is because we cannot apply Corollary \ref{corQuotientProd} to $\DMC_{\ast,\mathcal{Y}_1}\times \DMC_{\ast,\mathcal{Y}_2}$ and $\DMC_{\ast,\mathcal{Y}_1}^{(i)}\times \DMC_{\ast,\mathcal{Y}_2}^{(i)}$ since neither $\DMC_{\ast,\mathcal{Y}_1}^{(i)}$ nor $\DMC_{\ast,\mathcal{Y}_2}^{(i)}$ is locally compact when $|\mathcal{Y}_1|,|\mathcal{Y}_2|\geq 3$ (under the strong topology).

As in the case of the space of equivalent channels \cite{RajContTop}, one potential method to show the continuity of the channel sum on $(\DMC_{\ast,\mathcal{Y}_1}^{(i)}\times\DMC_{\ast,\mathcal{Y}_2}^{(i)},\mathcal{T}_{s,\ast,\mathcal{Y}_1}^{(i)}\otimes \mathcal{T}_{s,\ast,\mathcal{Y}_2}^{(i)})$ is as follows: let $R$ be the equivalence relation on $\DMC_{\ast,\mathcal{Y}_1}\times\DMC_{\ast,\mathcal{Y}_2}$ defined as $(W_1,W_2)R(W_1',W_2')$ if and only if $W_1 R_{\ast,\mathcal{Y}_1}^{(i)}W_1'$ and $W_2 R_{\ast,\mathcal{Y}_2}^{(i)}W_2'$. We can identify $(\DMC_{\ast,\mathcal{Y}_1}\times\DMC_{\ast,\mathcal{Y}_2})/R$ with $\DMC_{\ast,\mathcal{Y}_1}^{(i)}\times\DMC_{\ast,\mathcal{Y}_2}^{(i)}$ through the canonical bijection. Using Lemma \ref{lemQuotientFunction}, it is easy to see that the mapping $(\hat{W}_1,\overline{W}_2)\rightarrow \hat{W}_1\oplus\overline{W}_2$ is continuous from $\big(\DMC_{\ast,\mathcal{Y}_1}^{(i)}\times\DMC_{\ast,\mathcal{Y}_2}^{(i)}, (\mathcal{T}_{s,\ast,\mathcal{Y}_1}\otimes \mathcal{T}_{s,\ast,\mathcal{Y}_2})/R\big)$ to $(\DMC_{\ast,\mathcal{Y}_1\coprod\mathcal{Y}_2}^{(i)},\mathcal{T}_{s,\ast,\mathcal{Y}_1\coprod\mathcal{Y}_2}^{(i)})$.

It was shown in \cite{CompactlyGenerated} that the topology $(\mathcal{T}_{s,\ast,\mathcal{Y}_1}\otimes \mathcal{T}_{s,\ast,\mathcal{Y}_2})/R$ is homeomorphic to $\kappa(\mathcal{T}_{s,\ast,\mathcal{Y}_1}^{(i)}\otimes \mathcal{T}_{s,\ast,\mathcal{Y}_2}^{(i)})$ through the canonical bijection, where $\kappa(\mathcal{T}_{s,\ast,\mathcal{Y}_1}^{(i)}\otimes \mathcal{T}_{s,\ast,\mathcal{Y}_2}^{(i)})$ is the coarsest topology that is both compactly generated and finer than $\mathcal{T}_{s,\ast,\mathcal{Y}_1}^{(i)}\otimes \mathcal{T}_{s,\ast,\mathcal{Y}_2}^{(i)}$. Therefore, the mapping $(\hat{W}_1,\overline{W}_2)\rightarrow \hat{W}_1\oplus\overline{W}_2$ is continuous on $\big(\DMC_{\ast,\mathcal{Y}_1}^{(i)}\times\DMC_{\ast,\mathcal{Y}_2}^{(i)}, \kappa(\mathcal{T}_{s,\ast,\mathcal{Y}_1}^{(i)}\otimes \mathcal{T}_{s,\ast,\mathcal{Y}_2}^{(i)})\big)$. This means that if $\mathcal{T}_{s,\ast,\mathcal{Y}_1}^{(i)}\otimes \mathcal{T}_{s,\ast,\mathcal{Y}_2}^{(i)}$ is compactly generated, we will have $\mathcal{T}_{s,\ast,\mathcal{Y}_1}^{(i)}\otimes \mathcal{T}_{s,\ast,\mathcal{Y}_2}^{(i)}=\kappa(\mathcal{T}_{s,\ast,\mathcal{Y}_1}^{(i)}\otimes \mathcal{T}_{s,\ast,\mathcal{Y}_2}^{(i)})$ and so the channel sum will be continuous on $(\DMC_{\ast,\mathcal{Y}_1}^{(i)}\times\DMC_{\ast,\mathcal{Y}_2}^{(i)}, \mathcal{T}_{s,\ast,\mathcal{Y}_1}^{(i)}\otimes \mathcal{T}_{s,\ast,\mathcal{Y}_2}^{(i)})$. Note that although $\mathcal{T}_{s,\ast,\mathcal{Y}_1}^{(i)}$ and $\mathcal{T}_{s,\ast,\mathcal{Y}_2}^{(i)}$ are compactly generated, their product $\mathcal{T}_{s,\ast,\mathcal{Y}_1}^{(i)}\otimes \mathcal{T}_{s,\ast,\mathcal{Y}_2}^{(i)}$ might not be compactly generated.

\begin{myprop}
\label{propChanOperFormIn}
Let $\mathcal{Y}_1$ and $\mathcal{Y}_2$ be two finite set. Let $\hat{W}_1\in\DMC_{\ast,\mathcal{Y}_1}^{(i)}$ and $\overline{W}_2\in\DMC_{\ast,\mathcal{Y}_2}^{(i)}$. We have:
$$\conv(\hat{W}_1\oplus\overline{W}_2)= \bigcup_{0\leq \lambda\leq 1}\Big((1-\lambda)\phi_{1\#}(\conv(\hat{W}_1))+\lambda\phi_{2\#}(\conv(\hat{W}_2))\Big),$$
where $\phi_{1\#}$ and $\phi_{2\#}$ are the push-forwards by the canonical injections from $\mathcal{Y}_1$ and $\mathcal{Y}_2$ to $\mathcal{Y}_1\coprod\mathcal{Y}_2$ respectively. On the other hand,
$$\conv(\hat{W}_1\otimes\overline{W}_2)=\conv\Big(\conv(\hat{W}_1)\otimes \conv(\hat{W}_2)\Big).$$
\end{myprop}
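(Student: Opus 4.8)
The plan is to fix representatives $W_1'\in\hat{W}_1$ and $W_2'\in\overline{W}_2$ and work entirely with their rows, using the identity $\conv(\hat{W}_1)=\conv(\{(W_1')_{x}:\;x\in\mathcal{X}_1\})$ (and the analogous one for $\overline{W}_2$) that was noted right after the definition of $\conv(\hat{W})$. The first step is to identify the rows of the operation channels. For the sum $W_1'\oplus W_2'$, the row indexed by an input $(x,1)$ is supported on the $\mathcal{Y}_1$-copy of $\mathcal{Y}_1\coprod\mathcal{Y}_2$ and equals $\phi_{1\#}\big((W_1')_x\big)$, while the row indexed by $(x,2)$ equals $\phi_{2\#}\big((W_2')_x\big)$. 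Hence $\conv(\hat{W}_1\oplus\overline{W}_2)$ is the convex hull of $\phi_{1\#}(\{(W_1')_x\})\cup\phi_{2\#}(\{(W_2')_x\})$.

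Next I would use that each $\phi_{i\#}$ is an affine map, so it commutes with taking convex hulls: $\conv\big(\phi_{i\#}(A)\big)=\phi_{i\#}\big(\conv(A)\big)$. Thus the two pieces of the union have convex hulls $\phi_{1\#}(\conv(\hat{W}_1))$ and $\phi_{2\#}(\conv(\hat{W}_2))$. The sum formula then follows from the standard convex-geometry fact that for two convex subsets $C_1,C_2$ of a real vector space one has $\conv(C_1\cup C_2)=\bigcup_{0\le\lambda\le1}\big((1-\lambda)C_1+\lambda C_2\big)$; applying this with $C_i=\phi_{i\#}(\conv(\hat{W}_i))$ gives exactly the displayed expression, where I would simply note that the boundary cases $\lambda\in\{0,1\}$ recover the two individual pieces.

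For the product, the row of $W_1'\otimes W_2'$ indexed by $(x_1,x_2)$ is the product measure $(W_1')_{x_1}\times(W_2')_{x_2}$, so $\conv(\hat{W}_1\otimes\overline{W}_2)=\conv\big(\{(W_1')_{x_1}\times(W_2')_{x_2}\}\big)$. The inclusion ``$\subseteq$'' into $\conv\big(\conv(\hat{W}_1)\otimes\conv(\hat{W}_2)\big)$ is immediate, since each generating product already lies in $\conv(\hat{W}_1)\otimes\conv(\hat{W}_2)$. For the reverse inclusion I would invoke bilinearity of the product of measures: writing $P_1=\sum_i\alpha_i(W_1')_{x_1^{(i)}}\in\conv(\hat{W}_1)$ and $P_2=\sum_j\beta_j(W_2')_{x_2^{(j)}}\in\conv(\hat{W}_2)$, one has $P_1\times P_2=\sum_{i,j}\alpha_i\beta_j\big((W_1')_{x_1^{(i)}}\times(W_2')_{x_2^{(j)}}\big)$ with $\sum_{i,j}\alpha_i\beta_j=1$ and all coefficients nonnegative, so $P_1\times P_2$ is a convex combination of the generating products. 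Hence $\conv(\hat{W}_1)\otimes\conv(\hat{W}_2)\subset\conv(\hat{W}_1\otimes\overline{W}_2)$, and taking convex hulls (the right-hand side being convex) yields ``$\supseteq$'', giving equality.

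I do not expect a genuine obstacle here: both parts reduce to elementary linear-algebraic facts once the rows of $W_1'\oplus W_2'$ and $W_1'\otimes W_2'$ are correctly identified. The only mildly delicate points are the two pieces of convex-geometry bookkeeping, namely the ``join'' identity $\conv(C_1\cup C_2)=\bigcup_\lambda\big((1-\lambda)C_1+\lambda C_2\big)$ together with the boundary values of $\lambda$, and the bilinear expansion of $P_1\times P_2$ into a convex combination; each of these is routine but should be stated explicitly so that the passage from rows to convex hulls is airtight.
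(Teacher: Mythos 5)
Your proposal is correct and follows essentially the same route as the paper's own proof: identify the rows of $W_1'\oplus W_2'$ and $W_1'\otimes W_2'$, use that the push-forwards $\phi_{i\#}$ commute with convex hulls, apply the join identity $\conv(C_1\cup C_2)=\bigcup_{0\leq\lambda\leq 1}\big((1-\lambda)C_1+\lambda C_2\big)$ for the sum, and use the bilinear expansion of product measures for the product. The only difference is cosmetic: the paper writes these steps as one chain of equalities, while you justify the two convexity facts explicitly, which if anything makes the argument more self-contained.
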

\begin{proof}
See Appendix \ref{appChanOperFormIn}.
\end{proof}

\begin{myprop}
\label{propContOperDMCXiSimilar}
Assume that all spaces of input-equivalent channels are endowed with the similarity topology. We have:
\begin{itemize}
\item The mapping $(\hat{W}_1,\overline{W}_2)\rightarrow \hat{W}_1\oplus \overline{W}_2$ from $\DMC_{\ast,\mathcal{Y}_1}^{(i)}\times \DMC_{\ast,\mathcal{Y}_2}^{(i)}$ to $\DMC_{\ast,\mathcal{Y}_1\coprod\mathcal{Y}_2}^{(i)}$ is continuous.
\item The mapping $(\hat{W}_1,\overline{W}_2)\rightarrow \hat{W}_1\otimes \overline{W}_2$ from $\DMC_{\ast,\mathcal{Y}_1}^{(i)}\times \DMC_{\ast,\mathcal{Y}_2}^{(i)}$ to $\DMC_{\ast,\mathcal{Y}_1\times\mathcal{Y}_2}^{(i)}$ is continuous.
\end{itemize}
\end{myprop}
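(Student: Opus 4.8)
The plan is to exploit the fact that, under the similarity topology, every space in sight is a metric space, so the product $\DMC_{\ast,\mathcal{Y}_1}^{(i)}\times\DMC_{\ast,\mathcal{Y}_2}^{(i)}$ carries the metric $\max\{d_{\ast,\mathcal{Y}_1}^{(i)},d_{\ast,\mathcal{Y}_2}^{(i)}\}$, which induces the product topology. In particular, in contrast with the strong-topology case of Proposition \ref{propContOperDMCXiStr}, no local-compactness hypothesis and no appeal to Corollary \ref{corQuotientProd} are needed: it suffices to prove that each operation is Lipschitz with respect to these metrics. Throughout I would use the identity $d_{\ast,\mathcal{Y}}^{(i)}(\hat{W},\hat{V})=d_H(\conv(\hat{W}),\conv(\hat{V}))$, which holds by the same coupling argument that showed $d_{\mathcal{X},\mathcal{Y}}^{(i)}$ to be a metric, together with the explicit descriptions of $\conv(\hat{W}_1\oplus\overline{W}_2)$ and $\conv(\hat{W}_1\otimes\overline{W}_2)$ furnished by Proposition \ref{propChanOperFormIn}.

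For the channel sum, write $K_i=\conv(\hat{W}_i)$, $K_i'=\conv(\hat{W}_i')$ and set $\epsilon_i=d_H(K_i,K_i')$. The key observation is that each push-forward $\phi_{i\#}$ is a total-variation isometry onto its image, since $\phi_i$ merely relabels $\mathcal{Y}_i$ inside the disjoint union. Consequently, for a fixed $\lambda$ and matched points, a direct computation (using that $\mathcal{Y}_1$ and $\mathcal{Y}_2$ occupy disjoint blocks) gives
$$\bigl\|[(1-\lambda)\phi_{1\#}P_1+\lambda\phi_{2\#}P_2]-[(1-\lambda)\phi_{1\#}P_1'+\lambda\phi_{2\#}P_2']\bigr\|_{TV}=(1-\lambda)\|P_1-P_1'\|_{TV}+\lambda\|P_2-P_2'\|_{TV}.$$
Given an arbitrary point of $\conv(\hat{W}_1\oplus\overline{W}_2)$, which by Proposition \ref{propChanOperFormIn} has the form $(1-\lambda)\phi_{1\#}P_1+\lambda\phi_{2\#}P_2$ with $P_i\in K_i$, I would pick $P_i'\in K_i'$ within $\epsilon_i$ of $P_i$ and keep the same $\lambda$; the displayed identity then bounds the distance by $(1-\lambda)\epsilon_1+\lambda\epsilon_2\leq\max\{\epsilon_1,\epsilon_2\}$. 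By symmetry, $d_H(\conv(\hat{W}_1\oplus\overline{W}_2),\conv(\hat{W}_1'\oplus\overline{W}_2'))\leq\max\{\epsilon_1,\epsilon_2\}$, so the sum is $1$-Lipschitz.

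For the channel product I would first record the elementary estimate $\|P_1\times P_2-P_1'\times P_2'\|_{TV}\leq\|P_1-P_1'\|_{TV}+\|P_2-P_2'\|_{TV}$, proved by inserting the cross term $P_1'\times P_2$ and applying the triangle inequality to the $L^1$ norms. Matching points as before shows $d_H(K_1\otimes K_2,K_1'\otimes K_2')\leq\epsilon_1+\epsilon_2$. Since $\conv(\hat{W}_1\otimes\overline{W}_2)=\conv(K_1\otimes K_2)$ by Proposition \ref{propChanOperFormIn}, it remains to pass from the generating sets $K_1\otimes K_2$ (which need not be convex) to their convex hulls. For this I would invoke the non-expansiveness of the convex-hull operation in the Hausdorff metric: if $A\subset B+\epsilon\overline{\mathbb{B}}$ then $\conv(A)\subset\conv(B)+\epsilon\overline{\mathbb{B}}$, because the closed ball $\overline{\mathbb{B}}$ is convex and $\conv(X+Y)=\conv X+\conv Y$. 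Hence $d_H(\conv(K_1\otimes K_2),\conv(K_1'\otimes K_2'))\leq\epsilon_1+\epsilon_2$, and the product is Lipschitz.

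The two Lipschitz bounds give continuity with respect to the product of the similarity topologies, completing the proof. The only ingredient that is more than a routine total-variation estimate is the non-expansiveness of $\conv$ under $d_H$ needed for the product operation; the isometry property of the relabelling maps handles the sum. The conceptual gain over Proposition \ref{propContOperDMCXiStr} is precisely that metrizability lets us work on the full space $\DMC_{\ast,\mathcal{Y}_1}^{(i)}\times\DMC_{\ast,\mathcal{Y}_2}^{(i)}$ with no local-compactness obstruction whatsoever.
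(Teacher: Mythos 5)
Your proposal is correct, and in substance it shares the paper's two key ingredients: Proposition \ref{propChanOperFormIn} and the same total-variation estimates (the exact block-splitting for the sum, and the cross-term bound $\|P_1\times P_2-P_1'\times P_2'\|_{TV}\leq\|P_1-P_1'\|_{TV}+\|P_2-P_2'\|_{TV}$ for the product). The packaging, however, is genuinely different. The paper never passes through the Hausdorff metric in this proof: it works directly with the coupling definition of the similarity metric, fixing couplings $R_1,R_2$ of the convex hulls, assembling from them a coupling $R$ of $\conv(\hat{W}_1\oplus\overline{W}_2)$ and $\conv(\hat{W}_1'\oplus\overline{W}_2')$ (and, for the product, taking $R$ to consist of all matched finite convex combinations $\big(\sum_i\lambda_iP_{1,i}\times P_{2,i},\sum_i\lambda_iP_{1,i}'\times P_{2,i}'\big)$, which is how the convex-hull closure is absorbed), and reading off $d^{(i)}\leq d^{(i)}_1+d^{(i)}_2$ for both operations. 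You instead route through the identity $d_{\ast,\mathcal{Y}}^{(i)}(\hat{W},\hat{V})=d_H(\conv(\hat{W}),\conv(\hat{V}))$ --- legitimately available, since the proof that $d_{\mathcal{X},\mathcal{Y}}^{(i)}$ is a metric establishes exactly this and any two classes live in a common $\DMC_{[m],\mathcal{Y}}^{(i)}$ --- then do closest-point matching (attainment is justified because the hulls are compact polytopes), and handle the convex hull in the product case by the general non-expansiveness of $\conv$ under $d_H$. Your route buys modularity (the $\conv$/$d_H$ lemma is a reusable geometric fact rather than an ad hoc coupling), a slightly sharper constant for the sum ($\max\{\epsilon_1,\epsilon_2\}$ versus the paper's $\epsilon_1+\epsilon_2$), and a cleaner statement of why no local compactness or appeal to Corollary \ref{corQuotientProd} is needed; the paper's route stays entirely inside the coupling formalism, so it needs no attainment argument and no auxiliary facts beyond the definition of $d^{(i)}_{\ast,\mathcal{Y}}$. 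Both arguments prove the same Lipschitz-type bounds and hence continuity on the full product space.
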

\begin{proof}
See Appendix \ref{appContOperDMCXiSimilar}.
\end{proof}

\section{The natural Borel $\sigma$-algebra on $\DMC_{\ast,\mathcal{Y}}^{(i)}$}

Let $\mathcal{T}$ be a Hausdorff natural topology on $\DMC_{\ast,\mathcal{Y}}^{(i)}$. Since $\mathcal{T}_{s,\ast,\mathcal{Y}}^{(i)}$ is the finest natural topology, we have $\mathcal{T}\subset \mathcal{T}_{s,\ast,\mathcal{Y}}^{(i)}$. Therefore, $\mathcal{B}(\mathcal{T})\subset \mathcal{B}(\mathcal{T}_{s,\ast,\mathcal{Y}}^{(i)})$, where $\mathcal{B}(\mathcal{T})$ and $\mathcal{B}(\mathcal{T}_{s,\ast,\mathcal{Y}}^{(i)})$ are the Borel $\sigma$-algebras of $\mathcal{T}$ and $\mathcal{T}_{s,\ast,\mathcal{Y}}^{(i)}$ respectively.

On the other hand, for every $U\in \mathcal{T}_{s,\ast,\mathcal{Y}}^{(i)}$ and every $n\geq 1$, we have $U\cap\DMC_{[n],\mathcal{Y}}^{(i)}\in\mathcal{T}_{[n],\mathcal{Y}}^{(i)}$. But $\mathcal{T}$ is a natural topology, so there must exist $U_n\in \mathcal{T}$ such that $U_n\cap\DMC_{[n],\mathcal{Y}}^{(i)}=U\cap\DMC_{[n],\mathcal{Y}}^{(i)}$. Since $U_n\in\mathcal{T}$, we have $U_n\in\mathcal{B}(\mathcal{T})$. Moreover, $\DMC_{[n],\mathcal{Y}}^{(i)}$ is $\mathcal{T}$-closed (because it is compact and $\mathcal{T}$ is Hausdorff). Therefore, $\DMC_{[n],\mathcal{Y}}^{(i)}\in\mathcal{B}(\mathcal{T})$. This implies that $U\cap\DMC_{[n],\mathcal{Y}}^{(i)}=U_n\cap\DMC_{[n],\mathcal{Y}}^{(i)}\in\mathcal{B}(T)$, hence
$$U=\bigcup_{n\geq 1}(U\cap {\DMC}_{[n],\mathcal{Y}}^{(i)})\in\mathcal{B}(T).$$
Since this is true for every $U\in \mathcal{T}_{s,\ast,\mathcal{Y}}^{(i)}$, we have $\mathcal{T}_{s,\ast,\mathcal{Y}}^{(i)}\subset\mathcal{B}(\mathcal{T})$ which implies that $\mathcal{B}(\mathcal{T}_{s,\ast,\mathcal{Y}}^{(i)})\subset\mathcal{B}(\mathcal{T})$. We conclude that all Hausdorff natural topologies on $\DMC_{\ast,\mathcal{Y}}^{(i)}$ have the same $\sigma$-algebra.  This $\sigma$-algebra deserves to be called the \emph{natural Borel $\sigma$-algebra} on $\DMC_{\ast,\mathcal{Y}}^{(i)}$.

Note that for every $n\geq 1$, the inclusion mapping $i_n: \DMC_{[n],\mathcal{Y}}^{(i)}\rightarrow\DMC_{\ast,\mathcal{Y}}^{(i)}$ is continuous from $(\DMC_{[n],\mathcal{Y}}^{(i)},\mathcal{T}_{[n],\mathcal{Y}}^{(i)})$ to $(\DMC_{\ast,\mathcal{Y}}^{(i)},\mathcal{T}_{s,\ast,\mathcal{Y}}^{(i)})$, hence it is measurable. Therefore, for every $B\in \mathcal{B}(\mathcal{T}_{s,\ast,\mathcal{Y}}^{(i)})$, we have $i_n^{-1}(B)=B\cap\DMC_{[n],\mathcal{Y}}^{(i)}\in\mathcal{B}(\mathcal{T}_{[n],\mathcal{Y}}^{(i)})$. In the following, we show a converse for this statement.

Fix $n\geq 1$ and let $U\in\mathcal{T}_{[n],\mathcal{Y}}^{(i)}$. There exists $U'\in\mathcal{T}_{s,\ast,\mathcal{Y}}^{(i)}$ such that $U=U'\cap\DMC_{[n],\mathcal{Y}}^{(i)}$. Since $U'$ and $\DMC_{[n],\mathcal{Y}}^{(i)}$ are respectively open and closed in the topology $\mathcal{T}_{s,\ast,\mathcal{Y}}^{(i)}$, they are both in its Borel $\sigma$-algebra. Therefore, $U=U'\cap\DMC_{[n],\mathcal{Y}}^{(i)}\in\mathcal{B}(\mathcal{T}_{s,\ast,\mathcal{Y}}^{(i)})$ for every $U\in\mathcal{T}_{[n],\mathcal{Y}}^{(i)}$. This means that $\mathcal{T}_{[n],\mathcal{Y}}^{(i)}\subset \mathcal{B}(\mathcal{T}_{s,\ast,\mathcal{Y}}^{(i)})$ and $\mathcal{B}(\mathcal{T}_{[n],\mathcal{Y}}^{(i)})\subset \mathcal{B}(\mathcal{T}_{s,\ast,\mathcal{Y}}^{(i)})$ for every $n\geq 1$.

Assume now that $A\subset\DMC_{\ast,\mathcal{Y}}^{(i)}$ satisfies $A\cap\DMC_{[n],\mathcal{Y}}^{(i)} \in\mathcal{B}(\mathcal{T}_{[n],\mathcal{Y}}^{(i)})$ for every $n\geq 1$. This implies that $A\cap\DMC_{[n],\mathcal{Y}}^{(i)} \in\mathcal{B}(\mathcal{T}_{s,\ast,\mathcal{Y}}^{(i)})$ for every $n\geq 1$, hence
$$A=\bigcup_{n\geq 1}(A\cap{\DMC}_{[n],\mathcal{Y}}^{(i)}) \in\mathcal{B}(\mathcal{T}_{s,\ast,\mathcal{Y}}^{(i)}).$$
We conclude that a subset $A$ of $\DMC_{\ast,\mathcal{Y}}^{(i)}$ is in the natural Borel $\sigma$-algebra if and only if $A\cap\DMC_{[n],\mathcal{Y}}^{(i)} \in\mathcal{B}(\mathcal{T}_{[n],\mathcal{Y}}^{(i)})$ for every $n\geq 1$.

\section{Conclusion}

Since $\mathcal{T}_{\ast,\mathcal{Y}}^{(i)}$ is a natural topology, it is not completely metrizable because of Corollary \ref{corNaturalNotCompleteIn}. Therefore, the metric space $(\DMC_{\ast,\mathcal{Y}}^{(i)},d_{\ast,\mathcal{Y}}^{(i)})$ is not complete. An interesting question to ask is: what does the completion of $(\DMC_{\ast,\mathcal{Y}}^{(i)},d_{\ast,\mathcal{Y}}^{(i)})$ represent? Does it represent the space of all input-equivalent channels with output alphabet $\mathcal{Y}$ and arbitrary input alphabet (with arbitrary cardinality)?

Many other interesting questions remain open: Are all natural topologies Hausdorff? Can we find more topological properties that are common for all natural topologies? Is there a coarsest natural topology? Is there a natural topology that is coarser than the similarity one?

The continuity of the channel parameters $C$, $P_{e,n,M}$ and $P_{e,\mathcal{D}}$ on $\mathcal{T}_{\ast,\mathcal{Y}}^{(i)}$ is an open problem. Also, the continuity of the channel sum and the channel product on the whole product space $(\DMC_{\ast,\mathcal{Y}_1}^{(i)}\times \DMC_{\ast,\mathcal{Y}_2}^{(i)},\mathcal{T}_{s,\ast,\mathcal{Y}_1}^{(i)}\otimes \mathcal{T}_{s,\ast,\mathcal{Y}_2}^{(i)})$ remains an open problem. As we explained in Section \ref{subsecChanOperIn}, it is sufficient to prove that the product topology $\mathcal{T}_{s,\ast,\mathcal{Y}_1}^{(i)}\otimes \mathcal{T}_{s,\ast,\mathcal{Y}_2}^{(i)}$ is compactly generated.

In \cite{RaginskyShannon}, Raginsky introduced the Shannon deficiency. We can define the \emph{input-deficiency} similarly. Like the Shannon deficiency, the input deficiency compares a particular channel with the input-equivalence class of another channel. The input deficiency is not a metric distance between input-equivalence classes of channels.

\section*{Acknowledgment}

I would like to thank Emre Telatar for helpful discussions. I am also grateful to Maxim Raginsky for informing me about the work of Blackwell on statistical experiments.

\appendices

\section{Proof of Proposition \ref{propInteriorEmptyDMCXni}}
\label{appInteriorEmptyDMCXni}

If $|\mathcal{Y}|=1$, then $\Delta_{\mathcal{Y}}$ contains only one point and so $|\CE(W)|=1$ for every $W\in \DMC_{[n],\mathcal{Y}}$ and every $n\geq 1$. Therefore, $\DMC_{[n],\mathcal{Y}}^{(i)}=\DMC_{[1],\mathcal{Y}}^{(i)}$ for every $n\geq 1$.

If $|\mathcal{Y}|=2$, then $\Delta_{\mathcal{Y}}$ is a one dimensional segment. Therefore, there are at most two convex-extreme points for any finite subset of $\Delta_{\mathcal{Y}}$. This means that $|\CE(W)|\leq 2$ for every $W\in \DMC_{[n],\mathcal{Y}}$ and every $n\geq 2$. Therefore, $\DMC_{[n],\mathcal{Y}}^{(i)}=\DMC_{[2],\mathcal{Y}}^{(i)}$ for every $n\geq 2$.

Now assume that $|\mathcal{Y}|\geq 3$. Let $\hat{U}$ be an arbitrary non-empty open subset of $(\DMC_{[m],\mathcal{Y}}^{(i)},\mathcal{T}_{[m],\mathcal{Y}}^{(i)})$ and let $\Proj$ be the projection onto the $R_{[m],\mathcal{Y}}^{(i)}$-equivalence classes. $\Proj^{-1}(\hat{U})$ is open in the metric space $(\DMC_{[m],\mathcal{Y}},d_{[m],\mathcal{Y}})$. Let $\hat{W}\in\hat{U}$ and define $r=\irank(\hat{W})$. Let $P_1,\ldots,P_r\in\Delta_{\mathcal{Y}}$ be such that $\CE(\hat{W})=\{P_1,\ldots,P_r\}$. Define the channel $W\in\DMC_{[m],\mathcal{Y}}$ as follows:
$$W(y|i)=\begin{cases}P_i(y)\quad&\text{if}\;1\leq i<r,\\ P_r(y)\quad&\text{if}\;r\leq i\leq m.\end{cases} $$
Clearly $\CE(W)=\CE(\hat{W})$ and so $W\in\hat{W}$ which implies that $W\in \Proj^{-1}(\hat{U})$. Since $\Proj^{-1}(\hat{U})$ is open in the metric space $(\DMC_{[m],\mathcal{Y}},d_{[m],\mathcal{Y}})$, there exists $\epsilon>0$ such that $\Proj^{-1}(\hat{U})$ contains the open ball of center $W$ and radius $\epsilon$.

We will show that there exists $W'\in \DMC_{[m],\mathcal{Y}}$ such that $\irank(W')=m>n$ and $d_{[m],\mathcal{Y}}(W,W')<\epsilon$. If $r=\irank(W)=m$, take $W'=W$.

Assume that $r=\irank(W)<m$. Since $|\mathcal{Y}|\geq 3$, the dimension of $\Delta_{\mathcal{Y}}$ is at least 2. Therefore, we can find $P_{r+1}\in\Delta_{\mathcal{Y}}$ such that $\|P_r-P_{r+1}\|_{TV}<\epsilon$ and $\CE(\{P_1,\ldots,P_{r+1}\})=\{P_1,\ldots,P_{r+1}\}$. By repeating this procedure $m-r$ times, we obtain $P_{r+1},\ldots,P_m\in\Delta_{\mathcal{Y}}$ such that $\|P_r-P_{i}\|_{TV}<\epsilon$ for every $r+1\leq i\leq m$, and $\CE(\{P_1,\ldots,P_{m}\})=\{P_1,\ldots,P_{m}\}$. Define the channel $W'\in\Delta_{[m],\mathcal{Y}}$ as:
$$W'(y|i)=P_i(y).$$
We have $\CE(W')=\CE(\{P_1,\ldots,P_m\})=\{P_1,\ldots,P_m\}$. Therefore, $\irank(W')=m$. Moreover,
$$d_{[m],\mathcal{Y}}(W,W')=\max_{1\leq i\leq m}\|W_i-W_i'\|_{TV}=\max_{r+1\leq i\leq m} \|P_r-P_i\|_{TV}<\epsilon.$$
This means that $W'\in\Proj^{-1}(\hat{U})$ and $W'$ is not input-equivalent to any channel in $\DMC_{[n],\mathcal{Y}}$ (see Proposition \ref{propCharacInputEquiv}). Therefore, $\Proj(W')\in \hat{U}$ and $\Proj(W')\notin\DMC_{[n],\mathcal{Y}}^{(i)}$ because $W'$ is not input-equivalent to any channel in $\DMC_{[n],\mathcal{Y}}$. This shows that every non-empty open subset of $\DMC_{[m],\mathcal{Y}}^{(i)}$ is not contained in $\DMC_{[n],\mathcal{Y}}^{(i)}$. We conclude that the interior of $\DMC_{[n],\mathcal{Y}}^{(i)}$ in $\DMC_{[m],\mathcal{Y}}^{(i)}$ is empty.

\section{Proof of Lemma \ref{lemDMCXiNorm}}
\label{appDMCXiNorm}

Define $\DMC_{[0],\mathcal{Y}}^{(i)}=\o$, which is strongly closed in $\DMC_{\ast,\mathcal{Y}}^{(i)}$.

Let $A$ and $B$ be two disjoint strongly closed subsets of $\DMC_{\ast,\mathcal{Y}}^{(i)}$. For every $n\geq 0$, let $A_n=A\cap \DMC_{[n],\mathcal{Y}}^{(i)}$ and $B_n=B\cap \DMC_{[n],\mathcal{Y}}^{(i)}$. Since $A$ and $B$ are strongly closed in $\DMC_{\ast,\mathcal{Y}}^{(i)}$, $A_n$ and $B_n$ are closed in $\DMC_{[n],\mathcal{Y}}^{(i)}$. Moreover, $A_n\cap B_n\subset A\cap B=\o$.

Construct the sequences $(U_n)_{n\geq 0},(U_n')_{n\geq 0},(K_n)_{n\geq 0}$ and $(K_n')_{n\geq 0}$ recursively as follows:

$U_0=U_0'=K_0=K_0'=\o\subset\DMC_{[0],\mathcal{Y}}^{(i)}$. Since $A_0=B_0=\o$, we have $A_0\subset U_0\subset K_0$ and $B_0\subset U_0'\subset K_0'$. Moreover, $U_0$ and $U_0'$ are open in $\DMC_{[0],\mathcal{Y}}^{(i)}$, $K_0$ and $K_0'$ are closed in $\DMC_{[0],\mathcal{Y}}^{(i)}$, and $K_0\cap K_0'=\o$.

Now let $n\geq 1$ and assume that we constructed $(U_j)_{0\leq j< n},(U_j')_{0\leq j< n},(K_j)_{0\leq j< n}$ and $(K_j')_{0\leq j< n}$ such that for every $0\leq j< n$, we have $A_j\subset U_j\subset K_j\subset\DMC_{[j],\mathcal{Y}}^{(i)}$, $B_j\subset U_j'\subset K_j'\subset \DMC_{[j],\mathcal{Y}}^{(i)}$, $U_j$ and $U_j'$ are open in $\DMC_{[j],\mathcal{Y}}^{(i)}$, $K_j$ and $K_j'$ are closed in $\DMC_{[j],\mathcal{Y}}^{(i)}$, and $K_j\cap K_j'=\o$. Moreover, assume that $K_j\subset U_{j+1}$ and $K_j'\subset U_{j+1}'$ for every $0\leq j<n-1$.

Let $C_n=A_n\cup K_{n-1}$ and $D_n=B_n\cup K_{n-1}'$. Since $K_{n-1}$ and $K_{n-1}'$ are closed in $\DMC_{[n-1],\mathcal{Y}}^{(i)}$ and since $\DMC_{[n-1],\mathcal{Y}}^{(i)}$ is closed in $\DMC_{[n],\mathcal{Y}}^{(i)}$, we can see that $K_{n-1}$ and $K_{n-1}'$ are closed in $\DMC_{[n],\mathcal{Y}}^{(i)}$. Therefore, $C_n$ and $D_n$ are closed in $\DMC_{[n],\mathcal{Y}}^{(i)}$. Moreover, we have
\begin{align*}
C_n\cap D_n&=(A_n\cup K_{n-1})\cap(B_n\cup K_{n-1}')\\
&=(A_n\cap B_n)\cup(A_n\cap K_{n-1}')\cup (K_{n-1}\cap B_n)\cup(K_{n-1}\cap K_{n-1}')\\
&\stackrel{(a)}{=}\textstyle \left(A_n\cap K_{n-1}'\cap \DMC_{[n-1],\mathcal{Y}}^{(i)}\right)\cup \left(K_{n-1}\cap \DMC_{[n-1],\mathcal{Y}}^{(i)}\cap B_n\right)\\
&=(A_{n-1}\cap K_{n-1}')\cup (K_{n-1}\cap B_{n-1})\subset (K_{n-1}\cap K_{n-1}')\cup (K_{n-1}\cap K_{n-1}')=\o,
\end{align*}
where (a) follows from the fact that $A_n\cap B_n=K_{n-1}\cap K_{n-1}'=\o$ and the fact that $K_{n-1}\subset \DMC_{[n-1],\mathcal{Y}}^{(i)}$ and $K_{n-1}'\subset \DMC_{[n-1],\mathcal{Y}}^{(i)}$.

Since $\DMC_{[n],\mathcal{Y}}^{(i)}$ is normal (because it is metrizable), and since $C_n$ and $D_n$ are closed disjoint subsets of $\DMC_{[n],\mathcal{Y}}^{(i)}$, there exist two sets $U_n,U_n'\subset \DMC_{[n],\mathcal{Y}}^{(i)}$ that are open in $\DMC_{[n],\mathcal{Y}}^{(i)}$ and two sets $K_n,K_n'\subset \DMC_{[n],\mathcal{Y}}^{(i)}$ that are closed in $\DMC_{[n],\mathcal{Y}}^{(i)}$ such that $C_n\subset U_n\subset K_n$, $D_n\subset U_n'\subset K_n'$ and $K_n\cap K_n'=\o$. Clearly, $A_n\subset U_n\subset K_n\subset \DMC_{[n],\mathcal{Y}}^{(i)}$, $B_n\subset U_n'\subset K_n'\subset \DMC_{[n],\mathcal{Y}}^{(i)}$, $K_{n-1}\subset U_n$ and $K_{n-1}'\subset U_n'$. This concludes the recursive construction.

Now define $\displaystyle U=\bigcup_{n\geq 0}U_n=\bigcup_{n\geq 1}U_n$ and $\displaystyle U'=\bigcup_{n\geq 0}U_n'=\bigcup_{n\geq 1}U_n'$. Since $A_n\subset U_n$ for every $n\geq 1$, we have 
\begin{align*}
\textstyle A=A\cap\DMC_{\ast,\mathcal{Y}}^{(i)}=A\cap\left({\displaystyle\bigcup_{n\geq 1}}\DMC_{[n],\mathcal{Y}}^{(i)}\right)={\displaystyle\bigcup_{n\geq 1}}\left(A\cap \DMC_{[n],\mathcal{Y}}^{(i)}\right)={\displaystyle\bigcup_{n\geq 1}} A_n\subset {\displaystyle\bigcup_{n\geq 1}} U_n =U.
\end{align*}
Moreover, for every $n\geq 1$ we have
\begin{align*}
\textstyle U\cap \DMC_{[n],\mathcal{Y}}^{(i)}=\left({\displaystyle\bigcup_{j\geq 1} U_j}\right)\cap \DMC_{[n],\mathcal{Y}}^{(i)}\stackrel{(a)}{=}\left({\displaystyle\bigcup_{j\geq n} U_j}\right)\cap \DMC_{[n],\mathcal{Y}}^{(i)}={\displaystyle\bigcup_{j\geq n} \left(U_j\cap \textstyle\DMC_{[n],\mathcal{Y}}^{(i)}\right)},
\end{align*}
where (a) follows from the fact that $U_j\subset K_j\subset U_{j+1}$ for every $j\geq 0$, which means that the sequence $(U_j)_{j\geq 1}$ is increasing.

For every $j\geq n$, we have $\DMC_{[n],\mathcal{Y}}^{(i)}\subset \DMC_{[j],\mathcal{Y}}^{(i)}$ and $U_j$ is open in $\DMC_{[j],\mathcal{Y}}^{(i)}$, hence $U_j\cap \DMC_{[n],\mathcal{Y}}^{(i)}$ is open in $\DMC_{[n],\mathcal{Y}}^{(i)}$. Therefore, $U\cap \DMC_{[n],\mathcal{Y}}^{(i)}=\displaystyle\bigcup_{j\geq n} \left(U_j\cap \textstyle\DMC_{[n],\mathcal{Y}}^{(i)}\right)$ is open in $\DMC_{[n],\mathcal{Y}}^{(i)}$. Since this is true for every $n\geq 1$, we conclude that $U$ is strongly open in $\DMC_{\ast,\mathcal{Y}}^{(i)}$.

We can show similarly that $B\subset U'$ and that $U'$ is strongly open in $\DMC_{\ast,\mathcal{Y}}^{(i)}$. Finally, we have
\begin{align*}
U\cap U'=\left(\bigcup_{n\geq 1} U_n\right)\cap \left(\bigcup_{n'\geq 1} U_{n'}'\right)=\bigcup_{n\geq 1, n'\geq 1}(U_n\cap U_{n'}')\stackrel{(a)}{=}\bigcup_{n\geq 1}(U_n\cap U_n')
&\subset\bigcup_{n\geq 1}(K_n\cap K_n')=\o,
\end{align*}
where (a) follows from the fact that for every $n\geq 1$ and every $n'\geq 1$, we have $$U_n\cap U_{n'}'\subset U_{\max\{n,n'\}}\cap U_{\max\{n,n'\}}'$$ because $(U_n)_{n\geq 1}$ and $(U_n')_{n\geq 1}$ are increasing. We conclude that $(\DMC_{\ast,\mathcal{Y}}^{(i)},\mathcal{T}_{s,\ast,\mathcal{Y}}^{(i)})$ is normal.

\section{Proof of Proposition \ref{propChanOperFormIn}}

\label{appChanOperFormIn}

Fix $W_1\in\hat{W}_1$ and $W_2\in\overline{W}_2$, and let $\mathcal{X}_1$ and $\mathcal{X}_2$ be the input alphabets of $W_1$ and $W_2$ respectively.

For every $x_1\in\mathcal{X}_1$, we have $(W_1\oplus W_2)_{x_1}=\phi_{1\#}(W_1)_{x_1}$. Similarly, for every $x_2\in\mathcal{X}_2$, we have $(W_1\oplus W_2)_{x_2}=\phi_{2\#}(W_2)_{x_2}$. Therefore,
\begin{align*}
\conv(\hat{W}_1\oplus\overline{W}_2)&=\conv\left(\left\{(W_1\oplus W_2)_x:\;x\in\mathcal{X}_1\coprod\mathcal{X}_2\right\}\right)\\
&=\conv(\{(W_1\oplus W_2)_{x_1}:\;x_1\in\mathcal{X}_1\}\cup \{(W_1\oplus W_2)_{x_2}:\;x_2\in\mathcal{X}_2\})\\
&=\conv(\{\phi_{1\#}(W_1)_{x_1}:\;x_1\in\mathcal{X}_1\}\cup \{\phi_{2\#}(W_2)_{x_2}:\;x_2\in\mathcal{X}_2\})\\
&=\bigcup_{0\leq \lambda\leq 1} \Big((1-\lambda)\conv(\{\phi_{1\#}(W_1)_{x_1}:\;x_1\in\mathcal{X}_1\})+\lambda \conv(\{\phi_{2\#}(W_2)_{x_2}:\;x_2\in\mathcal{X}_2\})\Big)\\
&=\bigcup_{0\leq \lambda\leq 1}\Big( (1-\lambda) \phi_{1\#}\big(\conv(\{(W_1)_{x_1}:\;x_1\in\mathcal{X}_1\})\big)+\lambda \phi_{2\#}\big(\conv(\{(W_2)_{x_2}:\;x_2\in\mathcal{X}_2\})\big)\Big)\\
&=\bigcup_{0\leq \lambda\leq 1}\Big( (1-\lambda) \phi_{1\#}(\conv(W_1))+\lambda \phi_{2\#}(\conv(W_2))\Big)\\
&=\bigcup_{0\leq \lambda\leq 1}\Big( (1-\lambda) \phi_{1\#}(\conv(\hat{W}_1))+\lambda \phi_{2\#}(\conv(\overline{W}_2))\Big).
\end{align*}

For every $(x_1,x_2)\in\mathcal{X}_1\times\mathcal{X}_2$, we have $(W_1\otimes W_2)_{(x_1,x_2)}=(W_1)_{x_1}\times(W_2)_{x_2}$. Therefore,
\begin{align*}
\conv(\hat{W}_1\otimes\overline{W}_2)&=\conv(\{(W_1\otimes W_2)_{(x_1,x_2)}:\;(x_1,x_2)\in\mathcal{X}_1\times\mathcal{X}_2\})\\
&=\conv(\{(W_1)_{x_1}\times(W_2)_{x_2}:\;(x_1,x_2)\in\mathcal{X}_1\times\mathcal{X}_2\})\\
&=\conv(\{(W_1)_{x_1}:\;x_1\in\mathcal{X}_1\}\otimes \{(W_2)_{x_2}:\;x_2\in\mathcal{X}_2\})\\
&=\conv\Big(\conv\big(\{(W_1)_{x_1}:\;x_1\in\mathcal{X}_1\}\big)\otimes \conv\big(\{(W_2)_{x_2}:\;x_2\in\mathcal{X}_2\}\big)\Big)\\
&=\conv\Big(\conv(W_1)\otimes \conv(W_2)\Big)=\conv\Big(\conv(\hat{W}_1)\otimes \conv(\overline{W}_2)\Big).
\end{align*}

\section{Proof of Proposition \ref{propContOperDMCXiSimilar}}
\label{appContOperDMCXiSimilar}

Fix $\hat{W}_1,\hat{W}_1'\in\DMC_{\ast,\mathcal{Y}_1}^{(i)}$ and $\overline{W}_2,\overline{W}_2'\in\DMC_{\ast,\mathcal{Y}_2}^{(i)}$. Let $R_1\in\mathcal{R}(\conv(\hat{W}_1),\conv(\hat{W}_1'))$ and $R_2\in\mathcal{R}(\conv(\overline{W}_2),\conv(\overline{W}_2'))$. Fix $0\leq \lambda\leq 1$, $(P_1,P_1')\in R_1$ and $(P_2,P_2')\in R_2$. Let $P=(1-\lambda)\phi_{1\#} P_1 + \lambda\phi_{2\#} P_2$ and $P'=(1-\lambda)\phi_{1\#} P_1' + \lambda\phi_{2\#} P_2'$, where $\phi_{1\#}$ and $\phi_{2\#}$ are the push-forwards by the canonical injections from $\mathcal{Y}_1$ and $\mathcal{Y}_2$ to $\mathcal{Y}_1\coprod\mathcal{Y}_2$ respectively. We have:
\begin{equation}
\label{eqTVIneqConvOplus}
\begin{aligned}
\|P-P'\|_{TV}&=\left\|\big((1-\lambda)\phi_{1\#} P_1 + \lambda\phi_{2\#} P_2\big) -\big((1-\lambda)\phi_{1\#} P_1' + \lambda\phi_{2\#} P_2'\big) \right\|_{TV}\\
&\leq (1-\lambda)\|\phi_{1\#}P_1 -\phi_{1\#}P_1'\|_{TV}+\lambda\|\phi_{2\#}P_2-\phi_{2\#}P_2'\|_{TV}\\
&=(1-\lambda)\|P_1 -P_1'\|_{TV}+\lambda\|P_2-P_2'\|_{TV}\\
&\leq\|P_1 -P_1'\|_{TV}+\|P_2 -P_2'\|_{TV}.
\end{aligned}
\end{equation}
Proposition \ref{propChanOperFormIn} shows that $$\conv(\hat{W}_1\oplus\overline{W}_2)= \bigcup_{0\leq \lambda\leq 1}\Big((1-\lambda)\phi_{1\#}(\conv(\hat{W}_1))+\lambda\phi_{2\#}(\conv(\hat{W}_2))\Big),$$
and
$$\conv(\hat{W}_1'\oplus\overline{W}_2')= \bigcup_{0\leq \lambda\leq 1}\Big((1-\lambda)\phi_{1\#}(\conv(\hat{W}_1'))+\lambda\phi_{2\#}(\conv(\hat{W}_2'))\Big).$$
Define $R\subset \conv(\hat{W}_1\oplus\overline{W}_2)\times \conv(\hat{W}_1'\oplus\overline{W}_2')$ as follows:
\begin{align*}
R=\Big\{\big((1-\lambda)\phi_{1\#}P_1+\lambda \phi_{2\#}P_2,(1-\lambda)\phi_{1\#}P_1'+\lambda \phi_{2\#}P_2'\big):\;0\leq \lambda\leq 1,(P_1,P_1')\in R_1,(P_2,P_2')\in R_2\Big\}.
\end{align*}
It is easy to see that $R$ is a coupling of $\conv(\hat{W}_1\oplus\overline{W}_2)$ and $\conv(\hat{W}_1'\oplus\overline{W}_2')$. We have:
\begin{align*}
d_{\ast,\mathcal{Y}_1\coprod\mathcal{Y}_2}^{(i)}(\hat{W}_1\oplus\overline{W}_2,\hat{W}_1'\oplus\overline{W}_2')&\leq \sup_{(P,P')\in R}\|P-P'\|_{TV}\\
&\stackrel{(a)}{\leq} \sup_{(P_1,P_1')\in R_1}\|P_1-P_1'\|_{TV}+\sup_{(P_2,P_2')\in R_2}\|P_2-P_2'\|_{TV},
\end{align*}
where (a) follows from \eqref{eqTVIneqConvOplus}. Since this is true for every $R_1\in\mathcal{R}(\conv(\hat{W}_1),\conv(\hat{W}_1'))$ and every $R_2\in\mathcal{R}(\conv(\hat{W}_2),\conv(\hat{W}_2'))$, we conclude that
\begin{align*}
d_{\ast,\mathcal{Y}_1\coprod\mathcal{Y}_2}^{(i)}&(\hat{W}_1\oplus\overline{W}_2,\hat{W}_1'\oplus\overline{W}_2')\\
&\leq \inf_{R_1\in\mathcal{R}(\conv(\hat{W}_1),\conv(\hat{W}_1'))} \sup_{(P_1,P_1')\in R_1}\|P_1-P_1'\|_{TV}+\inf_{R_2\in\mathcal{R}(\conv(\hat{W}_2),\conv(\hat{W}_2'))}\sup_{(P_2,P_2')\in R_2}\|P_2-P_2'\|_{TV}\\
&=d_{\ast,\mathcal{Y}_1}^{(i)}(\hat{W}_1,\hat{W}_1')+d_{\ast,\mathcal{Y}_2}^{(i)}(\hat{W}_2,\hat{W}_2').
\end{align*}
This shows that the mapping $(\hat{W}_1,\overline{W}_2)\rightarrow \hat{W}_1\oplus \overline{W}_2$ from $\DMC_{\ast,\mathcal{Y}_1}^{(i)}\times \DMC_{\ast,\mathcal{Y}_2}^{(i)}$ to $\DMC_{\ast,\mathcal{Y}_1\coprod\mathcal{Y}_2}^{(i)}$ is continuous in the similarity topology.

Fix again $R_1\in\mathcal{R}(\conv(\hat{W}_1),\conv(\hat{W}_1'))$ and $R_2\in\mathcal{R}(\conv(\overline{W}_2),\conv(\overline{W}_2'))$. Let $\lambda_1,\ldots,\lambda_k\geq 0$ be such that $\displaystyle\sum_{i=1}^k\lambda_i=1$. Let $(P_{1,1},P_{1,1}'),\ldots,(P_{1,k},P_{1,k}')\in R_1$ and $(P_{2,1},P_{2,1}'),\ldots,(P_{2,k},P_{2,k}')\in R_2$. Define $\displaystyle P=\sum_{i=1}^k\lambda_i P_{1,i}\times P_{2,i}$ and $\displaystyle P'=\sum_{i=1}^k\lambda_i P_{1,i}'\times P_{2,i}'$. We have:
\begin{equation}
\label{eqTVIneqConvOtimes}
\begin{aligned}
\|P-P'\|_{TV}&=\left\|\left(\sum_{i=1}^k\lambda_i P_{1,i}\times P_{2,i}\right) -\left(\sum_{i=1}^k\lambda_i P_{1,i}'\times P_{2,i}'\right) \right\|_{TV}\\
&\leq \sum_{i=1}^k\lambda_i \|(P_{1,i}\times P_{2,i})-(P_{1,i}'\times P_{2,i}')\|_{TV}\\
&\stackrel{(a)}{\leq} \sum_{i=1}^k\lambda_i \Big( \|P_{1,i}-P_{1,i}'\|_{TV} + \|P_{2,i}-P_{2,i}'\|_{TV}\Big)\\
&\leq \sup_{(P_1,P_1')\in R_1} \|P_1-P_1'\|_{TV} + \sup_{(P_2,P_2')\in R_2} \|P_2-P_2'\|_{TV},
\end{aligned}
\end{equation}
where (a) follows from \cite[App. B]{RajContTop}. Proposition \ref{propChanOperFormIn} shows that $$\conv(\hat{W}_1\otimes\overline{W}_2)= \conv\Big(\conv(\hat{W}_1)\otimes\conv(\overline{W}_2)\Big),$$
and
$$\conv(\hat{W}_1'\otimes\overline{W}_2')= \conv\Big(\conv(\hat{W}_1')\otimes\conv(\overline{W}_2')\Big).$$
Define $R\subset \conv(\hat{W}_1\otimes\overline{W}_2)\times \conv(\hat{W}_1'\otimes\overline{W}_2')$ as follows:
\begin{align*}
R=\Bigg\{\left(\sum_{i=1}^k\lambda_i P_{1,i}\times P_{2,i},\sum_{i=1}^k\lambda_i P_{1,i}'\times P_{2,i}'\right):\;&k\geq 1,\;\lambda_1,\ldots,\lambda_k\geq 0,\; \sum_{i=1}^k\lambda_i=1,\\&(P_{1,1},P_{1,1}'),\ldots,(P_{1,k},P_{1,k}')\in R_1,\\
&(P_{2,1},P_{2,1}'),\ldots,(P_{2,k},P_{2,k}')\in R_2\Bigg\}.
\end{align*}
It is easy to see that $R$ is a coupling of $\conv(\hat{W}_1\otimes\overline{W}_2)$ and $\conv(\hat{W}_1'\otimes\overline{W}_2')$. We have:
\begin{align*}
d_{\ast,\mathcal{Y}_1\times\mathcal{Y}_2}^{(i)}(\hat{W}_1\otimes\overline{W}_2,\hat{W}_1'\otimes\overline{W}_2')&\leq \sup_{(P,P')\in R}\|P-P'\|_{TV}\\
&\stackrel{(a)}{\leq} \sup_{(P_1,P_1')\in R_1}\|P_1-P_1'\|_{TV}+\sup_{(P_2,P_2')\in R_2}\|P_2-P_2'\|_{TV},
\end{align*}
where (a) follows from \eqref{eqTVIneqConvOtimes}. Since this is true for every $R_1\in\mathcal{R}(\conv(\hat{W}_1),\conv(\hat{W}_1'))$ and every $R_2\in\mathcal{R}(\conv(\hat{W}_2),\conv(\hat{W}_2'))$, we conclude that
\begin{align*}
d_{\ast,\mathcal{Y}_1\times\mathcal{Y}_2}^{(i)}&(\hat{W}_1\otimes\overline{W}_2,\hat{W}_1'\otimes\overline{W}_2')\\
&\leq \inf_{R_1\in\mathcal{R}(\conv(\hat{W}_1),\conv(\hat{W}_1'))} \sup_{(P_1,P_1')\in R_1}\|P_1-P_1'\|_{TV}+\inf_{R_2\in\mathcal{R}(\conv(\hat{W}_2),\conv(\hat{W}_2'))}\sup_{(P_2,P_2')\in R_2}\|P_2-P_2'\|_{TV}\\
&=d_{\ast,\mathcal{Y}_1}^{(i)}(\hat{W}_1,\hat{W}_1')+d_{\ast,\mathcal{Y}_2}^{(i)}(\hat{W}_2,\hat{W}_2').
\end{align*}
This shows that the mapping $(\hat{W}_1,\overline{W}_2)\rightarrow \hat{W}_1\otimes \overline{W}_2$ from $\DMC_{\ast,\mathcal{Y}_1}^{(i)}\times \DMC_{\ast,\mathcal{Y}_2}^{(i)}$ to $\DMC_{\ast,\mathcal{Y}_1\coprod\mathcal{Y}_2}^{(i)}$ is continuous in the similarity topology.

\bibliographystyle{IEEEtran}
\bibliography{bibliofile}
\end{document}